\documentclass[a4paper,10pt]{amsart}

\usepackage{amsmath}
\usepackage{amsthm}
\usepackage{amssymb}
\usepackage{wasysym}
\usepackage{graphicx,color,xcolor,tikz}
\usepackage[colorlinks=true, linkcolor=blue]{hyperref}
\usepackage[notref,notcite]{showkeys}
\hypersetup{urlcolor=blue, citecolor=blue}
\usepackage{caption}
\usepackage{epstopdf}

\newtheorem{Thm}{Theorem}[section]
\newtheorem{Lem}{Lemma}[section]
\newtheorem{Def}{Definition}[section]
\newtheorem{Prop}{Proposition}[section]
\newtheorem{Rem}{Remark}[section]

\newtheorem{Cor}{Corollary}[section]

\newcommand{\p}{\partial}
\newcommand{\R}{\mathbb R}

\let\epsilon=\varepsilon
\title{A relaxation model for liquid-vapor
phase change with metastability}
\author{
Fran\c{c}ois James}
\address{Math\'ematiques -- Analyse, Probabilit\'es, Mod\'elisation -- Orl\'eans (MAPMO),
Universit\'e d'Orl\'eans \& CNRS UMR 7349, BP 6759, F-45067 Orl\'eans Cedex 2, France}
\email{francois.james@univ-orleans.fr}

\author{
H\'el\`ene Mathis
}

\address{Laboratoire Jean Leray, Universit\'e de Nantes \&CNRS UMR 6629, 
BP 92208, F-44322 Nantes Cedex 3, France}
\email{helene.mathis@univ-nantes.fr}

\keywords{thermodynamics of phase transition - metastable states - nonlinear hyperbolic system with relaxation - van der Waals equation of state}

\subjclass[2010]{35Q79, 35L40, 76T10, 37N10, 80A10}

\begin{document}

\begin{abstract}
  We propose a model that describes phase transition
  including meta\-stable states present in the van der Waals Equation
  of State.
  From a convex optimization problem on the Helmoltz free
  energy of a mixture, we deduce 
  a dynamical system that is able to depict the mass
  transfer between two phases, for which equilibrium states are either
  metastable states, stable states or {a coexistent state}.
  The dynamical system is then used as a relaxation source 
  term in an isothermal 4$\times$4 two-phase model.
  We use a Finite Volume scheme that
  treats the convective part and the source term in a fractional step
  way. Numerical results illustrate the ability of the model to capture
  phase transition and metastable states.
\end{abstract}

\maketitle

\tableofcontents

\section{Introduction}
\label{sec:introduction}

In the last decades considerable research has been devoted to the
simulation of liquid-vapor phase change, which are of major importance
in several industrial applications.
For instance liquid-vapor flows are present in water circuit of
pressurized water reactors in which the water  can be submitted to
saturation pressure and temperature \cite{Delhaye81, Delhaye98}.
The phenomena we are interested in are complex phase changes
including the possible appearance of metastable states.
An example is the metastable vapor which
is a gaseous state where the pressure is higher
than the saturation pressure. Such states are very unstable and a
very small perturbation brings out a droplet of liquid inside the
gas \cite{Landau}. This phenomenon can appear at saturated pressure (or at
saturated temperature for metastable liquid). It is the case in
pressurized water reactor during a loss of coolant accident when
sudden vaporization occurs due to the drop of pressure inside the
superheated liquid \cite{imre10}.

We focus in this paper on situations where
the heterogeneity of the fluid and the thermodynamical
conditions allow the diphasic flow to be described by a compressible
averaged model using Euler type equations. 
Other models can be considered, which account for very smale
scale by means of Korteweg type tensors including dispersive and
dissipative effects. Such models allow to preserve metastable states
but give only a microscopic description of the flow, see 
\cite{slemrod83,slemrod84,knowles91,bedjaoui05,LeFloch10,zeiler12}
and the references cited herein.
In the averaged models framework, one can distinguish between
one-fluid and two-fluid models.

The one-fluid model approach consists in describing the fluid flow
as a single substance that can be present in its vapor or
its liquid phase.
Assuming that the thermodynamical equilibrium is reached
instantaneously (quasi-static process), then
the Euler system has to be closed by an Equation of
State (EoS) able to depict either the pure phases (liquid or vapor)
and the phase transition. A typical example is the van der Waals EoS,
which is well-known to depict stable and metastable states below the
critical temperature. 
However this EoS is not valid in the so-called
spinodal zone where the pressure is a decreasing function of the
density. This forbids the use of instantaneous kinetic exchanges,
since the pressure is always given by the EoS,
and a decreasing pressure leads to a loss of hyperbolicity in Euler equations, hence
to instabilities and computational failure.

To overcome this defect, and recover that
 the phase transition happens at constant pressure, temperature
and chemical potential, the van der Waals pressure is commonly corrected
by the Maxwell equal area rule \cite{callen85}. 
This construction leads to a correct constant
pressure in the spinodal zone but removes the metastable
regions. 

Another way to provide a unique EoS able to depict pure phases and phase 
transition has been studied in \cite{jaouen01, faccanoni07, faccanoni12,HM10, mathis10}.
It consists in considering that each phase is depicted by its own convex 
energy (that is its own monotone pressure law).
According to the second principle of thermodynamics,
the mixture equilibrium energy corresponds to the inf-convolution
of the partial energies. As a result the mixture equilibrium energy 
coincides with the convex hull of the minimum of the two partial energies.
The resulting pressure law of the mixture turns out to be composed of the 
monotone branches of the liquid and vapor pressure laws and is constant in the 
phase transition zone. Hence it is clear that such a construction prevents
the appearance of metastable phases.

Still in the context of one-fluid models, it is also possible to drop out the assumption of
instantaneous equilibrium. The model involves then relaxation terms, which can be of various
forms, but are the expression of a pulling back force to the equilibrium. We have to consider
extended versions of the Euler system, which is supplemented by partial
differential equations on additional quantities such as
the volume fraction of the vapor phase or partial masses.
This approach has been used in \cite{BH05,HS06, saurel08,pelanti14,LMSN14} and in
\cite{corlifan04,fanlin13} in the isothermal case.
In the later references the question of preservation of metastable
states is adressed.

We mention briefly another way to describe diphasic flows, which consists in considering a two-fluid
approach to model liquid-vapor phase change. 
Initially developed to
depict the motion of multicomponent flows \cite{BN86}, such a modeling 
assumes that the fluid can locally be present under both
phase. Hence the model admits two pressures, two velocities and two
temperatures and is supplemented by additional
equations on the volume fraction.
Phase transition is achieved by chemical and mechanical
relaxation processes, in the limit where the kinetics is considered infinitely fast, see for instance \cite{saurel08,Zein10,pelanti14}.

One of the present drawbacks of the averaged models (one or two-fluid) with relaxation is
that there is no global agreement on the equations satisfied by the fractions and on the
transfer terms \cite{drew83}. 
Moreover the preservation of the metastable states seems to be
out of reach in this framework.

We intend in this paper to provide a model able to depict
liquid-vapor phase change and metastable states of a single component,
say some liquid in interaction with its own vapor.
We focus on a one-fluid description of the motion while the phase
transition is driven by transfer terms that will be
coupled to fluid equations through a finite relaxation speed.

The modelling of the phase transition is the core of this work. 
For the sake of simplicity we assume the system to be isothermal.
We propose transfer terms obtained through the minimization
of the Helmholtz free energy of the two-phase system.
We use for both phases the same equation of state
 which has to be non monotone, typically the reduced van der Waals equation.
This choice allows us to recover all possible equilibria: pure phases (liquid or vapor), metastable states
and coexistence states characterized by the equality of pressures and
chemical potentials. These are physically admissible states, but the set of equilibrium states also contains
the spinodal zone, which is irrelevant. Thus the key point now is to characterize the physical stability, and
hence admissibility, of these states. It turns out that this has to be done in
terms of their dynamical behaviour. 
More precisely, we design a dynamical system which is able to depict all
the {\sl stable} equilibria of the system as attraction points. In particular we show that metastable states and mixtures
have different basins of attraction, so that they can be differentiated only by their long-time behaviour with respect to
initial conditions. Hence there is no hope to recover both metastable states and coexistent states under the assumption
of instantaneous equilibrium, which amounts precisely to choose {\sl a priori} this long-time behaviour.

This dynamical system is used as a transfer term in an isothermal
two-phase model in the spirit of \cite{guillard05} and
\cite{chalons09}. The extended Euler system we obtain in this way provides a regularization 
of the isothermal Euler equation with van der Waals EoS, which takes the form of a mixture zone surrounding the 
physical interfaces, see Section \ref{sec:numer-appr} below.

The paper is organized as follows. 
Section~\ref{sec:thermo} is devoted to the thermodynamics of a
two-phase fluid. We provide the definitions of the common potentials 
and give some details on the reduced van der Waals model.
Assuming that both phases follow the same non monotone EoS, 
we describe the thermodynamical
equilibrium as the result of a minimization process on the Helmoltz free
energy of the two-phase fluid.
The section ends with the study of the equilibria of the optimality
system.
Section~\ref{sec:dynam-syst} is the core of this work. It is devoted to the construction of
the dynamical system based on the results of the previous section. A few numerical simulations
illustrate the ability of the system to catch both the Maxwell line and the metastable states in the
van der Waals EoS.
The dynamical system is then plugged as a source term in a $4\times 4$
isothermal model in section~\ref{sec:model}. We provide a study of the homogeneous system, which is conditionally hyperbolic.
However we prove that for smooth solutions the hyperbolicity regions are invariant domains of the system with relaxation.
In section~\ref{sec:numer-appr} we present several numerical illustrations 
which assess the ability of the model to deal with metastable states.
They are obtained using a classical finite volume
schemes that treats the convective part and the source terms with a
time-splitting technique.


\section{Thermodynamics and the van der Waals EoS}
\label{sec:thermo}

\subsection{Thermodynamics of a {single fluid}}
\label{sec:therm-single-phase}

Consider a fluid of mass $M\geq 0$ occupying a volume $V\geq 0$, 
assumed to be at constant temperature $T$. If the fluid is homogeneous
and at rest, its behavior is entirely described by its mass, its
volume and its Helmholtz free energy $F$.
According to Gibbs' formalism \cite{Gibbs}, the fluid is at equilibrium if
its Helmholtz free energy is a function, also denoted by $F$, of its
mass $M$ and volume $V$:
	\begin{equation}\label{eq:E_extensive}
  F:(M,V) \to F(M,V).
	\end{equation}
Notice that we do not address yet the stability of equilibrium states.
At this level the involved quantities are extensive, which means that
they share the same scalings as the volume $V$. This corresponds 
to the notion of homogeneous sample: for twice the volume, the mass is
double, and the energy as well. The mathematical consequence of this 
notion is that extensive variables have to be related through
positively homogeneous functions of degree 1 (PH1), namely
\begin{equation}\label{eq:PH1}
  \forall \lambda>0, \quad F(\lambda M, \lambda V) = \lambda F(M,V).
\end{equation}
We assume in addition, and without loss of generality, that the energy
function $F$ belongs to
$\mathcal C^2(\R^+ \times \R^+)$.
\begin{Rem}
  The regularity of $F$ seems to preclude phase transitions, but this
  is not the case because 
  no convexity assumption is made at this stage. We shall come back to
  this in more details in the next section. 
\end{Rem}
We introduce now intensive quantities, by which we mean that they do
not depend on the volume scaling. A typical example is the density
  $\rho=M/V$, but such 
quantities also appear as derivatives of the the equilibrium relation
\eqref{eq:PH1}, which are homogeneous of degree $0$ by construction.
In this way two fundamental quantities are to be considered, namely
the pressure $p$ and the chemical potential $\mu$, defined by
\begin{equation}\label{eq:P}
  p = -\dfrac{\p F}{\p V}(M,V), \qquad \mu = \dfrac{\p F}{\p M}(M,V).
\end{equation}
Notice that these quantities are defined only when the system is at
equilibrium, and we recover the classical thermodynamic relation for
isothermal flows
\begin{equation}\label{eq:relation_thermo}
  dF = -pdV + \mu dM.
\end{equation}
Another classical property of thermodynamical potentials is the
so-called Gibbs relation,
which results from the Euler relation for PH1 functions:
\[
F(M,V) = \nabla F(M,V) \cdot
\begin{pmatrix}
  M\\V
\end{pmatrix}.
\]
Using \eqref{eq:P}, we obtain
\begin{equation}\label{eq:gibbs_extensive}
  F(M,V) = \mu M -p V.
\end{equation}

For most of the following computations it is useful to rewrite the preceding relations using intensive variables. 
For a fixed volume $V$, we denote $f$ the Helmoltz free energy per
  unit volume, which is a function of the density $\rho = M/V$:

\begin{equation}\label{eq:f_intensive2}
  f(\rho)=f\left( \dfrac M
    V\right)=\dfrac1VF(M,V)=F\left( \frac{M}{V}, 1 \right).
\end{equation}
We keep the notations $p$ and $\mu$ the pressure and the chemical
potential as functions of the density $\rho$:
\begin{equation}\label{eq:p_rho}
  p(\rho) = p\left( \dfrac M V\right)=-\dfrac{\p E}{\p V}\left(\dfrac M
    V, 1 \right),\qquad
  \mu(\rho) =\mu\left( \dfrac M V\right)=\dfrac{\p E}{\p M}\left(\dfrac
    M V, 1 \right).
\end{equation}
Thus we obtain an intensive form of the Gibbs relation
\eqref{eq:gibbs_extensive}
\begin{equation}\label{eq:gibbs_intensive}
  f(\rho)=\rho \mu(\rho) - p(\rho).
\end{equation}
On the other hand, from the definitions of $p,\;\mu$ and $f$ we obtain
easily the following relations
\begin{equation}\label{eq:fprim_rho}
  \mu(\rho) = f'(\rho),\quad p(\rho) = \rho f'(\rho)-f(\rho), \quad
  \rho \mu'(\rho)=p'(\rho) .
\end{equation}

\subsection{Example: the van der Waals EOS}
\label{sec:van-der-waals}

The extensive energy of a van der Waals (monoatomic) fluid is
\begin{equation}\label{eq:E_vdW}
  F(M,V) = -\dfrac{a M^2}{V} + RT\left(M\log\dfrac{M}{V-Mb}-M\right),
\end{equation}
where $R$ stands for the perfect gases constant and $a$ and $b$ are 
positive constants ($a=b=0$ leads to the perfect gases law). 
Since we consider the isothermal model, the
temperature $T$ is a parameter here. We refer to \cite[Ch. 7]{Landau} for
a justification of this law from statistical thermodynamics.
The constant $b$ is proportional to the proper volume of a particule
such that $V>Mb$, and the potential $\dfrac{a}{V}$ depicts the
interaction between
particules.

The pressure and the chemical potential associated read
\begin{equation}\label{eq:p_mu_vdw}
  \begin{aligned}
    p\left( \dfrac M V\right)=& -\dfrac{a M^2}{V^2} +
    RT\dfrac{M}{V-Mb}, \\
    \mu\left( \dfrac M V\right) =& -2\dfrac{a M}{V} +
    RT\log\dfrac{M}{V-Mb} +
    RT\dfrac{Mb}{V-Mb}.
  \end{aligned}
\end{equation}

The intensive quantities are
\begin{equation}
  \begin{aligned}
    f(\rho) &= - a \rho^2 +RT\rho\left(\log \left(\dfrac \rho
        {1-b\rho}\right) -1\right),\\
    p(\rho) &=  - a {\rho^2} +\dfrac {\rho RT}{1-b\rho},\\
    \mu (\rho) &= -2{a}{\rho} + RT\log\dfrac{\rho}{1-b\rho} +   
    RT\dfrac{b\rho}{1-b\rho}.
  \end{aligned}
\end{equation}

\begin{figure}
  \caption{Phase diagram for the van der Waals EoS in the $(p,\rho)$
    plan. Below the critical temperature $T_C$, the isotherm curve decreases in
    the spinodal zone which is delimited by the densities $\rho^-<\rho^+$.
    In that area the isotherm is commonly replaced by
    an horizontal segment (dashed line) which coincides with the isobaric line at
    constant pressure $p=p^*$. Such a construction defines the two
    densities $\rho_1^*$ and $\rho_2^*$.}
  \label{fig:vdw_isoth}
  \centering
  \includegraphics[height=6cm]{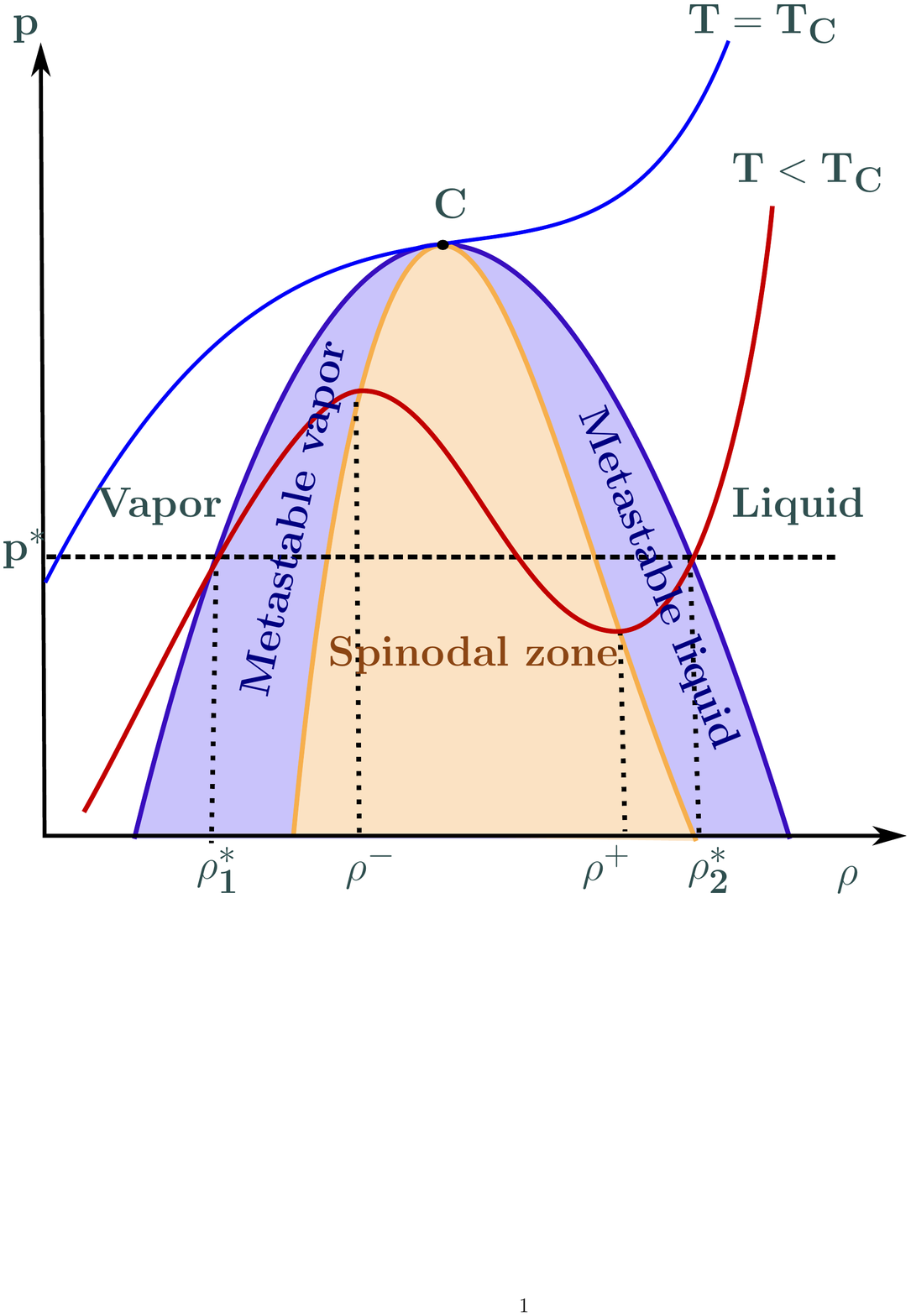}
\end{figure}

The behavior of the isotherm curves in the pressure-density plan is
depicted in Figure \ref{fig:vdw_isoth}. 
The critical temperature $T_c$ is the lower limit of temperatures such
that the pressure is an increasing function of the density. At $T=T_c$
the pressure curve admits an unique horizontal inflection point,
called the critical point, denoted $C$ on Figure \ref{fig:vdw_isoth}. 

In the sequel we will consider the reduced form of the van der Waals
equation (see \cite[Ch. 8]{Landau}).
Denoting $(\rho_c,p_c)$ the coordinates of the critical point one has
\begin{equation*}
  \begin{aligned}
    p'(\rho_c)   &= -2a\rho_c + \dfrac{RT_C}{(1-b\rho_c)^2}=0,\\
    p''(\rho_c)  &= -2a +\dfrac{2bRT_c}{(1-b\rho_c)^3}=0.
  \end{aligned}
\end{equation*}
Considering normalized critical quantities, that is
setting $\rho_c=1$, $p_c=1$ and
$T_c=1$, one obtains the reduced van der Waals EOS with
$R=8/3, \, a=3,$ and $b=1/3$ and
\begin{equation}
\label{eq:reduced_vdw}
  \begin{aligned}
    f(\rho) &=-3\rho^2 + \dfrac 8 3 \rho \left( \log(3\rho/(3-\rho)) -
      1\right),\\
    p(\rho) &= -3\rho^2 + \dfrac{8\rho}{3-\rho},\\
    \mu(\rho) &= -6\rho + \dfrac 8 3 \log(3\rho/(3-\rho)) + \dfrac 8 3
    (3\rho/(3-\rho))
  \end{aligned}
\end{equation}
for $0<\rho< 3$.

Below the critical temperature $T_c$ the pressure is not monotone with
respect to the density (see the red curve on Figure
\ref{fig:vdw_isoth}):
in a region called spinodal zone, delimited for a given temperature by
the densities $\rho^-<\rho^+$, the pressure decreases with respect to
the density and thus leading to unstable states. According to
\eqref{eq:fprim_rho} the energy $f$ is non-convex in the spinodal zone.

In that region the isotherm is commonly replaced by the Maxwell area
rule in order to recover that phase transition happens at constant
pressure and constant chemical potential. 
The Maxwell construction is commonly applied
 on the pressure (see \cite{callen85} for instance) in such a way that 
the two zones  delimited by the van der Waals isotherm and the Maxwell line
(above and below the Maxwell line respectively) have the same area.
This is not the case in Figure \ref{fig:vdw_isoth} because in our context,
the equal area rule is obtained on the chemical potential, see 
Section \ref{sec:equil-constr-opt} below (Proposition \ref{prop:area_rule_mu}).
In any case, the idea is that the isotherm curve is replaced locally by the horizontal segment,
the so-called Maxwell line,
that coincides with some isobaric line $p=p^*$. Such a construction
defines two densities, denoted $\rho_1^*$ and $\rho_2^*$, as well as the value of $p^*$.
An equivalent way to compute this correction is to build the convex
hull of the function $f$, see \cite{faccanoni07} and
\cite{HM10}.

However this construction removes the admissible regions delimited by
the densities where the
pressure law is still nondecreasing. Such regions are called
metastable regions. At a given temperature these regions correspond to
densities belonging to the range $[\rho_1^*,
\rho^-]$ and $[\rho^+, \rho_2^*]$.

\subsection{Thermodynamics of  {phase transition}}
\label{sec:therm-two-phase}

The van der Waals model depicts the thermodynamical behavior of a pure
substance under its liquid state, gaseous state and the coexistence
state.
The non convexity of the EoS allows to capture metastable states but
does not give a relevant representation of the coexistence phase.
A convenient way to cope with this problem is to represent the fluid
under consideration as a set of several copies of itself under
different pure phases (liquid or gaseous phases).
Such a representation is used in
\cite{jaouen01,BH05,HS06,faccanoni07,saurel08,HM10,faccanoni12,pelanti14,LMSN14} assuming that
the fluid is described by two copies,  each one
satisfying a convex EoS that differs from the one of another copy.
See also \cite{James97} where such a representation is used in the context
of adsorption-desorption of a mixture.

We adopt here a slightly modified approach.
Let us consider $I$ copies of the pure substance, $I\ge 1$ being some integer \textit{a priori}
arbitrary. Each copy 
is depicted by its mass $M_i\geq 0$ and its volume
$V_i> 0$, and is assumed to be at equilibrium.
Hence a copy can occupy a volume with zero mass.
We suppose that each copy follows the same non-convex energy function
$F(M_i,V_i)$, typically the van der Waals
 extensive energy given by \eqref{eq:E_vdW}. 
This assumption contrasts with the aforementioned references
where different convex energy functions are considered, 
and the number of copies is prescribed.

Thanks to the mass conservation,
the complete system has a total mass $M=\sum_{i=1}^I M_i$.
Assuming that the copies are immiscible,
the total volume is
$V=\sum_{i=1}^I V_i$ (for a mixture of gas, 
one has $V_i= V$, $\forall i=1,\ldots I$,
 and this condition implies Dalton's law, see \cite{callen85}).
Out of thermodynamical equilibrium the free
Helmoltz energy of the system reads
\begin{equation*}
  {\mathcal F}\big((M_i,V_i)_i\big) = \sum\limits_{i=1}^I F(M_i,V_i).
\end{equation*}

Let us fix the total mass $M$ and volume $V$ of the system.
According to the second principle of thermodynamics (see
\cite{callen85} for instance), the thermodynamical equilibrium
corresponds to the
solutions of the constrained optimization problem
\begin{equation*}
  \inf_{\substack{I\ge 1,M_i\ge 0,V_i\ge 0}} \{ \sum_{i=1}^I
  F(M_i,V_i) ; \, \sum_{i=1}^I V_i = V, \, \sum_{i=1}^I M_i = M\}.
\end{equation*}
We stress the fact that the total number $I$ of possible copies is not
fixed \textit{a priori} here. 
However, as a consequence of
Carath\'eodory's theorem in dimension 2 (see
e.g. \cite[Ch. 17]{rockafellar}), we can state the Gibbs phase rule
(see \cite[Ch. 9]{Landau}),
which gives the expected result in this context.
\begin{Lem}[Gibbs phase rule]\label{phaseGibbs}
  We have $I\le 2$.
\end{Lem}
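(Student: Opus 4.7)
The plan is to eliminate the homogeneity by passing to volume fractions and densities, identify the resulting scalar problem as the computation of a convex envelope, and then invoke Carathéodory's theorem on the planar epigraph of $f$.

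First, I would use the PH1 property \eqref{eq:PH1}. For any copy with $V_i>0$, set $\alpha_i=V_i/V$ and $\rho_i=M_i/V_i$; copies with $V_i=0$ must have $M_i=0$ and can be discarded without affecting either the objective or the constraints. The two global constraints become
\[
\sum_i \alpha_i=1, \qquad \sum_i \alpha_i \rho_i=\rho:=\frac{M}{V},
\]
and by homogeneity the objective rewrites
\[
\sum_i F(M_i,V_i)=\sum_i V_i\, F(\rho_i,1)=V\sum_i \alpha_i f(\rho_i),
\]
where $f(\rho)=F(\rho,1)$ is the intensive free energy from \eqref{eq:f_intensive2}.

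Second, I would recognize the reduced problem as the computation of the lower convex envelope $f^{**}(\rho)$: the infimum of $\sum_i \alpha_i f(\rho_i)$ over $I\ge 1$, $\alpha_i\ge 0$, $\rho_i\ge 0$ with $\sum\alpha_i=1$ and $\sum\alpha_i\rho_i=\rho$ is by definition $f^{**}(\rho)$. Any minimizing configuration realizes the point $(\rho,f^{**}(\rho))$ as a convex combination of points $(\rho_i,f(\rho_i))$ lying on the graph of $f$, hence in $\mathrm{epi}(f)\subset\R^2$.

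Third, I would apply Carathéodory's theorem in $\R^2$: every point of the convex hull of a planar set is a convex combination of at most three of its extreme points. Plain Carathéodory thus bounds $I$ by $3$, but the crucial extra feature is that $(\rho,f^{**}(\rho))$ lies on the lower boundary of $\mathrm{conv}(\mathrm{epi}(f))$, i.e., on the graph of $f^{**}$. That boundary is a one-dimensional extremal face of the convex set, so the point is either an exposed vertex---in which case $(\rho,f^{**}(\rho))=(\rho,f(\rho))$ and $I=1$ (a pure phase)---or it lies in the relative interior of a segment whose endpoints are contact points of $f^{**}$ with $f$, giving $I=2$ (a two-phase mixture). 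Either way $I\le 2$.

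The main obstacle is precisely this last refinement from the generic Carathéodory bound $I\le n+1=3$ to the sharper $I\le 2$. It relies on the fact that the minimum in the vertical (energy) direction forces the realizing convex combination to live on an extremal face of $\mathrm{conv}(\mathrm{epi}(f))$ of dimension at most one; the rest of the argument is routine bookkeeping once PH1 has been exploited.
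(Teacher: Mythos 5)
Your proof is correct and follows the route the paper intends: the paper offers no written argument beyond citing Carath\'eodory's theorem in dimension 2 (Rockafellar, Ch.~17), and your write-up supplies exactly the missing details. The only difference is one of presentation --- the paper means to invoke the homogeneous (conic) Carath\'eodory bound directly in the two-dimensional extensive $(M,V)$-space, whereas you first pass to intensive variables and then sharpen the planar bound from $3$ to $2$ via the extremal-face argument on the lower boundary of $\mathrm{conv}(\mathrm{epi}\, f)$ --- two equivalent ways of exploiting the PH1 structure.
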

In the sequel we use just as well the term \textit{phase} as the term
\textit{copy}.

Taking into account Lemma \ref{phaseGibbs} and the above intensive
formulations, we rewrite the constrained optimization problem for a
fixed mass $M$ and volume $V$, hence for a fixed $\rho$, as
\begin{equation}\label{eq:pb_opti}
  \inf_{\substack{ \rho_1\geq 0 \\\rho_2 \geq 0}}  \{
  \alpha_1f(\rho_1) + \alpha_2f(\rho_2)\},
\end{equation}
under the constraints
\begin{eqnarray}
  \alpha_1+\alpha_2&=&1, \label{eq:constraint_alf}\\
  \alpha_1 \rho_1+\alpha_2 \rho_2 &=&\rho. \label{eq:constraint_rho}
\end{eqnarray}
Here $\alpha_i=\dfrac{V_i}{V}\in [0,1]$ denotes the volume fraction and
$\rho_i=\dfrac{M_i}{V_i}\geq 0$ the density of the
phase $i=1,2$. 
Notice that we removed the optimization on the phase number $I$, 
since single phases can be recovered 
by $\rho_1=\rho_2$ (and any $\alpha_i$) or one $\alpha_i=0$ (with
undetermined {$\rho_j$, $j\neq i$}). 

We rule out the equality case by noticing that, provided
$\rho_1\ne\rho_2$, we can rewrite the constraints 
\eqref{eq:constraint_alf}-\eqref{eq:constraint_rho} as
\begin{equation}\label{eq:alpha_i}
  \begin{aligned}
    \alpha_1 : \R^+ \times \R^+ \times \R^+ &\to [0,1] 
    &\quad
    \alpha_2 : \R^+ \times \R^+ \times \R^+ &\to  [0,1] \\
    (\rho,\rho_1,\rho_2) &\mapsto\dfrac{\rho-\rho_2}{\rho_1-\rho_2},
    &\quad
    (\rho,\rho_1,\rho_2) &\mapsto -\dfrac{\rho-\rho_1}{\rho_1-\rho_2}.
  \end{aligned}
\end{equation}
We have $\alpha_1 \geq 0$ if and only if $\rho_1 \leq \rho\leq
\rho_2$ or $\rho_1 \geq \rho\geq \rho_2$. 
Therefore, accounting on the reduced form of the van der Waals
  model \eqref{eq:reduced_vdw},  we shall assume in the sequel
 that the densities $\rho$, $\rho_1$ and $\rho_2$ satisfy
 \begin{equation}\label{eq:H1}\tag{$H_1$}
   {0<}\rho_1 \leq \rho \leq \rho_2 <3\quad\text{and }\quad
   \rho_1<\rho_2.
 \end{equation}
This is not a restriction, as we shall see below (see Proposition
\ref{prop:DynProp}).

One can also introduce the mass fraction $\varphi_i=\dfrac{M_i}{M}$,
$i=1,2$ defined by
\begin{equation}
  \label{eq:varphi_i}
  \begin{aligned}
    \varphi_1 : \R^+ \times \R^+ \times \R^+ &\to [0,1] 
    &\quad
    \varphi_2 : \R^+ \times \R^+ \times \R^+ &\to  [0,1] \\
    (\rho,\rho_1,\rho_2) &\mapsto\dfrac{\dfrac{1}{\rho} -
      \dfrac{1}{\rho_2}}{\dfrac{1}{\rho_1} - \dfrac{1}{\rho_2}},
    &\quad
    (\rho,\rho_1,\rho_2) &\mapsto\dfrac{\dfrac{1}{\rho} -
      \dfrac{1}{\rho_1}}{\dfrac{1}{\rho_2} - \dfrac{1}{\rho_1}},
  \end{aligned}
\end{equation}
that satisfy $\varphi_1+\varphi_2=1$ and $\varphi_1\varphi_2\geq 0$ if
and only if
the assumption \eqref{eq:H1} is satisfied. Such quantities will be
useful in the mathematical study of the isothermal two-phase flow
model introduced in Section \ref{sec:model}.

\begin{Rem}
In the aforementioned references
\cite{jaouen01,BH05,HS06,faccanoni07,HM10,faccanoni12} the method
consists in describing the two-phase fluid by a coexistence of two copies of
the same substance. The description can be made either on the
extensive variables or the intensive one.
Unlike our present approach the two copies do not
follow the same EoS: each copy is described by its own strictly convex
extensive energy $F_i$, $i=1,2$, which is a function of the mass $M_i$
and the volume $V_i$ of the phase. Following the second principle of thermodynamics
\cite{callen85}, at equilibrium, the extensive energy of the two-phase fluid is given
by 
\begin{equation}
  \label{eq:infconv}
  F((M,V)) := F_1\square F_2(M,V) = \min_{V_1\geq 0, M_1 \geq 0}
  F_1(M_1,V_1) + F_2(M-M_1, V-V_1),
\end{equation}
under the constraints of mass conservation $M=M_1+M_2$ and
immiscibility (without vacuum) $V=V_1+V_2$.
This operation $\square$ is called inf-convolution operation in the convex
analysis framework \cite{hiriart04}.
In \cite{HM10} the authors investigate the link between the
inf-convolution, the Legendre transform and the $(\max,+)$ algebra.
The Legendre transform of a energy $F$ is a convex function
$(M^*,V^*)\to F^*(M^*,V^*)$ defined by
	\begin{equation*}
F^*(M^*,V^*) = \inf_{M^*\geq 0, V^*\geq 0} \{ M^*M+V^*V-F(M,V)\}.
	\end{equation*}
The inf-convolution is transformed into an addition by the Legendre
transform which implies that 
	\begin{equation*}
(F_1\square F_2)^* = F_1^*+F_2^*.
	\end{equation*}
Moreover in the case of convex lower semi continuous (slc) functions
$F_i$, $i=1,2$, one has
\begin{equation*}
  F_1 \square F_2 = (F_1 \square F_2)^{**} = (F_1^* + F_2^*)^*.
\end{equation*}
It means that the energy $F_{eq}$ of the two-phase fluid at
equilibrium is given by
\begin{equation}\label{eq:convhull}
  F_{eq} = F^{**},
\end{equation}
where $F^{**}$ is the convex hull of the energy $F$.
As it is proved in \cite{faccanoni07} the construction of the convex
hull of the energy $F$ is equivalent to the Maxwell
construction. Hence the operation \eqref{eq:convhull} removes the
metastable regions.
\end{Rem}

\subsection{Equilibrium states}
\label{sec:equil-constr-opt}
This section is devoted to the characterization of the equilibrium
states of the thermodynamical system, that
is states that realize the infimum in the optimization
problem
\eqref{eq:pb_opti}--\eqref{eq:constraint_alf}--\eqref{eq:constraint_rho}.
The function we minimize is defined by: 
\begin{equation*}
\begin{aligned}
  \tilde{\mathcal{F}} : [0,1]\times [0,1]\times \R^+ \times \R^+ &\to \R\\
  (\alpha_1, \alpha_2, \rho_1,\rho_2) &\mapsto
  \alpha_1f(\rho_1)+\alpha_2f(\rho_2),
\end{aligned}
\end{equation*}
and is $C^1$
on the space $\{\rho_i\geq 0, \, i=1,2\}$.
The constraints \eqref{eq:constraint_alf}--\eqref{eq:constraint_rho} 
are affine, so they are also $C^1$. We
are thus in position to use the Lagrange multipliers characterization
of the infimum (see \cite[Ch. 28]{rockafellar}): $\lambda_\alpha\in
\mathbb R$ and $\lambda_\rho\in \R$
respectively correspond to the constraints~\eqref{eq:constraint_alf}
and~\eqref{eq:constraint_rho}.

Using the definition~\eqref{eq:gibbs_intensive} of the free energy $f$
and the pressure and chemical potential definitions~\eqref{eq:fprim_rho}, 
one deduces the following optimality system of equations
\begin{eqnarray}
  f(\rho_1) + \lambda_\alpha + \lambda_\rho \rho_1 &= 0,\label{eq:lag1}\\
  f(\rho_2) + \lambda_\alpha + \lambda_\rho \rho_2 &= 0,\label{eq:lag2}\\
  \alpha_1 \mu(\rho_1) + \lambda_\rho \alpha_1 &= 0,\label{eq:lag3}\\
  \alpha_2 \mu(\rho_2) + \lambda_\rho \alpha_2 &= 0 \label{eq:lag4}.
\end{eqnarray}

From this optimality system we deduce immediately that there are two
different kinds of equilibria.

\begin{Lem}\label{prop:opti1}
Under hypothesis \eqref{eq:H1}, the equilibrium states are
\begin{enumerate}
\item {\bf Pure liquid or gaseous states}:
  $\alpha_1=0$ (resp. $\alpha_2=0$), with $\rho_2=\rho$, $\rho_1<\rho$
  arbitrary 
  (resp. $\rho_1=\rho$, $\rho<\rho_2$ arbitrary)
\item {{\bf Coexistence states}}: $\alpha_1\alpha_2\neq 0$, with
  $(\rho_1,\rho_2)$ satisfying $\mu(\rho_1)=\mu(\rho_2)$
  and $p(\rho_1)=p(\rho_2)$.
\end{enumerate}
\end{Lem}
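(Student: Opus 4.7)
The plan is to argue directly from the first-order conditions \eqref{eq:lag1}--\eqref{eq:lag4}, distinguishing the cases according to whether the volume fractions $\alpha_1,\alpha_2$ vanish or not. The key observation is that equations \eqref{eq:lag3} and \eqref{eq:lag4} become trivially satisfied whenever the corresponding $\alpha_i=0$, so the two pure-phase branches and the coexistence branch must be treated separately.

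Suppose first that one of the fractions vanishes, say $\alpha_1=0$. Then \eqref{eq:constraint_alf} forces $\alpha_2=1$, and the mass constraint \eqref{eq:constraint_rho} gives $\rho_2=\rho$; in view of \eqref{eq:H1}, this in turn yields $\rho_1<\rho$. Equation \eqref{eq:lag3} is identically verified and places no further restriction on $\rho_1$, so the latter remains an arbitrary parameter, which recovers the pure-liquid family. The case $\alpha_2=0$ is symmetric and produces the pure-vapor family.

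For the second case both $\alpha_i$ are strictly positive, so \eqref{eq:lag3}--\eqref{eq:lag4} may be divided through by $\alpha_i$ to give
\[
\mu(\rho_1)=-\lambda_\rho=\mu(\rho_2),
\]
which is chemical equilibrium. Subtracting \eqref{eq:lag2} from \eqref{eq:lag1} then produces
\[
f(\rho_1)-f(\rho_2)=-\lambda_\rho(\rho_1-\rho_2)=\mu(\rho_1)(\rho_1-\rho_2).
\]
Substituting the Gibbs relation \eqref{eq:gibbs_intensive}, $f(\rho_i)=\rho_i\mu(\rho_i)-p(\rho_i)$, into each occurrence of $f$ on the left-hand side and cancelling the terms in $\rho_i\mu(\rho_i)$ by means of the chemical-potential equality just derived, one obtains $p(\rho_1)=p(\rho_2)$, which is the mechanical equilibrium condition characterizing the coexistence branch.

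The argument is conceptually straightforward; the only point requiring attention is that \eqref{eq:lag3}--\eqref{eq:lag4} cannot be divided by $\alpha_i$ unconditionally, and that the pure-phase branches arise precisely from this degeneracy, leaving the density of the absent phase unconstrained. The assumption \eqref{eq:H1}, and in particular $\rho_1<\rho_2$, is used to rule out the trivial identification $\rho_1=\rho_2=\rho$, which would otherwise merge the coexistence and pure branches.
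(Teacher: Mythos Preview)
Your proof is correct and follows essentially the same approach as the paper: you split according to whether $\alpha_1\alpha_2=0$, read off the pure-phase case from the constraints, and in the coexistence case use \eqref{eq:lag3}--\eqref{eq:lag4} to obtain $\mu(\rho_1)=\mu(\rho_2)=-\lambda_\rho$ before invoking the Gibbs relation \eqref{eq:gibbs_intensive} in \eqref{eq:lag1}--\eqref{eq:lag2} to conclude $p(\rho_1)=p(\rho_2)$. The only cosmetic difference is that you subtract \eqref{eq:lag1} and \eqref{eq:lag2}, whereas the paper treats each equation separately and identifies the common value as $\lambda_\alpha$.
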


\begin{proof}
  The case $\alpha_1=0$ corresponds to the saturation of the
  constraint $\alpha_1+\alpha_2=1$ see~\eqref{eq:constraint_alf}. 
  It leads to $\alpha_2=1$ and
  thus $\rho_2=\rho$ that is only the phase 2 is present. 
  Conversely if $\alpha_2=1$ only the phase 1 remains.
  
  On the other hand let us assume $\alpha_1 \alpha_2 \neq 0$.
  Then~\eqref{eq:lag3} and~\eqref{eq:lag4} lead to the equality of the
  chemical potentials
  \begin{equation*}
    \mu(\rho_1) = \mu(\rho_2) = -\lambda_\rho.
  \end{equation*}
  Then the intensive Gibbs relation~\eqref{eq:gibbs_intensive}
  allows to rewrite the conditions~\eqref{eq:lag1} and~\eqref{eq:lag2}
  as
  \begin{equation*}
    \begin{aligned}
      \rho_i \mu(\rho_i) -p(\rho_i) +\lambda_\alpha + \lambda_\rho
      \rho_i =0, \quad i=1,2.
    \end{aligned}
  \end{equation*}
 Since $-\lambda_\rho=\mu(\rho_i)$, this leads to the pressures
 equality
 \begin{equation*}
   p(\rho_1)=p(\rho_2) = \lambda_\alpha.
 \end{equation*}
\end{proof}

To proceed further we need the following result for coexistence states.
\begin{Prop}\label{prop:area_rule_mu}
  Under hypothesis~\eqref{eq:H1} and if $\alpha_1\alpha_2 \neq 0$, 
  the following propositions are
  equivalent and uniquely define 
  the couple of densities $\rho_1^*<\rho_2^*$.
  \begin{enumerate}
  \item The chemical potentials and the pressures are equal
    \begin{align}
      \label{eq:eq_mu}
      \mu(\rho_1^*) &= \mu(\rho_2^*),\\
      \label{eq:eq_p}
      p(\rho_1^*) &=p(\rho_2^*).
    \end{align}
  \item The  Maxwell's area rule on the chemical potential holds
    \begin{equation}
      \label{eq:maxwell_mu}
      \int_0^1 \mu(\rho_2 + t(\rho_1-\rho_2))dt = \mu(\rho_1^*) =
      \mu(\rho_2^*).
    \end{equation}
  \item The difference of the Helmoltz free energies reads
    \begin{equation}
      \label{eq:e2-e1}
      f(\rho_2^*) - f(\rho_1^*) = \mu(\rho_1^*)(\rho_2^*-\rho_1^*) =
      \mu(\rho_2^*) (\rho_2^*-\rho_1^*).
    \end{equation}
  \end{enumerate}
\end{Prop}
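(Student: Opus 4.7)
The plan is to deduce the three equivalences from the intensive Gibbs relation \eqref{eq:gibbs_intensive} together with the identity $\mu=f'$ coming from \eqref{eq:fprim_rho}, and then to establish uniqueness using the qualitative shape of the reduced van der Waals pressure.

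First I would prove $(1)\Leftrightarrow(3)$ by a purely algebraic manipulation. Writing the Gibbs relation $f(\rho_i^*)=\rho_i^*\mu(\rho_i^*)-p(\rho_i^*)$ for $i=1,2$ and subtracting gives
\begin{equation*}
f(\rho_2^*)-f(\rho_1^*) \;=\; \rho_2^*\mu(\rho_2^*)-\rho_1^*\mu(\rho_1^*) - \bigl(p(\rho_2^*)-p(\rho_1^*)\bigr).
\end{equation*}
Under (1) the right-hand side collapses to $\mu^*(\rho_2^*-\rho_1^*)$ with $\mu^*:=\mu(\rho_1^*)=\mu(\rho_2^*)$, which is exactly (3). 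Conversely, the double equality in (3) already forces $\mu(\rho_1^*)=\mu(\rho_2^*)$ since $\rho_1^*\neq\rho_2^*$ by \eqref{eq:H1}; substituting back into the identity above then yields $p(\rho_1^*)=p(\rho_2^*)$, so (1) holds.

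Next I would prove $(2)\Leftrightarrow(3)$ via the fundamental theorem of calculus applied to $f$. Since $\mu=f'$, the change of variable $\rho=\rho_2^*+t(\rho_1^*-\rho_2^*)$ produces
\begin{equation*}
\int_0^1 \mu\bigl(\rho_2^*+t(\rho_1^*-\rho_2^*)\bigr)\,dt \;=\; \frac{f(\rho_1^*)-f(\rho_2^*)}{\rho_1^*-\rho_2^*} \;=\; \frac{f(\rho_2^*)-f(\rho_1^*)}{\rho_2^*-\rho_1^*}.
\end{equation*}
Equating this average to the common value $\mu(\rho_1^*)=\mu(\rho_2^*)$ and multiplying by $\rho_2^*-\rho_1^*$ gives (3), and the chain is reversible.

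Finally, for uniqueness, I would exploit the graph of the reduced van der Waals pressure \eqref{eq:reduced_vdw}, which is strictly increasing on $(0,\rho^-)$, strictly decreasing on $(\rho^-,\rho^+)$, and strictly increasing on $(\rho^+,3)$. The pressure equality together with $\rho_1^*<\rho_2^*$ forces $\rho_1^*\in(0,\rho^-)$ and $\rho_2^*\in(\rho^+,3)$ (any other configuration contradicts strict monotonicity between them). Parametrizing each admissible branch by the common pressure $p^*$ defines smooth maps $p^*\mapsto\rho_1^*(p^*)$ and $p^*\mapsto\rho_2^*(p^*)$, and I would set $\Phi(p^*):=\mu(\rho_2^*(p^*))-\mu(\rho_1^*(p^*))$. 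Using $\rho_i^{*\prime}(p^*)=1/p'(\rho_i^*)$ (from implicit differentiation of $p(\rho_i^*)=p^*$) and the relation $\rho\mu'=p'$ of \eqref{eq:fprim_rho}, a direct computation gives
\begin{equation*}
\Phi'(p^*) \;=\; \frac{\mu'(\rho_2^*)}{p'(\rho_2^*)}-\frac{\mu'(\rho_1^*)}{p'(\rho_1^*)} \;=\; \frac{1}{\rho_2^*}-\frac{1}{\rho_1^*} \;<\; 0,
\end{equation*}
so $\Phi$ is strictly monotone and admits at most one zero, determining $p^*$, and hence the pair $(\rho_1^*,\rho_2^*)$, uniquely.

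The algebraic equivalences are essentially bookkeeping once $\mu=f'$ and the Gibbs identity are in hand; the genuine difficulty lies in the uniqueness claim, because it is not a consequence of the optimality system alone but requires the subcritical structure of the reduced van der Waals EoS to pin down the correct monotone branches on which $\rho_1^*$ and $\rho_2^*$ live and to produce the strict sign of $\Phi'$.
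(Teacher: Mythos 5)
Your treatment of the three equivalences is correct and follows essentially the same route as the paper: $(1)\Leftrightarrow(3)$ via the intensive Gibbs relation, and $(2)\Leftrightarrow(3)$ via $\mu=f'$ and the fundamental theorem of calculus. You are more explicit than the paper (which compresses these into three sentences), and your observation that the \emph{double} equality in (3) already forces $\mu(\rho_1^*)=\mu(\rho_2^*)$ because $\rho_1^*\neq\rho_2^*$ is exactly the right reading of why (3) alone recovers (1).

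The uniqueness argument is where you go beyond the paper (which only gestures at ``a geometrical argument using the Maxwell's area rule''), and it is also where there is a gap. You claim that pressure equality together with $\rho_1^*<\rho_2^*$ forces $\rho_1^*\in(0,\rho^-)$ and $\rho_2^*\in(\rho^+,3)$ because ``any other configuration contradicts strict monotonicity between them.'' That justification only rules out the cases where both densities lie on the \emph{same} monotone branch. It does not exclude the mixed configurations $\rho_1^*\in(0,\rho^-)$, $\rho_2^*\in(\rho^-,\rho^+)$ (or $\rho_1^*\in(\rho^-,\rho^+)$, $\rho_2^*\in(\rho^+,3)$): there $p$ is not monotone on $[\rho_1^*,\rho_2^*]$, and a horizontal pressure level between $p(\rho^+)$ and $p(\rho^-)$ genuinely meets both branches. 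The good news is that your own computation repairs this: for \emph{any} pair of adjacent monotone branches, parametrize by the common pressure and consider $\Phi(p^*)=\mu(\rho_2^*(p^*))-\mu(\rho_1^*(p^*))$; the identity $\Phi'=1/\rho_2^*-1/\rho_1^*<0$ holds there too, and since $\Phi$ vanishes at the cusp where the two branches meet ($\rho_1^*=\rho_2^*=\rho^-$ or $\rho^+$, excluded by $\rho_1^*<\rho_2^*$), the chemical potentials cannot also be equal at any admissible point of those configurations. Only the outer pair of branches remains, where your strict monotonicity of $\Phi$ gives at most one solution. One last point, which the paper is equally silent about: ``uniquely define'' implicitly includes existence, and your $\Phi$ is only shown to have \emph{at most} one zero; a sign change of $\Phi$ over the admissible pressure range (an intermediate-value argument on the outer branches) is still needed to produce the pair $(\rho_1^*,\rho_2^*)$.
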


\begin{proof}
  The identities \eqref{eq:maxwell_mu} and \eqref{eq:e2-e1} are
  equivalent since $f'(\rho)=\mu(\rho)$, see~\eqref{eq:fprim_rho}.  
  Now assume \eqref{eq:eq_mu}-\eqref{eq:eq_p}
  hold. Then the intensive Gibbs relation~\eqref{eq:gibbs_intensive} gives
  \eqref{eq:e2-e1}. Conversely \eqref{eq:e2-e1} implies the chemical
  potentials equality and thus the pressures equality. 
  Now the uniqueness of $(\rho_1^*,\rho_2^*)$ follows easily form a
  geometrical argument using the Maxwell's area rule.
\end{proof}

The most famous characterizations of diphasic equilibria are
\eqref{eq:eq_mu}-\eqref{eq:eq_mu} and \eqref{eq:maxwell_mu}, although
the latter is usually written in terms of pressure. We can recover
this form by writing the intensive relations with the specific
volume $\tau=V/M$ instead of $\rho$. The third relation
\eqref{eq:e2-e1} is not so classic but will be useful in the sequel.

The density $\rho_1^*$ (resp. $\rho_2^*$) separates the range of
pure gaseous state and the range of metastable gas (resp. the range
of pure liquid state and the range of metastable liquid),
see Figure \ref{fig:vdw_isoth}.
These densities define the coexistence pressure $p^*$ and the
coexistence chemical potential $\mu^*$:
\begin{equation}\label{eq:mus_ps}
  p^* = p(\rho_1^*)=p(\rho_2^*), \qquad
  \mu^*= \mu(\rho_1^*) = \mu(\rho_2^*).
\end{equation}

We emphasize that the necessary conditions for equilibrium contain
both unstable (spinodal zone), metastable and stable states. Nothing at this stage 
can make the difference, which turns out to be of dynamical nature, in a way we make 
precise now.


\section{Dynamical system and phase transition}
\label{sec:dynam-syst}

This section is devoted to the construction and the analysis of a
dynamical system deduced from the optimality conditions given in
Lemma~\ref{prop:opti1} and Proposition~\ref{prop:area_rule_mu}.
We will prove that the equilibria of this dynamical system are either
pure liquid/vapor states, pure metastable states or coexistence states
in the spinodal zone (that is states
satisfying the properties~\eqref{eq:eq_mu}-\eqref{eq:maxwell_mu}
of Proposition~\ref{prop:area_rule_mu}).
We emphasize that the difference between metastable states and coexistent states actually relies 
on the long-time dynamical behaviour of the solutions to the dynamical system. No static characterization can be given.

The section ends with numerical illustrations that assess the ability
of the dynamical system to preserve metastable states.

\subsection{Construction of the dynamical system}
\label{sec:constr-dyn-syst}

We want to construct a dynamical system which equilibria  coincide
with the equilibria of the optimization problem depicted in
Lemma~\ref{prop:opti1}.
To do so, we force the dynamical system to dissipate the Helmoltz free
energy defined by
\begin{equation}
  \label{eq:F}
  \begin{aligned}
    \mathcal F: \R^+ \times \R^+ \times \R^+ &\to \R\\
    (\rho, \rho_1,\rho_2) &\mapsto \alpha_1 (\rho, \rho_1,\rho_2)
    f(\rho_1) + \alpha_2 (\rho, \rho_1,\rho_2)f(\rho_2),
  \end{aligned}
\end{equation}
under the optimization constrains~\eqref{eq:constraint_alf}
and~\eqref{eq:constraint_rho}.

Assuming that $\rho$, $\rho_1$ and $\rho_2$ are now time-dependent
functions, we can compute the derivative of the total Helmoltz
free energy $\mathcal F$ with respect to time and deduce appropriate
time derivatives of $\rho$, $\rho_1$ and $\rho_2$
such that $\mathcal F$ is dissipated in time.
Moreover we want pure states (either liquid, vapor or metastable) to
be equilibria of the dynamical system. 
Hence we have forced the time derivatives of $\rho$,
$\rho_1$ and $\rho_2$ to vanish in case of pure state (that is when
$\alpha_1\alpha_2=0$).

For sake of readability we denote 
\begin{equation}
\label{eq:r}
\mathbf{r}=
\begin{pmatrix}
  \rho\\\rho_1\\\rho_2
\end{pmatrix}
\end{equation}
the vector of admissible densities satisfying the
assumption~\eqref{eq:H1}. 
By the definition~\eqref{eq:alpha_i} of the volume fractions
$\alpha_i,\; i=1,2$, one has
\begin{equation*}
\nabla_{\mathbf r} \alpha_1(\mathbf{r}) = -\nabla_{\mathbf{r}}
\alpha_2 (\mathbf{r})=\dfrac{1}{\rho_1-\rho_2}
\begin{pmatrix}
  1\\
  -\alpha_1(\mathbf{r})\\
  -\alpha_2(\mathbf{r})
\end{pmatrix}.
\end{equation*}
From this and the definition~\eqref{eq:F} of $\mathcal F$,
we easily get the gradient of the free energy
\begin{equation}\label{eq:gradF1}
  \nabla_{\mathbf r} \mathcal F(\mathbf{r})
  = \dfrac{1}{\rho_1-\rho_2}
  \begin{pmatrix}
    f(\rho_1)-f(\rho_2)\\
    \alpha_1(\mathbf{r})(\rho_2(\mu(\rho_2)-\mu(\rho_1))+p(\rho_1)-p(\rho_2))\\
    \alpha_2(\mathbf{r})(\rho_1(\mu(\rho_2)-\mu(\rho_1))+p(\rho_1)-p(\rho_2)).
  \end{pmatrix}
\end{equation}
Note at once
that it can be expressed in terms of relative free energy
\begin{equation}\label{eq:gradF2}
  \nabla_{\mathbf r} \mathcal F(\mathbf{r}) = \dfrac{1}{\rho_1-\rho_2}
  \begin{pmatrix}
    f(\rho_1)-f(\rho_2)\\
    \alpha_1(\mathbf{r})f(\rho_2|\rho_1)\\
    -\alpha_2(\mathbf{r})f(\rho_1|\rho_2)
  \end{pmatrix},
\end{equation}
where the relative
free energy of $\rho_2$ with respect to $\rho_1$ is defined by
\begin{equation}
  \label{eq:relat_f}
  f(\rho_2|\rho_1):=f(\rho_2)-f(\rho_1)-\mu(\rho_1)(\rho_2-\rho_1).
\end{equation}
We turn now to the definition of our dynamical system. Because of the
mass conservation, we obviously impose that
\begin{equation*}
  \dot \rho =0.
\end{equation*}
The main idea to proceed further is that we want the system to
dissipate the total Helmoltz free energy $\mathcal F$.
Combining the definition~\eqref{eq:F} of $\mathcal F$
and the expression of $\nabla_{\mathbf{r}}\mathcal F(\mathbf{r})$,
one computes the time derivative of $\mathcal F$
along some trajectory, that is
\begin{equation}
  \label{eq:Fdot}
    \dot{\mathcal F}(\mathbf{r}) = \dfrac{1}{\rho_1-\rho_2}\bigl(
      \alpha_1(\mathbf{r}) f(\rho_2|\rho_1) \dot \rho_1 - 
      \alpha_2(\mathbf{r}) f(\rho_1|\rho_2) \dot \rho_2
      \bigr).
\end{equation}
We now propose the following dynamical system
\begin{equation}
  \label{eq:syst_dyn2}
  \begin{cases}
    \dot\rho = &0,\\
    \dot\rho_1 = & -(\rho-\rho_1)(\rho-\rho_2)
    f(\rho_2|\rho_1),\\
   \dot\rho_2 = & + (\rho-\rho_1)(\rho-\rho_2)
   f(\rho_1|\rho_2).
  \end{cases}
\end{equation}
An easy computation shows that this system dissipates $\mathcal
F$ along its trajectories {(under the assumption \eqref{eq:H1})}.
Indeed using the expression~\eqref{eq:Fdot} of $ \dot{\mathcal  F}$
and the equations of $\dot{\rho_1}$ and $\dot{\rho_2}$, one gets
\begin{equation*}
  \dot{\mathcal  F} (\mathbf{r})=
  -(\rho-\rho_1)\left[\alpha_1(\mathbf r)f(\rho_2|\rho_1)\right]^2  
  + (\rho-\rho_2) \left[\alpha_2(\mathbf r)f(\rho_1|\rho_2)\right]^2 .
\end{equation*}
The multiplicative term $(\rho-\rho_1)(\rho-\rho_2)$ 
in the equations on $\dot \rho_1$ and $\dot \rho_2$
 ensures that the right hand side of the
system vanishes in case of pure states (either pure liquid, vapor or
metastable states). 

\begin{Rem}
  We emphasize that the choice of the right hand side of~
  \eqref{eq:syst_dyn2} is somewhat arbitrary.
  Other terms might be more efficient, but this one was chosen for its
  simplicity and its interpretation in
  terms of relative free energy.
\end{Rem}

One can check easily that $\dot{\overbrace{\alpha_1\rho_1+\alpha_2
    \rho_2}} =0$ so that it is consistent with the total mass
conservation equation $\dot \rho=0$.

An equivalent dynamical system can be written in terms of the time
derivatives of the volume
fractions $\alpha_i$ and of the partial masses $\alpha_i\rho_i$, $i=1,2$.
Accounting for the
constraints~\eqref{eq:constraint_alf}-\eqref{eq:constraint_rho} 
and for the system~\eqref{eq:syst_dyn2}, some straightforward 
computations lead to
\begin{equation}
  \label{eq:dyn_syst_equ}
  \begin{cases}
    \dot\alpha_1 &= -\dot \alpha_2\\
    &= \alpha_1^2 (\rho-\rho_1) f(\rho_2|\rho_1) +
    \alpha_2^2 (\rho-\rho_2) f(\rho_1|\rho_2),\\
    \dot{\overbrace{\alpha_1\rho_1}} &=
    -\dot{\overbrace{\alpha_2\rho_2}}\\
    &=\alpha_1^2\rho_2(\rho-\rho_1)f(\rho_2|\rho_1) + \alpha_2^2 \rho_1
    (\rho-\rho_2)f(\rho_1|\rho_2).
  \end{cases}
\end{equation}
This formulation does not allow to compute single phase
systems, since the determination of the
partial densities $\rho_i$ is impossible when $\alpha_i=0$. Thus we
rather use the dynamical system~\eqref{eq:syst_dyn2}.

\subsection{Equilibria of the dynamical system}
\label{sec:study-dynam-syst}

The major result of this paragraph is that the
attractors of the system are either pure liquid/vapor states,
including metastable states, or
the coexistence state defined
by~\eqref{eq:maxwell_mu}-\eqref{eq:eq_p}, see
Proposition~\ref{prop:area_rule_mu}.

\begin{Prop}\label{prop:DynProp}
  The dynamical system \eqref{eq:syst_dyn2} satisfies the following
  properties.
  \begin{enumerate}
  \item If the assumption \eqref{eq:H1} is satisfied at $t=0$, then it
    is preserved in time  (in particular, the assumption
    $\rho_1<\rho_2$ is preserved).
  \item If $\alpha_1(0)=0$ (resp. $\alpha_1(0)=1$) then $\alpha_1(t)
    =0$ (resp $\alpha_1(t) =1$), for all time.
  \end{enumerate}
\end{Prop}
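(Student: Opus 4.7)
The plan is to treat \eqref{eq:syst_dyn2} as a flow on the region defined by \eqref{eq:H1} and to check directly that the vector field is either tangent to or points strictly inward at each face of that region. Since $\dot\rho=0$, the density $\rho=\rho(0)$ is a constant of the motion, so I fix it once and for all and analyse the planar system in $(\rho_1,\rho_2)$ on the strip $\{0<\rho_1\le\rho\le\rho_2<3\}$. On the open interior the right-hand sides are smooth because $f$, $\mu$, $p$ are smooth on $(0,3)$, so local existence and uniqueness are automatic.

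The crucial observation for the ``middle'' faces $\rho_1=\rho$ and $\rho_2=\rho$ is that both right-hand sides in \eqref{eq:syst_dyn2} carry the common factor $(\rho-\rho_1)(\rho-\rho_2)$. Consequently $\dot\rho_1$ and $\dot\rho_2$ vanish identically along either face, which is therefore a union of equilibria. By uniqueness, no trajectory starting strictly inside the strip can reach such a face in finite time, and no trajectory starting on a face can leave it. This simultaneously preserves $\rho_1\le\rho$ and $\rho\le\rho_2$, hence $\rho_1<\rho_2$: under \eqref{eq:H1} the only way to have $\rho_1=\rho_2$ is to have them both equal $\rho$, which is excluded by the previous argument.

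For the ``outer'' faces $\rho_1=0$ and $\rho_2=3$, I exploit the singular behaviour of the reduced van der Waals EoS \eqref{eq:reduced_vdw}. As $\rho_1\to 0^+$, $f(\rho_1)$ remains bounded while $\mu(\rho_1)\to-\infty$, so definition \eqref{eq:relat_f} gives $f(\rho_2|\rho_1)\to+\infty$. Since $(\rho-\rho_1)(\rho-\rho_2)<0$ in the strict interior, this forces $\dot\rho_1\to+\infty$, pushing $\rho_1$ away from zero. Symmetrically, $\mu(\rho_2)\to+\infty$ as $\rho_2\to 3^-$; using the Gibbs relation \eqref{eq:gibbs_intensive} to rewrite $f(\rho_1|\rho_2)=f(\rho_1)+p(\rho_2)-\mu(\rho_2)\rho_1$ and comparing the leading $1/(3-\rho_2)$ asymptotics of $p$ and $\mu$ (whose coefficients make the combination positive for every $\rho_1<3$) yields $f(\rho_1|\rho_2)\to+\infty$ and hence $\dot\rho_2\to-\infty$, preventing approach to $\rho_2=3$.

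Claim~(2) is then an immediate consequence of the preceding face analysis: by \eqref{eq:alpha_i} the condition $\alpha_1(0)=0$ is equivalent to $\rho_2(0)=\rho$, and the trajectory stays on the invariant face $\{\rho_2=\rho\}$, giving $\alpha_1(t)\equiv 0$; the case $\alpha_1(0)=1$ corresponds to $\rho_1(0)=\rho$ and is treated identically. The main obstacle of the proof is the quantitative asymptotic analysis at the outer faces $\rho_1=0$ and $\rho_2=3$; everything else reduces to the shared factor $(\rho-\rho_1)(\rho-\rho_2)$ together with the uniqueness of ODE solutions.
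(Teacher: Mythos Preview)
Your route differs substantially from the paper's. The paper argues by direct factorisation: it computes
\[
\dot\rho_1-\dot\rho_2 \;=\; (\rho-\rho_1)(\rho-\rho_2)(\rho_1-\rho_2)\bigl(\mu(\rho_2)-\mu(\rho_1)\bigr),
\]
reads off preservation of the sign of $\rho_1-\rho_2$ from the linear factor $(\rho_1-\rho_2)$, and similarly writes $\dot\alpha_1$ with an explicit $\alpha_1$ prefactor. Your phase--plane argument via invariant faces is more geometric and in fact more complete: you treat the constraints $\rho_1\le\rho$, $\rho\le\rho_2$, $\rho_1>0$, $\rho_2<3$ separately, whereas the paper's short proof explicitly addresses only the sign of $\rho_1-\rho_2$ and the invariance of $\alpha_1\in\{0,1\}$. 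Your reduction of item~(2) to the invariance of the faces $\{\rho_i=\rho\}$ is clean and correct.

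There is, however, a concrete error in your treatment of the face $\rho_2\to 3^-$. From \eqref{eq:reduced_vdw} both $p(\rho_2)$ and $\mu(\rho_2)$ carry the same leading singular term $8\rho_2/(3-\rho_2)$, so the coefficient of $1/(3-\rho_2)$ in $p(\rho_2)-\rho_1\,\mu(\rho_2)$ is $8\rho_2(1-\rho_1)$. This is positive only for $\rho_1<1$, not for every $\rho_1<3$ as you claim. When $\rho_1>1$ (which \eqref{eq:H1} certainly allows, e.g.\ for $\rho$ between $\rho^+$ and $3$) one gets $f(\rho_1|\rho_2)\to-\infty$ and hence $\dot\rho_2\to+\infty$, so the vector field does \emph{not} point inward and your argument fails to establish $\rho_2<3$ in that regime. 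Your analysis at $\rho_1\to 0^+$ is correct; note though that the paper's own proof is silent on both outer bounds, so this is ground you are covering beyond what the authors actually prove.
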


\begin{proof}
  First we address the preservation of the hypothesis \eqref{eq:H1}. 
  Some straightforward computations lead to
  \begin{equation}
    \label{eq:H1_time}
    \dot\rho_1-\dot\rho_2 =
    (\rho-\rho_1)(\rho-\rho_2)(\rho_1-\rho_2)(\mu(\rho_2)-\mu(\rho_1)).
  \end{equation}
  Thus if $\rho_1-\rho_2$ is zero at $t=0$ then this property is
  preserved for all time and the sign of the difference
  $\rho_1-\rho_2$ is also preserved.
  The property on the volume fraction $\alpha_1$ is proved in the same
  way noting that 
  \begin{equation}
    \label{eq:alpha1_time}
    \dot \alpha_1 (\mathbf{r}) = -\alpha_1(\mathbf{r})
    (\rho-\rho_1)(\rho_1(\mu(\rho_1)-\mu(\rho_2))-p(\rho_1)+p(\rho_2)).
  \end{equation}
\end{proof}
We turn now to the characterization of the equilibria of the
dynamical system. Since the total mass $\rho$ is constant in time, 
they are parametrized by $\rho$. Moreover it arises that their
thermodynamical characterization can be given in terms of attraction
basins.
We refer the reader for graphical references to figures \ref{fig:SpinodalBasin},
\ref{fig:PureBasin} and \ref{fig:MetaBasin}, where attraction basins are drawn as hatched zones.
The last two figures correspond to the gaseous phase, similar pictures can be drawn for the liquid phase.

\begin{Thm}\label{th:attraction}
Assume that the initial data $(\rho(0),\rho_1(0),\rho_2(0))$ of the
system~\eqref{eq:syst_dyn2} satisfy \eqref{eq:H1}.
Then the equilibria are given by
\begin{enumerate}
\item {\bf spinodal zone:} $\rho_1^*<\rho(0)<\rho_2^*$. 
  The unique equilibrium is $(\rho_1^*,\rho_2^*)$ characterized by 
  Proposition \ref{prop:area_rule_mu}, with $\alpha_i$ given by
  \eqref{eq:alpha_i}. 
  The attraction basin is $(0,\rho)\times(\rho,3)$.
\item {\bf pure gaseous states:} 
  $\rho(0)<\rho_1^*$ (resp. {\bf pure liquid states:}
    $\rho(0)>\rho_2^*$).
  The set of equilibria
  is $\{\rho\}\times(\rho,3)$, with $\alpha_1=1$
  (resp. $(0,\rho)\times\{\rho\}$ with $\alpha_2=1$). 
  The attraction basin is $(0,\rho)\times(\rho,3)$.
\item {\bf metastable states:} 
  $\rho_1^*\le\rho(0)\le\rho^-$ (resp. $\rho^+\le\rho(0)\le\rho_2^*$). 
  There are two sets of equilibria
  \begin{enumerate}
  \item {\bf perturbation within the phase:} $\rho_2(0)\le\rho^-$ 
    (resp. $\rho_1(0) \ge \rho^+$). 
    The set of equilibria is $\{\rho\}\times(\rho,\rho^-)$, 
    with $\alpha_1=1$ 
   (resp. $(\rho^+,\rho)\times \{ \rho\}$
      with $\alpha_2=1$). 
    The attraction basin is then
    $(0,\rho)\times(\rho,\rho^-)$
    (resp. $(\rho^+,\rho)\times (\rho,3)$).
  \item {\bf outside perturbation:}
    $\rho^-\le\rho_2(0)\le\rho_2^*$ (resp. $\rho^+\ge\rho_1(0)\ge\rho_1^*$). 
    There is a unique equilibrium $(\rho_1^*,\rho_2^*)$,
    characterized by 
    Proposition \ref{prop:area_rule_mu}, with 
    $\alpha_i$ given by \eqref{eq:alpha_i}. 
    The attraction basin is $(0,\rho)\times(\rho^-,3)$ (resp. $(0,\rho^+)\times(\rho,3)$).
  \end{enumerate}
\end{enumerate}
\end{Thm}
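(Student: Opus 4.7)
The plan is to combine the Lyapunov structure of \eqref{eq:syst_dyn2} with LaSalle's invariance principle, and then carry out a case-by-case analysis of the two-dimensional phase portrait in the $(\rho_1,\rho_2)$ slice (recall that $\rho$ is conserved along trajectories).

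First I would identify all equilibria. The right-hand side of \eqref{eq:syst_dyn2} vanishes either through the prefactor $(\rho-\rho_1)(\rho-\rho_2)=0$, which under \eqref{eq:H1} with $\rho_1<\rho_2$ gives $\rho_1=\rho$ (hence $\alpha_2=0$) or $\rho_2=\rho$ (hence $\alpha_1=0$); or through the simultaneous vanishing of both relative free energies, $f(\rho_2|\rho_1)=f(\rho_1|\rho_2)=0$. Summing these two relations and using \eqref{eq:relat_f} forces $(\mu(\rho_2)-\mu(\rho_1))(\rho_2-\rho_1)=0$, hence $\mu(\rho_1)=\mu(\rho_2)$; the intensive Gibbs relation \eqref{eq:gibbs_intensive} then yields $p(\rho_1)=p(\rho_2)$, so by Proposition~\ref{prop:area_rule_mu} the only such pair is $(\rho_1^*,\rho_2^*)$, with admissibility of \eqref{eq:alpha_i} further restricting this equilibrium to $\rho_1^*\le\rho\le\rho_2^*$. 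Combined with Proposition~\ref{prop:DynProp} (preservation of \eqref{eq:H1}) and the dissipation identity $\dot{\mathcal F}(\mathbf r)\le 0$ (with equality exactly on the above set), LaSalle's principle guarantees that every trajectory converges to a point of this equilibrium set.

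It remains to decide \emph{which} equilibrium a given initial datum selects. The key observation is that, under \eqref{eq:H1}, $\dot\rho_1$ carries the sign of $f(\rho_2|\rho_1)$ while $\dot\rho_2$ carries the opposite sign of $f(\rho_1|\rho_2)$; and $f(\,\cdot\,|\rho_0)$ is non-negative whenever $f$ is convex on the interval between the two densities involved, since its graph then lies above its tangent at $\rho_0$. In the pure-phase range $\rho<\rho_1^*$ (symmetrically $\rho>\rho_2^*$) coexistence is not admissible, and the monotone convexity of $f$ on the gas branch below $\rho^-$ forces $\rho_1$ monotonically to $\rho$, so the limit sits on $\{\rho\}\times(\rho,3)$ with $\alpha_1=1$. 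In the spinodal range $\rho_1^*<\rho<\rho_2^*$, both equilibrium types are admissible, but a linearization near the pure-phase equilibria shows them to be repelling (small perturbations with $\rho_2>\rho$ yield $f(\rho_2|\rho_1)>0$ and thus $\dot\rho_1>0$), so the only attractor compatible with the strict decrease of $\mathcal F$ is $(\rho_1^*,\rho_2^*)$. In the metastable range $\rho_1^*\le\rho\le\rho^-$ (symmetrically $\rho^+\le\rho\le\rho_2^*$), the selector is whether the trajectory stays on the convex metastable branch: if $\rho_2(0)\le\rho^-$ the previous pure-phase argument applies locally and yields a metastable pure state; if $\rho_2(0)>\rho^-$, the trajectory can probe the non-convex region where $\mathcal F$ drops below the metastable pure-state value, forcing convergence to $(\rho_1^*,\rho_2^*)$.

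The main obstacle is justifying the invariance of the small-perturbation subregion in the metastable case, namely, showing that the boundary $\rho_2=\rho^-$ is not crossed outward under the hypothesis $\rho_2(0)\le\rho^-$, and conversely that it is necessarily crossed (or at least that $(\rho_1^*,\rho_2^*)$ is reached) under the hypothesis $\rho_2(0)>\rho^-$. This boils down to controlling the sign of $\dot\rho_2$ on this boundary, i.e.\ of $f(\rho_1|\rho^-)$, using that $\rho^-$ is precisely the inflection point of $f$, and the analogous delicate sign calculus on $\rho_1=\rho^+$ for the liquid side. Ruling out $\omega$-limits at the unstable pure-phase equilibria on the line $\rho_1=\rho$ (or $\rho_2=\rho$) inside the spinodal zone requires a parallel argument: any such limit point would be approached along trajectories along which $\dot{\mathcal F}$ stays strictly negative on a neighborhood, contradicting convergence.
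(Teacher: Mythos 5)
Your overall strategy is sound and lands on the same underlying mechanism as the paper: dissipation of the total free energy plus the sign of the relative free energies $f(\cdot|\cdot)$, controlled by the convexity of $f$ on $(0,\rho^-]\cup[\rho^+,3)$, is what separates the cases. The packaging differs, though. You invoke LaSalle's principle on $\mathcal F$ itself and then do a direct sign analysis of the vector field ($\dot\rho_1$ carries the sign of $f(\rho_2|\rho_1)$, etc.), whereas the paper builds, for each target equilibrium, an explicit shifted Lyapunov function: $G(\mathbf r)=\mathcal F(\mathbf r)-\mu^*(\alpha_1\rho_1+\alpha_2\rho_2)+p^*(\alpha_1+\alpha_2)$ for the coexistence state, and $G^{(1)}(\mathbf r)=f(\rho_1)-\mu(\rho)\rho_1+p(\rho)$ for the pure and metastable-within-phase states. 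These are exactly the relative free energies centered at the expected attractor; they vanish with vanishing gradient there, and $\tfrac{d}{dt}G^{(1)}=-(\mu(\rho_1)-\mu(\rho))(\rho-\rho_1)(\rho-\rho_2)f(\rho_2|\rho_1)\le 0$ follows from the monotonicity of $\mu$ on the convex branch together with the same convexity inequality $f(\rho_2|\rho_1)\ge 0$ you use. What the paper's choice buys is a function adapted to the \emph{set} of equilibria $\{\rho\}\times(\rho,\rho^-)$ in the pure/metastable cases (where $\mathcal F$ alone does not single out the limit), at the price of one bespoke computation per case; what your LaSalle route buys is a uniform identification of the full equilibrium set (your derivation that $f(\rho_2|\rho_1)=f(\rho_1|\rho_2)=0$ forces $\mu(\rho_1)=\mu(\rho_2)$ and then $p(\rho_1)=p(\rho_2)$ is correct and is actually more explicit than anything in the paper's proof).

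Two caveats. First, the obstacles you flag at the end — invariance of $\{\rho_2\le\rho^-\}$ under the hypothesis $\rho_2(0)\le\rho^-$, and excluding $\omega$-limit points on the invariant lines $\{\rho_1=\rho\}$, $\{\rho_2=\rho\}$ in the spinodal case — are genuine, but the paper's own proof does not address them either; the invariance of $\{\rho_2\le\rho^-\}$ can be closed by checking $\dot\rho_2\le 0$ on that boundary, i.e.\ $f(\rho_1|\rho^-)\ge 0$ for $\rho_1<\rho^-$, which is exactly the sign calculus the paper carries out later (Proposition \ref{prop:graphf} and the invariant-domain arguments of Section \ref{sec:dom_hyp}). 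Second, your parenthetical justification that the pure states in the spinodal zone are repelling is sign-confused: under \eqref{eq:H1}, $f(\rho_2|\rho_1)>0$ gives $\dot\rho_1>0$, which pushes $\rho_1$ \emph{toward} $\rho$, so as written it does not establish repulsion; near $\rho_1\approx\rho$ in the concave region the relevant observation is rather that $f(\rho_2|\rho_1)<0$ for $\rho_2$ close to $\rho_1$, so that $\rho_1$ moves away from $\rho$. This is a local slip, not a structural flaw.
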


\begin{proof}
  We look for Lyapunov functions for each case.
  
  {\bf Spinodal zone. } Let us define
  \begin{equation}\label{eq:SpinLyap}
    G(\mathbf{r}) = \alpha_1f(\rho_1) + \alpha_2f(\rho_2) -
    \mu^*(\alpha_1\rho_1+\alpha_2\rho_2) + p^*(\alpha_1+\alpha_2),
  \end{equation}
where $\mu^*$ and $p^*$ are defined by the Maxwell area rule
\eqref{eq:mus_ps} on the chemical potential $\mu$,
 and $\alpha_i$ are the functions of
$\rho,\rho_1,\rho_2$
given by \eqref{eq:alpha_i}. The intensive version of the Gibbs
relation \eqref{eq:gibbs_intensive} implies that $G(\mathbf{r}^*)=0$
{where $\mathbf{r}^*=(\rho,\rho_1^*,\rho_2^*)^T$}. 
Straightforward computations show that $\nabla_{\mathbf{r}}G
=\nabla_{\mathbf{r}}\mathcal F - (\mu^*,0,0)^T$,
this implies that $\nabla_{\mathbf{r}}G(\mathbf{r}^*)=0$, and
that $\dot G(\mathbf{r}(t))\le 0$ since the free energy is
dissipated. Hence the function $G$ is a Lyapunov function on the
admissible domain defined by \eqref{eq:H1}, 
which contains the unique equilibrium $\mathbf{r}^*$, see Figure
\ref{fig:SpinodalBasin}.

{\bf Pure stable states} The expected equilibrium states are now 
$\bar{\mathbf{r}}=(\rho,\rho,\rho_2)$ for any $\rho_2\ge\rho$, see
Figure \ref{fig:PureBasin}. We introduce the function 
\begin{equation}\label{eq:PureLyap}
  G^{(1)}(\mathbf{r}) = f(\rho_1) - \bar\mu\rho_1 +
  \bar p,
  \qquad\mbox{where}\quad \bar\mu =\mu(\rho),\, \bar p=p(\rho).
\end{equation}
Once again, the Gibbs relation leads to $G^{(1)}(\bar{\mathbf{r}}) =
0$ for any equilibrium $\bar{\mathbf{r}}$, and we easily have 
$\nabla_{\mathbf{r}}G^{(1)}(\bar{\mathbf{r}})=0$.
 Now we have
 $$
 \dfrac{d}{dt}G^{(1)}(\mathbf{r}(t)) =
 -\big(\mu(\rho_1)-\mu(\rho)\big)(\rho-\rho_1)(\rho-\rho_2)
 f(\rho_2|\rho_1).
 $$
Since $\rho_1\le\rho\le\rho_2$ and $\mu$ is nondecreasing on
$]0,\rho_1^*]$ the right-hand side in the preceding relation is
nonpositive
if $f(\rho_2|\rho_1)\ge 0$. But using again the Gibbs relation and
\eqref{eq:relat_f}, this can be rewritten
\begin{equation}\label{eq:IneqConv}
  f(\rho_2) - f(\rho_1) \ge \mu(\rho_1)(\rho_2-\rho_1).
\end{equation}
This convexity inequality holds true for all $(\rho_1,\rho_2)$
such that $\rho_1\le\rho_1^*$, by the very definition of $\rho_1^*$
(see the definition~\eqref{eq:convhull} of $F^{**}$). Since
we consider pure liquid states, one has $\rho_1\le\rho\le\rho_1^*$.
Hence $G^{(1)}$ is dissipated in time, leading to the conclusion.

{\bf Metastable states} 
Two types of equilibria are encountered in this situation, with two
distinct attraction basins, 
see Figure \ref{fig:MetaBasin}. 
\begin{itemize} 
\item the metastable basin, which appears with \textbackslash\textbackslash\textbackslash\ hatches  in the figure,
  corresponds to perturbation of a given state within the same
  phase. It is actually an attraction basin, because the function
  $G^{(1)}$ defined above
  \eqref{eq:PureLyap} is also a Lyapunov function in this domain.
  Indeed the convexity argument still holds true for any 
  $(\rho_1,\rho_2)$ such that $\rho_2\le\rho^-$, since $f$ is convex
  below $\rho^-$.
\item the unstable basin, corresponding to perturbations of the
  state by the other phase ($///$ hatched zone). 
  This basin is governed by the same 
  Lyapunov function as for the spinodal zone \eqref{eq:SpinLyap}.
\end{itemize} 
\end{proof}

\begin{figure}
  \begin{center}\resizebox{7cm}{!} {\input 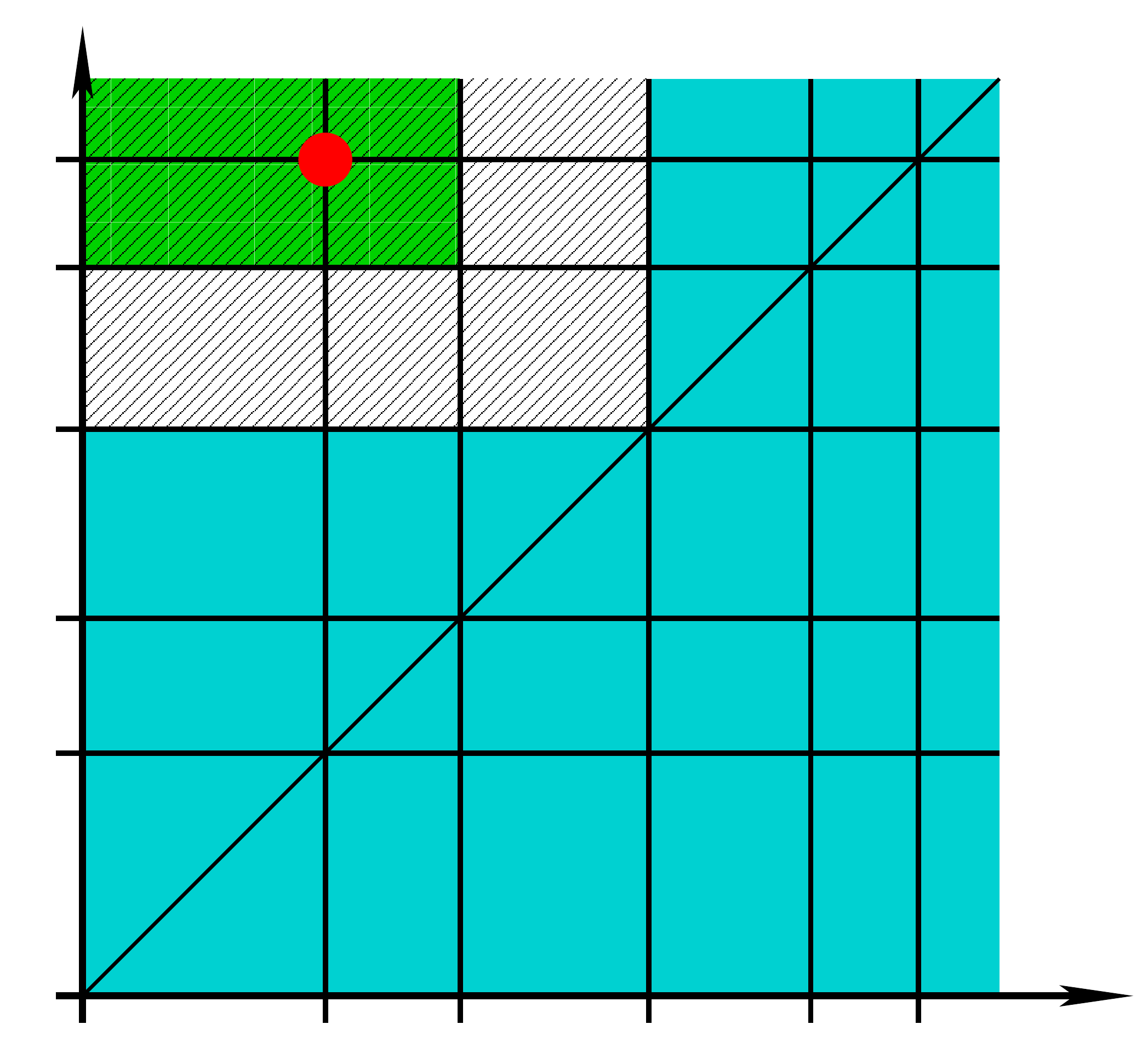_t}
  \end{center}
  \caption{Spinodal states. The blue area refers to nonattainable
    states according to \eqref{eq:H1}. The attraction basin is the
    hatched area (///). The unique attraction point $(\rho_1^*,\rho_2^*)$ appears in red.
    The green zone corresponds to the invariant hyperbolicity
    region (see Section \ref{sec:dom_hyp}).}
  \label{fig:SpinodalBasin}
\end{figure}

\begin{figure}
  \begin{center}
    \resizebox{6cm}{!} {\input 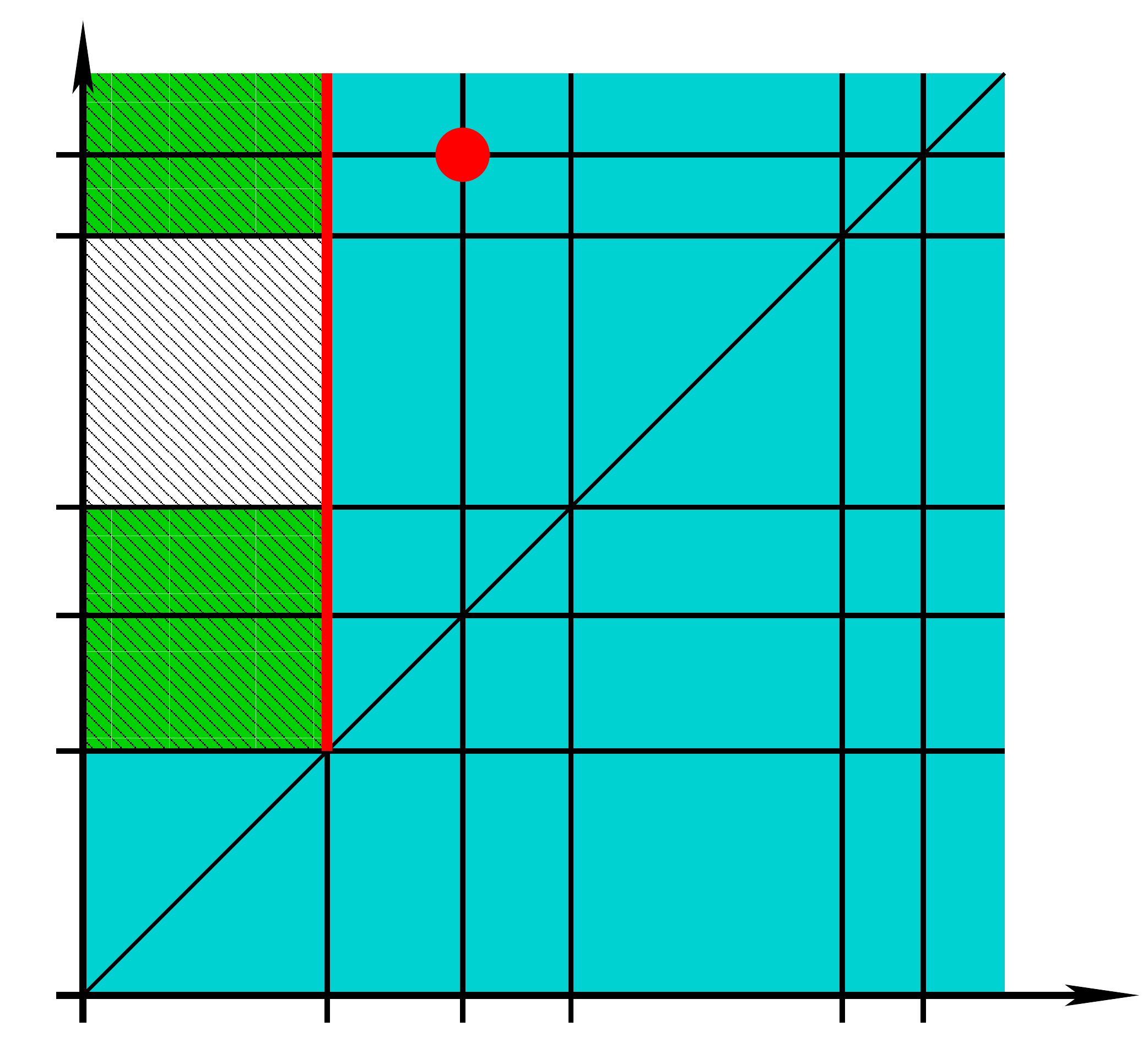_t}
  \end{center}
  \caption{Pure gaseous states.
    The blue area refers to nonattainable
    states according to \eqref{eq:H1}. The attraction basin is the
    hatched area (\textbackslash\textbackslash\textbackslash). The set of attraction points appears
    in red.
    The two green zones correspond to the invariant hyperbolicity
    regions (see Section \ref{sec:dom_hyp}).}
  \label{fig:PureBasin}
\end{figure}

\begin{figure}
  \begin{center}
    \resizebox{6cm}{!} {\input 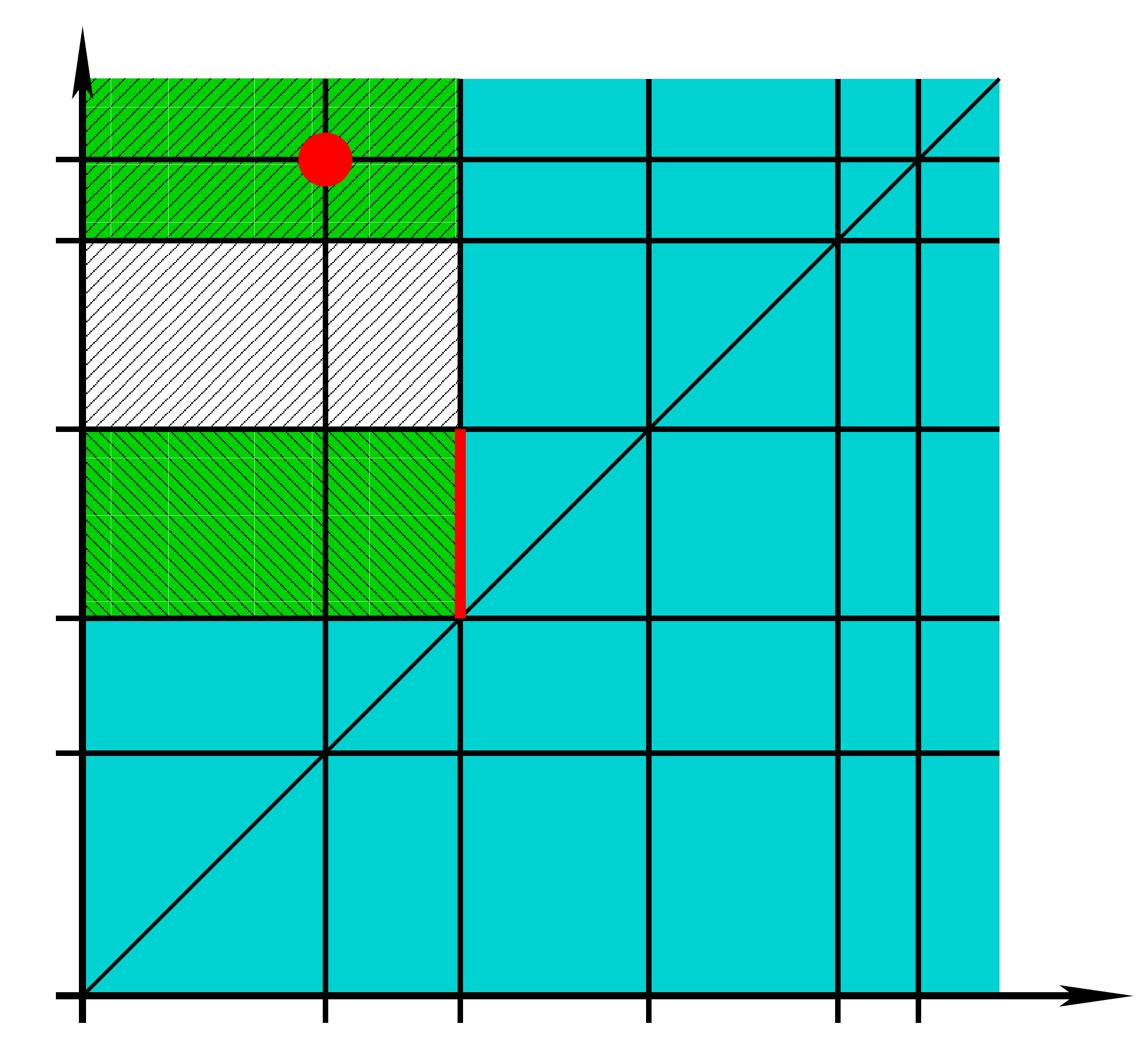_t}
  \end{center}
  \caption{Pure gaseous metastable states. The blue area refers to nonattainable
    states according to \eqref{eq:H1}. The attraction points appear
    in red. The corresponding attraction basins are the
    hatched areas.  For any state belonging to the hatched area
    (\textbackslash\textbackslash\textbackslash),
    the attraction points are metastable vapor sates such that $\{
    \rho_1=\rho\}$. For any state belonging to the hatched area (///),
    the attraction point is the coexistence state $(\rho_1^*,\rho_2^*)$.
    The two green zones correspond to the invariant hyperbolicity
    regions (see Section \ref{sec:dom_hyp}).}
  \label{fig:MetaBasin}
\end{figure}

\begin{Rem}
This is a formalization of the remark in Landau \& Lifshitz \cite[\S 21]{Landau}
\textit{``[...]we must distinguish between {\bf metastable} and
          {\bf stable} equilibrium states. 
          A body in a metastable state {\bf may not
          return to it} after a sufficient deviation'' }
\end{Rem}

\subsection{Numerical illustrations of the dynamical system behavior}
\label{sec:resol-numer}

We provide in this paragraph some numerical examples to illustrate
the behavior of the dynamical system~\eqref{eq:syst_dyn2}.
First we give the characteristic values of the reduced van der Waals
EoS at a fixed nondimensionalized temperature $T=0.85$. 
Figure \ref{fig:mu_T=0.85} presents the corresponding isothermal
curve in the $(\rho,\mu)$ plan.

The densities $\rho^-$ and $\rho^+$ correspond to the extrema of the
chemical potential.
The densities $\rho_1^*$ et $\rho_2^*$ are obtained by the Maxwell's
equal area rule construction on the chemical
potential~\eqref{eq:maxwell_mu} or equivalently by solving
\eqref{eq:eq_mu}-\eqref{eq:eq_p}.
  
\begin{figure}[h]
  \centering
  \includegraphics[width=12cm]{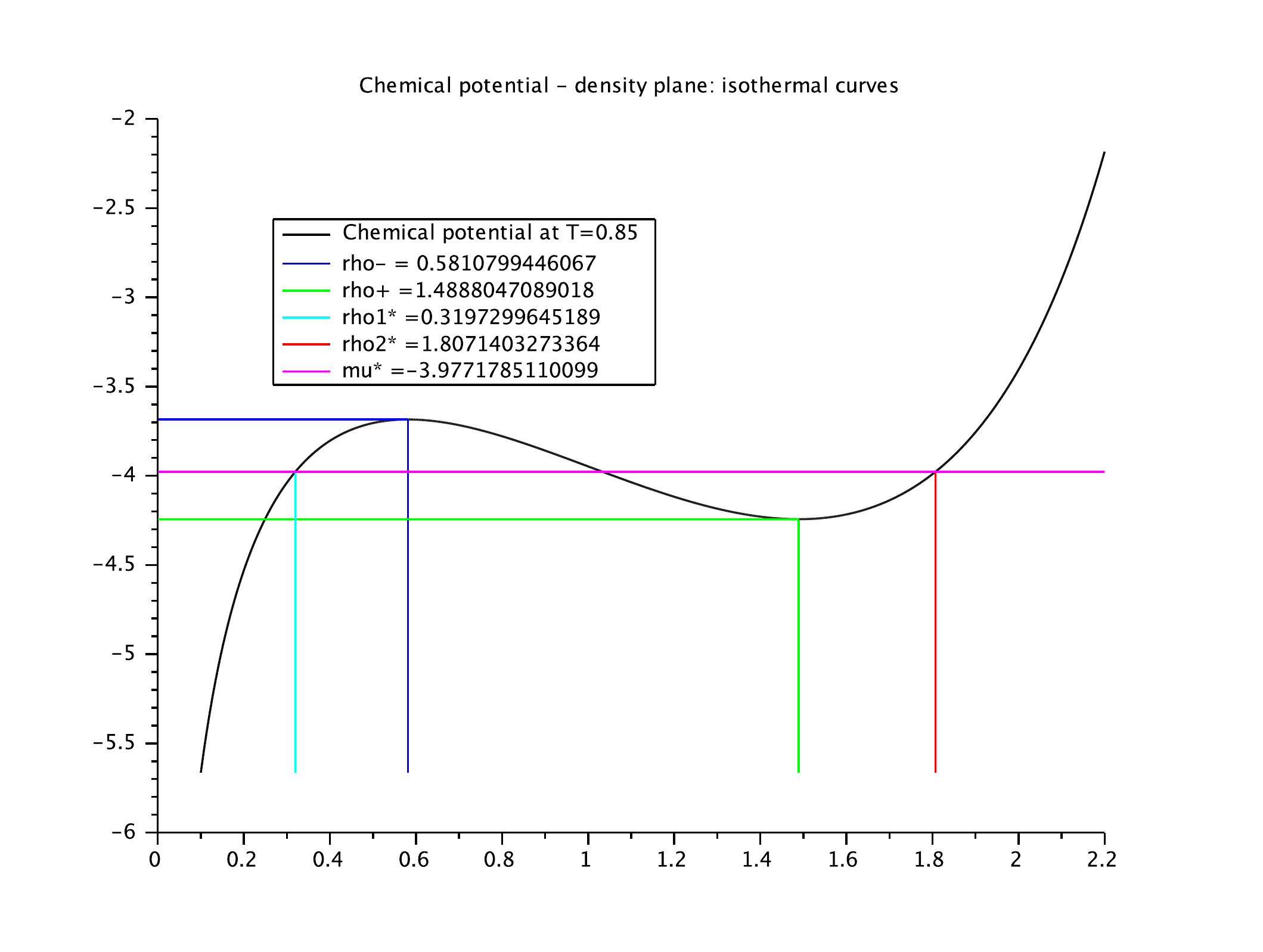}
  \caption{Isotherm curve in the $(\rho,\mu)$ plan at $T=0.85$.}
  \label{fig:mu_T=0.85}
\end{figure}

Table \ref{tab:carac} contains the values of the characteristic
densities and the corresponding values of the pressure and chemical
potential at $T=0.85$.
\begin{table}[h]
  \centering
  \begin{tabular}{|l|c|c|}
    \hline
    $\rho$ & $p(\rho)$ & $\mu(\rho)$\\
    \hline
    $\rho^- = 0.5810799446067$ & $0.62055388470356498 $ & $-
    3.68447967881140137$\\
    \hline
    $\rho^+ =1.4888047089018$ & $0.04962960899844759$ & $-
    4.24339302065563029$\\ 
    \hline
    $\rho_1^* =0.31972996451885$ & $0.504491649787487$ &
    $- 3.97717851100986$\\
    \cline{1-1} 
    $\rho_2^* =1.8071403273364$ & &\\ 
    \hline
  \end{tabular}
  \caption{Characteristic densities at $T=0.85$ and corresponding
    pressure and chemical potential values}\label{tab:carac}
\end{table}
  
Using a Backward Differentiation Formula (provided by Scilab for
stiff problem), we provide numerical illustrations of the attraction
basins depicted in Theorem~\ref{th:attraction}.
To do so, we fix 50 initial values of $\rho(0)$ in a given interval and
set the partial densities $\rho_i(0)$ as perturbations of $\rho(0)$.
Then the dynamical system is solved for a final time $T_f=10^3$ with a
time step set to $10^{-3}$s.
We provide the graphs of the quantities $\alpha_1(T_f)$,
$\rho_i(T_f)$ and $\alpha_1(T_f)p(\rho_1(T_f)) +
\alpha_2(T_f)p(\rho_2(T_f))$, plotted with respect to the density
$\rho(0)$, for each one of the 50 initial conditions
$(\rho(0),\rho_1(0),\rho_2(0))$.
The mixture pressure profile is compared with the classical van der
Waals pressure curve and the Maxwell line.

\textbf{Metastable states}: 
The initial density $\rho(0)$ takes on 50 values in $[\rho_1^*,
\rho^-]$.
For this computation, we set $\rho_1(0) = \rho(0)-0.1$ and
$\rho_2(0)=\rho(0)+0.2$ so that we can observe the perturbation within
and outside the metastable vapor zone.
According to Figure~\ref{fig:sci_pert_meta}-top left,
the mixture pressure presents two different parts: the left part (for
$\rho<0.45$)
remains on the van der Waals pressure curve and a second part
coincides with the Maxwell line. The first part corresponds to the
perturbation of the metastable vapor state within the phase, while the
right part corresponds to the perturbation outside the metastable
vapor, that is when $\rho^-\leq\rho_2(0)\leq \rho_2^*$.
One can check on the volume fraction curve (see
Figure~\ref{fig:sci_pert_meta}-top right) that $\alpha_1(T_f)=1$  for
$\rho(0)\leq 0.45$, that is only the phase 1 is present. Then
$0<\alpha_1<1$ and the corresponding final partial densities
$\rho_{i}(T_f)$, $i=1,2$, coincide with the densities $\rho_1^*$ and
$\rho_2^*$ respectively, which explains that the pressure matches
with the Maxwell line.
\begin{figure}[h]
  \centering
  \includegraphics[width=6cm]{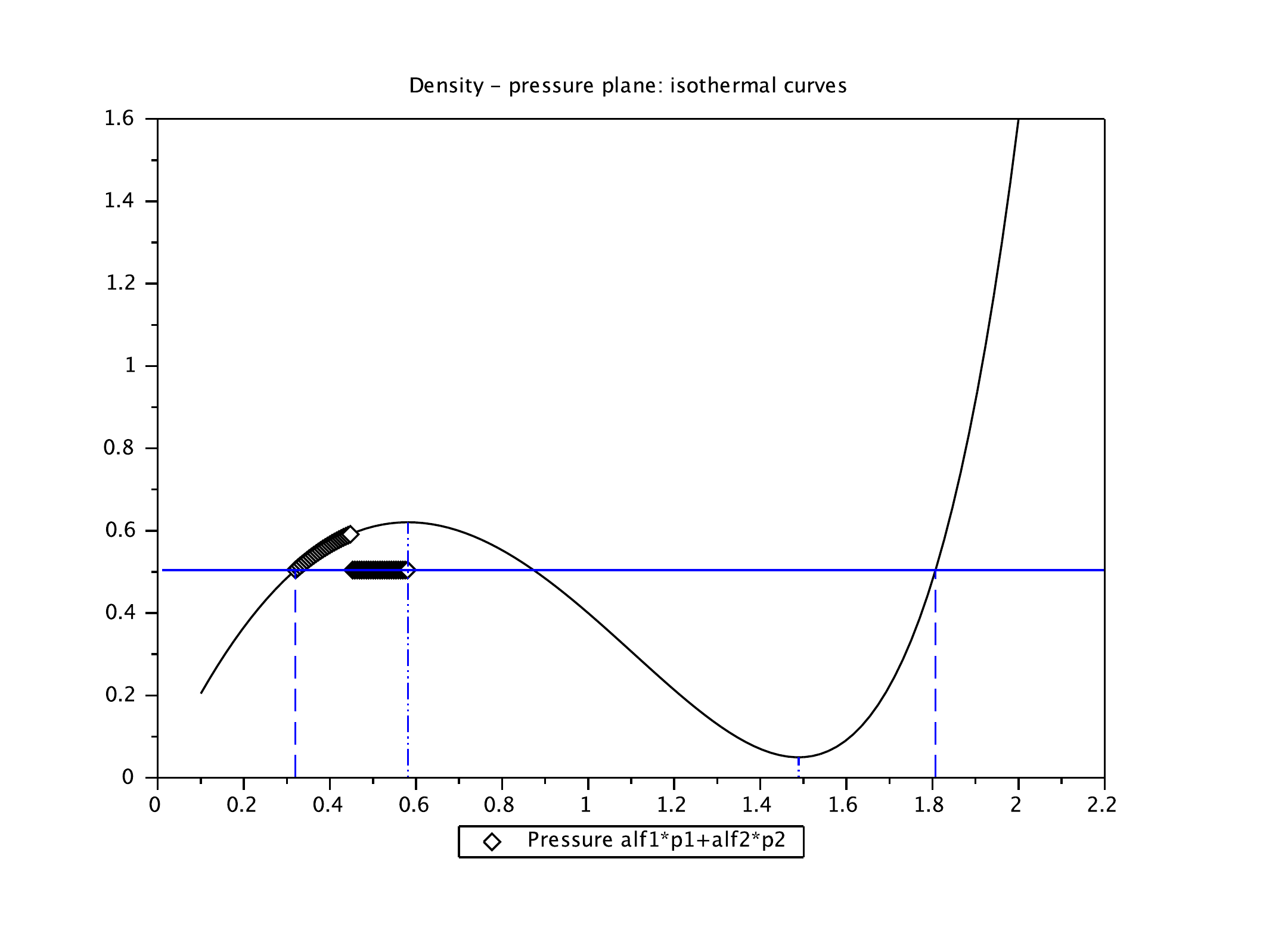}
  \includegraphics[width=6cm]{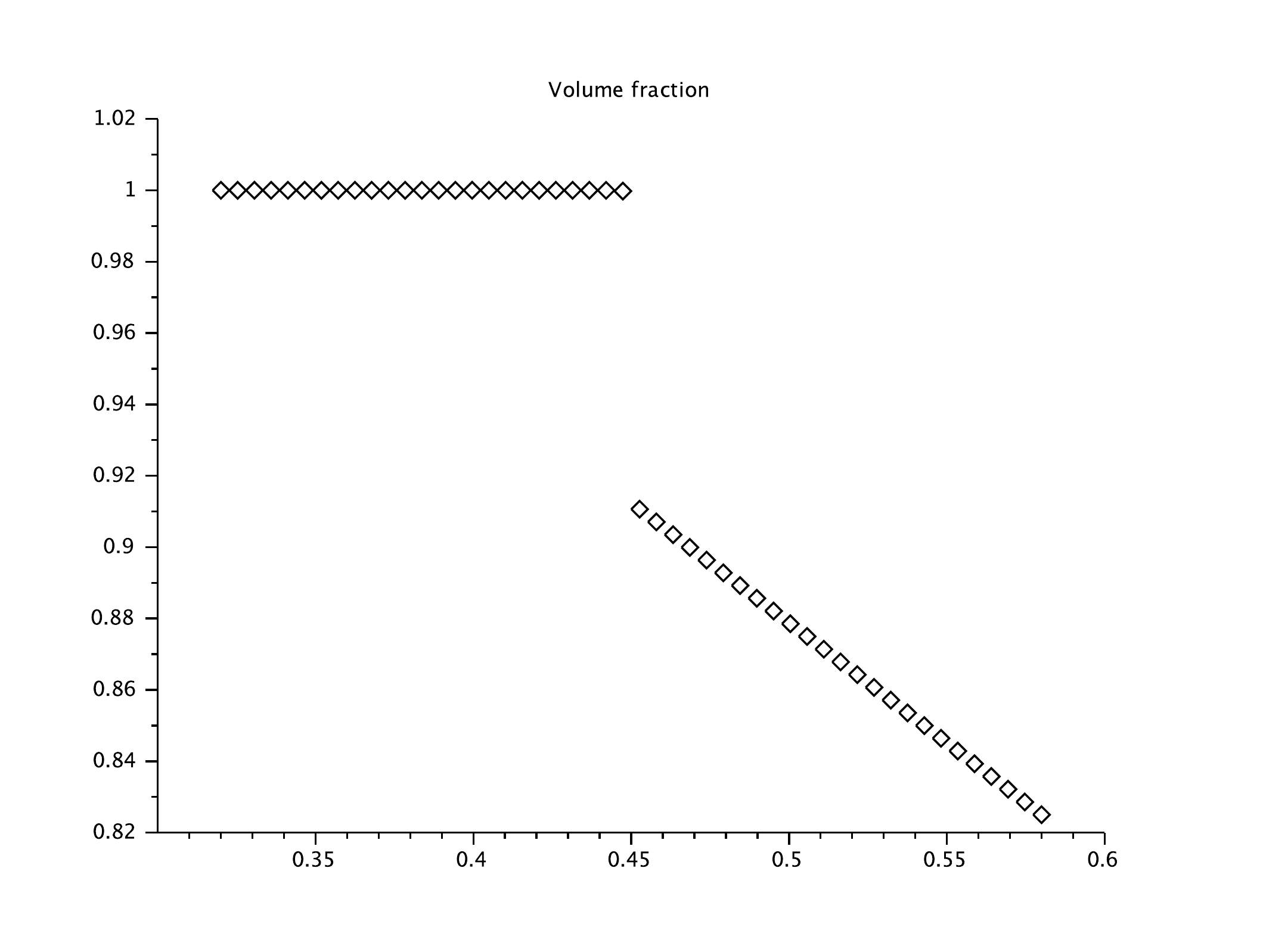}
  \includegraphics[width=6cm]{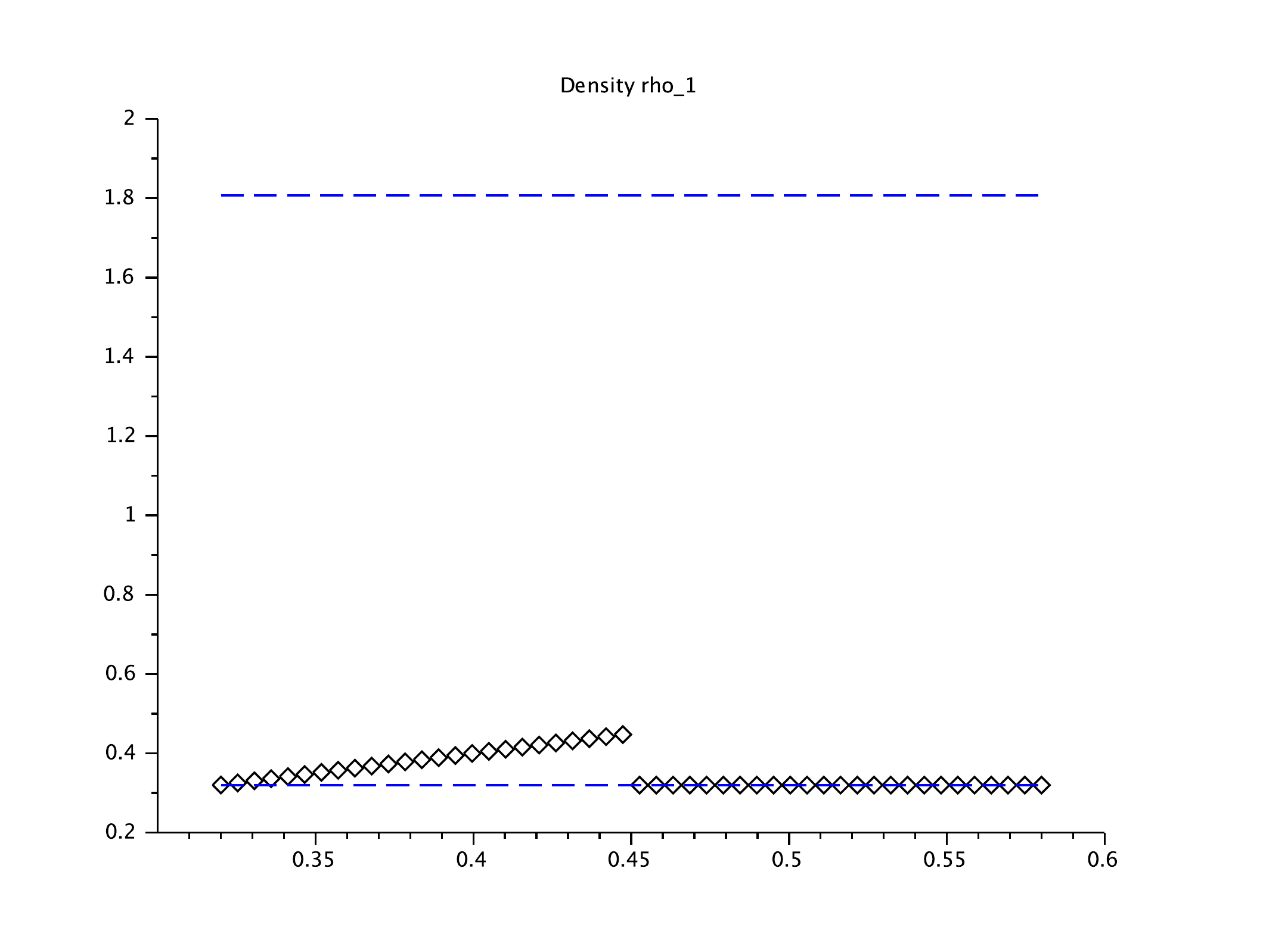}
  \includegraphics[width=6cm]{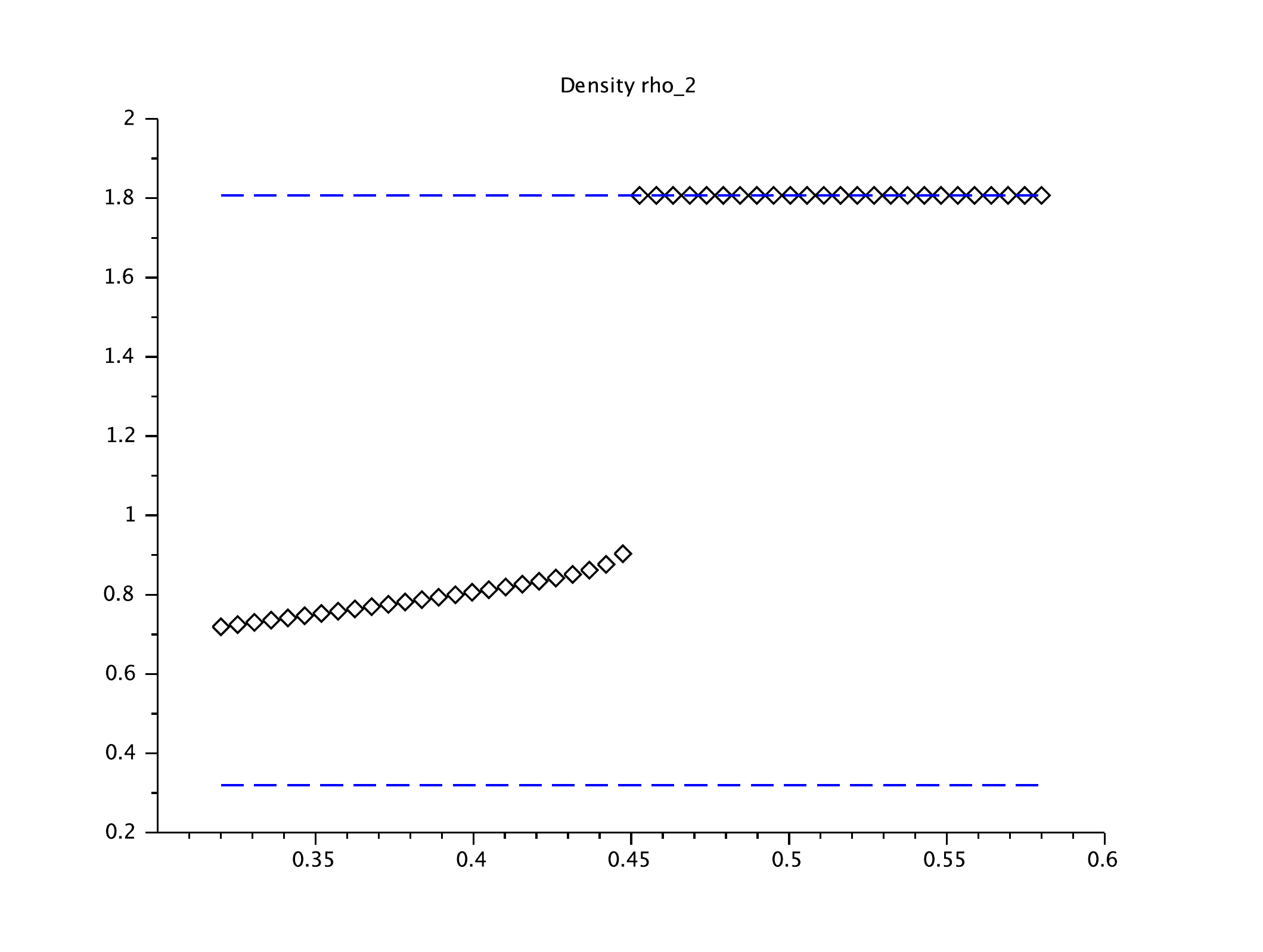}
  \caption{Numerical illustration of a perturbation within the
    metastable vapor zone. The initial density $\rho(0)$ takes on 50 values in
  $[\rho_1^*,\rho^-]$ and $\rho_i(0)$ are perturbations of
  $\rho(0)$ under the assumption~\eqref{eq:H1}. The top-left figure
  corresponds to the mixture pressure at final time $T_f=10^3$. For
  densities $\rho(0)<0.45$, the pressure coincides with the reduced van
  der Waals pressure, while for $\rho(0)>0.45$ it matches with the
  Maxwell line. One notices that the volume fraction $\alpha_1(T_f)$
  is either constant equal to 1 (for $\rho(0)<0.45$), correspondong to
  pure phase 1, or takes on values in ]0,1[, which means that the system
  reached the coexistent state.}
  \label{fig:sci_pert_meta}
\end{figure}

\textbf{Perturbation of the density $\rho$ in the whole domain}:
Figure~\ref{fig:sci_meta} corresponds to an initial density $\rho(0)$
which takes on 50
values between 0.2 and 1.8, while $\rho_1(0) =\rho(0)-0.1$ and
$\rho_2(0) = \rho(0) +0.1$.
One observes that an initial perturbation of the density $\rho(0)$ leads
to final states which belong to either pure vapor/liquid states,
including metastable states, or the coexistent state. Hence the mixture
pressure coincides with the admissible branches of the van der Waals
pressure curve or with the Maxwell line.

However the convergence is not obvious for $\rho(0)$
close to $\rho^-$ on the left (resp. to $\rho^+$ on the right). 
This can be observed in Figure \ref{fig:sci_meta} top-right (plot of the volume fraction), 
where the parts on the left and on the right
should be straight lines. 
Actually, since for $\rho$ close to $\rho^-$ the perturbation chosen is in the spinodal zone, 
we expect the equilibrium to be on the Maxwell line, 
which is not the case. We suppose that the final time is not large enough to ensure the actual convergence.
As a matter of a fact, we observed that the requested time to reach convergence 
is larger in the metastable zone than in the spinodal one.

\begin{figure}[h]
  \centering
  \includegraphics[width=6cm]{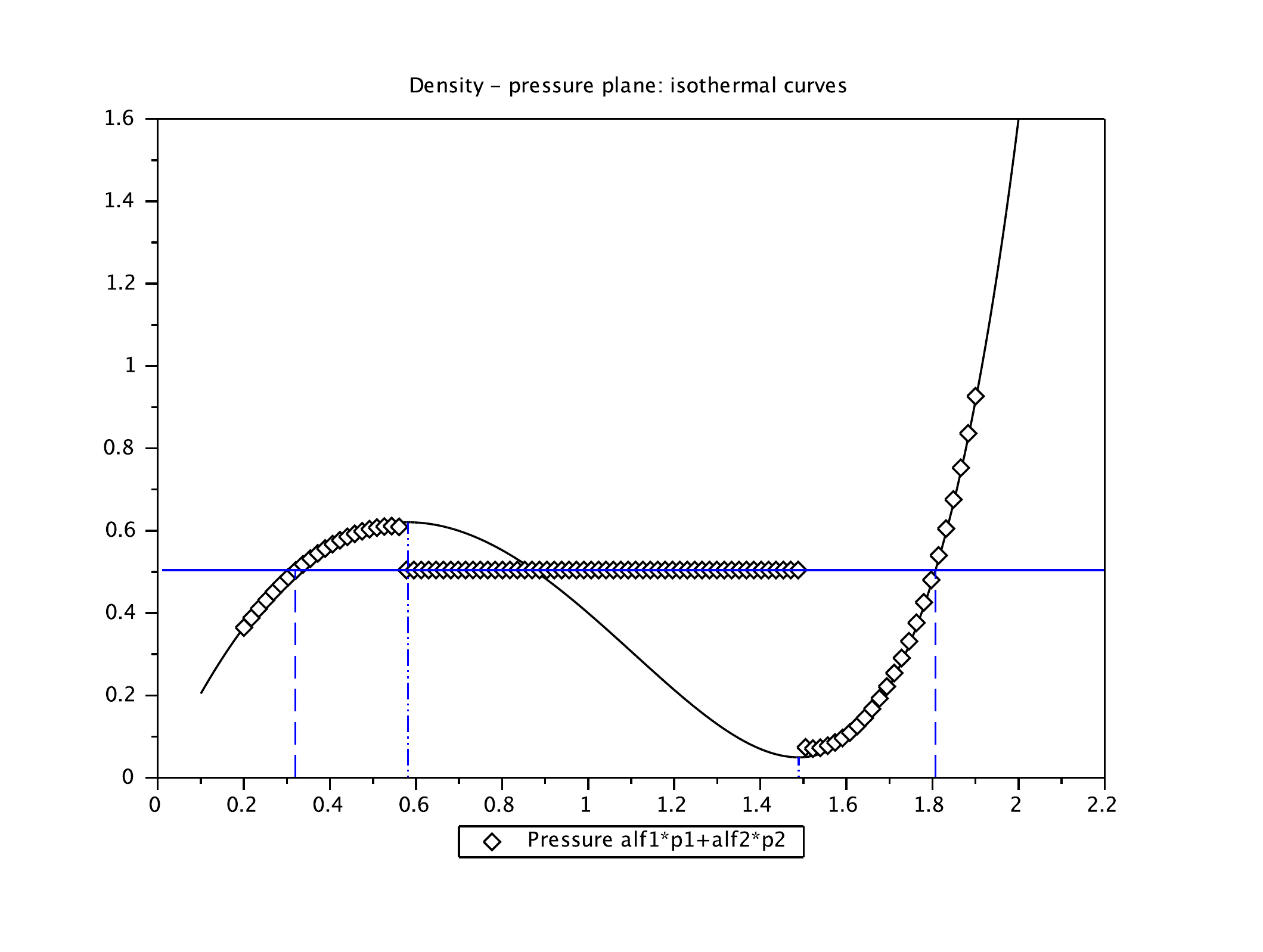}
  \includegraphics[width=6cm]{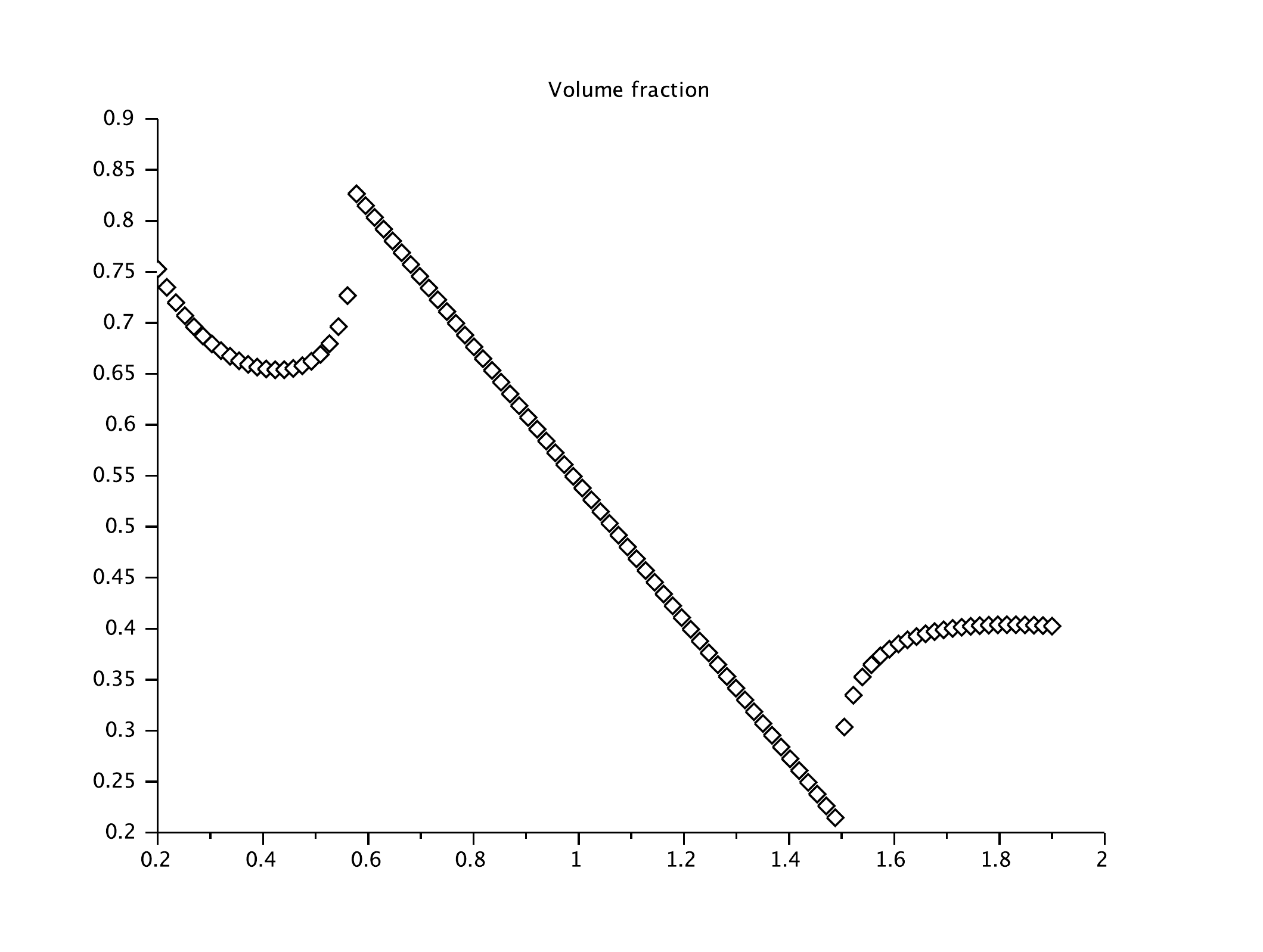}
  \includegraphics[width=6cm]{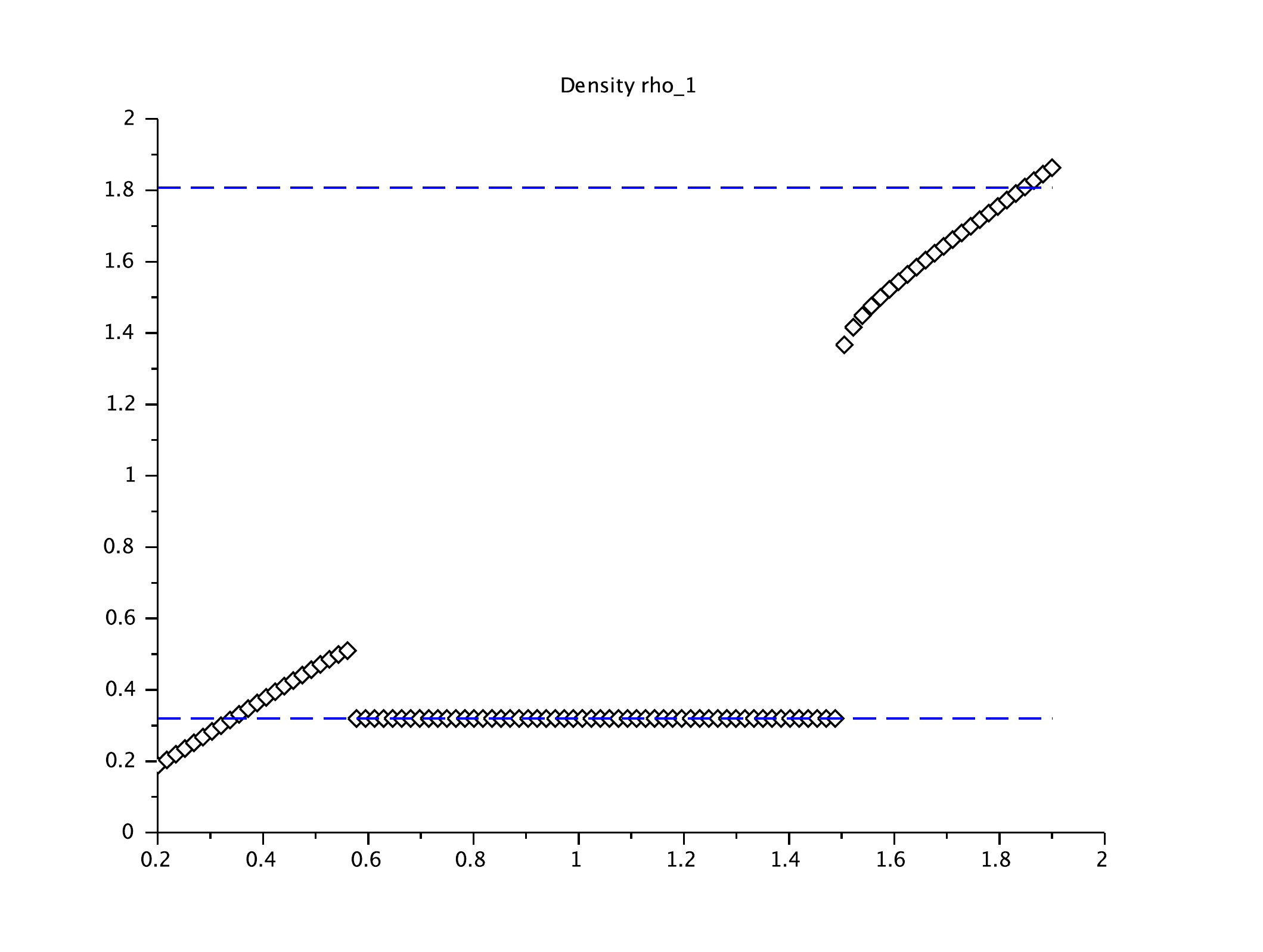}
  \includegraphics[width=6cm]{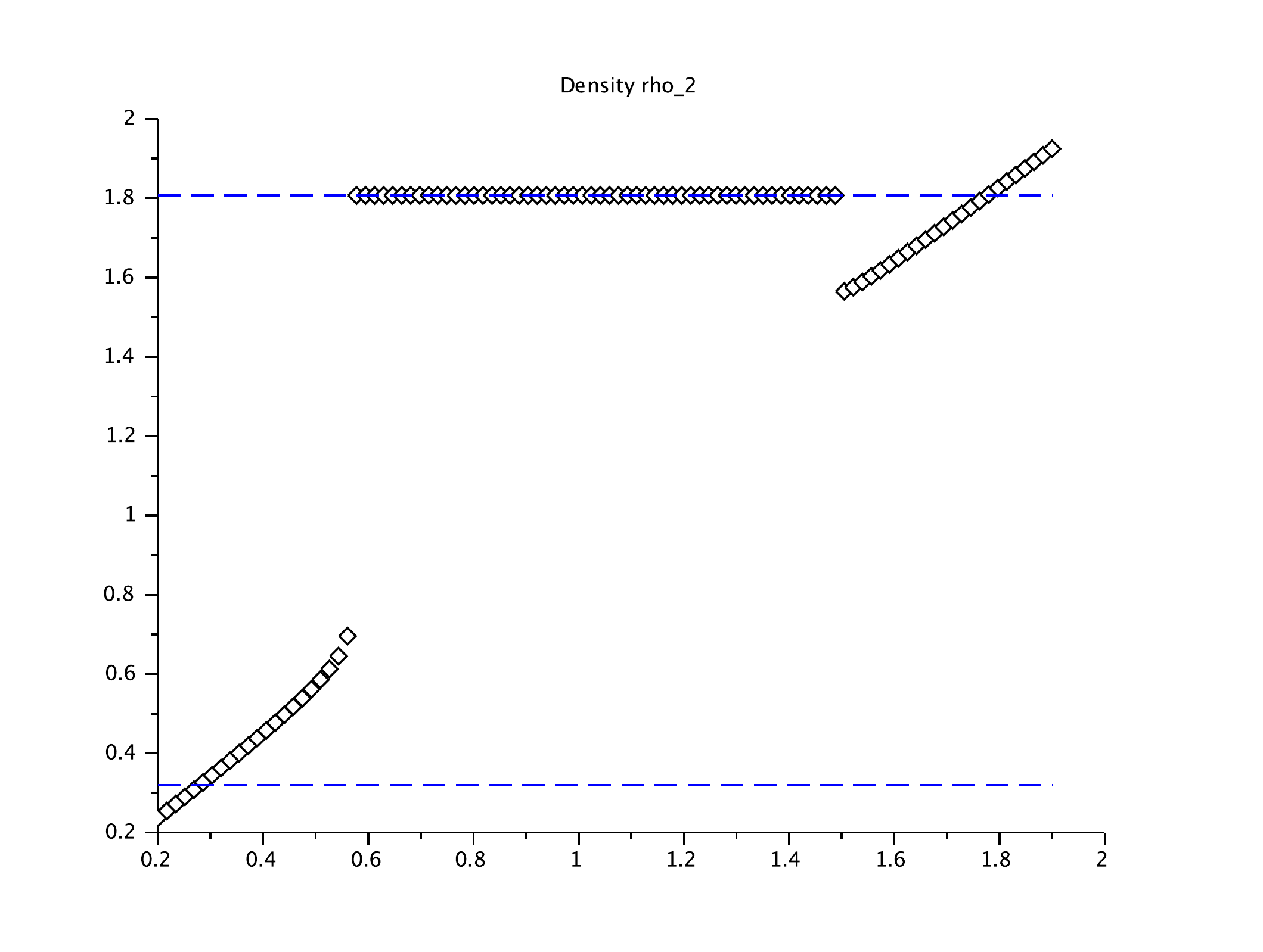}
  \caption{Numerical illustration of a perturbation of $\rho(0)$ in
    the whole domain. The mixture pressure $\alpha_1(T_f)p(\rho_1(T_f)) +
    \alpha_2(T_f)p(\rho_2(T_f))$ coincides with the admissible branches
  of the reduced van der Waals pressure in the pure liquid/vapor
  states, including metastable state. In the spinodal zone it
  corresponds to the Maxwell line.}
  \label{fig:sci_meta}
\end{figure}
\section{The isothermal model}
\label{sec:model}

This section is devoted to the definition and study of a $4\times 4$ van
der Waals isothermal two-phase flow model.
Since we are interested in the modeling of phase
transitions with possible metasta\-ble states, the liquid-vapor flows that we consider are submitted to
strong thermodynamical perturbations. Hence we propose to depict the
dynamic of the flows by a compressible averaged model, namely Euler
type equations.

In order to 
model phase transitions, the hydrodynamic part of
the model is classically coupled with a relaxation source term which carries on the
mass transfer. Since we wish to take into account possible metastable states, 
the equilibria of the source term have to be either pure liquid/vapor states,
metastable states or the coexistence state given by
\eqref{eq:maxwell_mu}-\eqref{eq:eq_p}.
Hence we propose
a coupling between the dynamical system \eqref{eq:syst_dyn2}
introduced in the previous section and a modified version of the
isothermal two-phase model proposed in \cite{chalons09}.

After defining the model, we study several properties of the system, such as existence of a decreasing energy, hyperbolicity and Riemann invariants
for the homogeneous system. Notice at once that we have only partial results for hyperbolicity, as noticed before in the literature, because of the spinodal zone.
This leads to a formal study of invariant hyperbolicity domains in the last subsection.

\subsection{Definition of the model}
The basic isothermal Euler system  contains the balance equations accounting for the
conservation of the total mass and the total momentum of the two-phase
flow. We propose to extend this system with two equations describing the evolution
 of the partial densities $\rho_1$ and $\rho_2$ which are now functions  of time $t$
and space $x$. The two phases evolve with the same velocity $u$.
The momentum equation involves a pressure term which is the mixture pressure
$\alpha_1p(\rho_1)+\alpha_2p(\rho_2)$. Here $\alpha_1$ and $\alpha_2$ are given by \eqref{eq:alpha_i}
but for the sake of readability we skip this dependence in what follows.

The system we propose is the following
\begin{equation}
  \label{eq:model}
  \begin{cases}
    \p_t \rho + \p_x (\rho u) = 0,\\
    \p_t \rho_1 + \p_x(\rho_1 u
    )=-\dfrac{1}{\varepsilon}(\rho-\rho_1)(\rho-\rho_2)f(\rho_2|\rho_1),\\
    \p_t \rho_2 + \p_x(\rho_2 u
    )=
    \dfrac{1}{\varepsilon}(\rho-\rho_1)(\rho-\rho_2)f(\rho_1|\rho_2),\\
    \p_t (\rho u) + \p_x (\rho u^2 + \alpha_1 p(\rho_1) + \alpha_2 p
    (\rho_2)) = 0,
  \end{cases}
\end{equation}
where $\varepsilon$ is a relaxation parameter that determines the rate
at which the chemical potentials and pressures of the two phases reach
equilibrium.
The chemical potential $\mu$ and the pressure $p$ follow the van der
Waals model \eqref{eq:p_mu_vdw}. 
The source terms on the partial densities equations are exactly those of
\eqref{eq:syst_dyn2}, and involve the relative free energy
  $f(\rho_i|\rho_j)$ which is defined in~\eqref{eq:relat_f}.
System~\eqref{eq:model} is supplemented with initial
conditions on the velocity $u$ and on the densities $\rho$ and
$\rho_i$, $i=1,2$ satisfying the assumption \eqref{eq:H1}.

Combining the mass conservation
  equation~\eqref{eq:model}-1 and the
  equations on the partial densities $\rho_i$, $i=1,2$, one can compute the
  equation satisfied by the volume fraction $\alpha_1$
  \begin{equation}
    \label{eq:eq_alpha}
    \p_t \alpha_1 + u \p_x \alpha_1 =
    \dfrac 1
    \varepsilon\dfrac{(\rho-\rho_1)(\rho-\rho_2)}{\rho_1-\rho_2}
    \left[ \alpha_1 f(\rho_2|\rho_1) -\alpha_2
      f(\rho_1|\rho_2)\right].
  \end{equation}

From~\eqref{eq:model} one can also recover the two equations
on the partial masses $\alpha_i\rho_i$ and deal with a system of the classical form
\begin{equation*}
  \begin{cases}
    \p_t \rho + \p_x (\rho u) = 0,\\
    \p_t (\alpha_1\rho_1) + \p_x(\alpha_1\rho_1 u
    =\dfrac 1
    \varepsilon\dfrac{(\rho-\rho_1)(\rho-\rho_2)}{\rho_1-\rho_2}
    \left[ \alpha_1 \rho_2 f(\rho_2|\rho_1) - \alpha_2 \rho_1 f(\rho_1|\rho_2)\right],\\
    \p_t (\alpha_2\rho_2) + \p_x(\alpha_2\rho_2 u
    =-\dfrac 1
    \varepsilon\dfrac{(\rho-\rho_1)(\rho-\rho_2)}{\rho_1-\rho_2}
    \left[ \alpha_1 \rho_2 f(\rho_2|\rho_1) - \alpha_2 \rho_1 f(\rho_1|\rho_2)\right],\\
    \p_t (\rho u) + \p_x (\rho u^2 + \alpha_1 p(\rho_1) + \alpha_2 p
    (\rho_2)) = 0.
  \end{cases}
\end{equation*}

In the present paper we choose to focus on the system~\eqref{eq:model},
which allows us to define the densities $\rho_i$ even when $\alpha_i=0$, 
which is not the case in the last system.

\begin{Rem}
An interesting feature is that the system boils down to the classical
$p-$system in pure phases that is when $\alpha_1\alpha_2=0$, including the metastable regions.
\end{Rem}

\subsection{Hyperbolicity and entropy for the homogeneous system}
\label{sec:hyperbolicity}
We introduce the mechanical energy 
\begin{equation}
  \label{eq:nrj_math}
  \mathcal E (\rho, \rho_1, \rho_2, u) = \dfrac{\rho u^2}{2} +
  \mathcal F(\rho,\rho_1,\rho_2),
\end{equation}
where the total Helmoltz free energy $\mathcal F$ is defined by
\eqref{eq:F}. The first result we obtain is the decrease in time of this energy. 

\begin{Prop}
  The function $\mathcal E$, defined in \eqref{eq:nrj_math}, satisfies
  the following equation
  \begin{equation}
    \label{eq:comp_bal_law}
    \p_t \mathcal E + \p_x (u(\mathcal E + \alpha_1
    p(\rho_1) + \alpha_2 p(\rho_2)) \leq 0.
  \end{equation}
\end{Prop}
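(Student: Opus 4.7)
The plan is to split $\mathcal E$ into its kinetic and free-energy parts, differentiate each along the flow using the conservation equations, and reassemble the terms so that the exchange between mechanical work and thermodynamic work cancels, leaving only a non-positive source contribution inherited from the relaxation term.

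First I would treat the kinetic part. Using the mass equation \eqref{eq:model}$_1$ and the momentum equation \eqref{eq:model}$_4$ in the standard way (multiplying the momentum equation by $u$ and subtracting $u^2/2$ times mass conservation) gives
\[
\p_t\!\left(\tfrac{\rho u^2}{2}\right) + \p_x\!\left(\tfrac{\rho u^3}{2}\right) = -\,u\,\p_x\bigl(\alpha_1 p(\rho_1) + \alpha_2 p(\rho_2)\bigr).
\]
Adding $\p_x\bigl(u(\alpha_1 p(\rho_1)+\alpha_2 p(\rho_2))\bigr)$ on both sides produces a pure pressure-work term $\Pi\,\p_x u$, where $\Pi := \alpha_1 p(\rho_1)+\alpha_2 p(\rho_2)$.

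Next I would handle the free-energy part by chain rule. Setting $\mathbf r=(\rho,\rho_1,\rho_2)^T$ and denoting the relaxation sources in \eqref{eq:model} by $\mathbf s = (0,s_1,s_2)^T$, the transport equations for the densities give
\[
\p_t \mathbf r + u\,\p_x \mathbf r = -\mathbf r\,\p_x u + \mathbf s,
\]
so that
\[
\p_t \mathcal F + \p_x(u\mathcal F) = \bigl(\mathcal F - \mathbf r\cdot\nabla_{\mathbf r}\mathcal F\bigr)\p_x u + \nabla_{\mathbf r}\mathcal F\cdot \mathbf s.
\]
The key identity, which is the mixture analogue of the intensive Gibbs relation \eqref{eq:gibbs_intensive}, is
\[
\mathbf r\cdot\nabla_{\mathbf r}\mathcal F - \mathcal F = \Pi = \alpha_1 p(\rho_1)+\alpha_2 p(\rho_2).
\]
This can be verified by plugging the explicit gradient \eqref{eq:gradF1}, using $\alpha_1+\alpha_2=1$ and $\alpha_1\rho_1+\alpha_2\rho_2=\rho$ to reduce $\mathbf r\cdot\nabla_{\mathbf r}\mathcal F$ to $\alpha_1\rho_1\mu(\rho_1)+\alpha_2\rho_2\mu(\rho_2)$, and then applying $f(\rho_i)=\rho_i\mu(\rho_i)-p(\rho_i)$. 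Once this is done, the two $\Pi\,\p_x u$ contributions cancel exactly and we obtain
\[
\p_t\mathcal E + \p_x\bigl(u(\mathcal E+\Pi)\bigr) = \nabla_{\mathbf r}\mathcal F\cdot \mathbf s.
\]

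The final step is to check the sign of the source contribution. Using the expression \eqref{eq:gradF2} of $\nabla_{\mathbf r}\mathcal F$ together with the explicit form of $\mathbf s$, a direct computation (identical to the one already performed in Section~\ref{sec:constr-dyn-syst} for $\dot{\mathcal F}$) yields
\[
\nabla_{\mathbf r}\mathcal F\cdot \mathbf s = -\frac{1}{\varepsilon}\,\frac{(\rho-\rho_1)(\rho-\rho_2)}{\rho_1-\rho_2}\Bigl[\alpha_1\,f(\rho_2|\rho_1)^2 + \alpha_2\,f(\rho_1|\rho_2)^2\Bigr].
\]
Under assumption \eqref{eq:H1} we have $\rho_1\le\rho\le\rho_2$ and $\rho_1<\rho_2$, so $(\rho-\rho_1)(\rho-\rho_2)\le 0$ and $\rho_1-\rho_2<0$; the prefactor is therefore non-negative, and with the minus sign the whole right-hand side is $\le 0$. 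This yields the announced inequality \eqref{eq:comp_bal_law}. The main (minor) obstacle is verifying the Gibbs-type identity in step two; everything else is bookkeeping combined with the invariance of \eqref{eq:H1} proved in Proposition~\ref{prop:DynProp}.
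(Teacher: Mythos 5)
Your proof is correct and follows essentially the same route as the paper: split $\mathcal E$ into kinetic and free-energy parts, use the Euler/Gibbs-type identity $\nabla_{\mathbf r}\mathcal F\cdot\mathbf r=\mathcal F+\alpha_1 p(\rho_1)+\alpha_2 p(\rho_2)$ to cancel the pressure-work terms, and conclude from the sign of $\nabla_{\mathbf r}\mathcal F\cdot Q$ under \eqref{eq:H1}. Your closed-form expression for the source contribution is an equivalent rewriting of the paper's $\dot{\mathcal F}$ computation from Section~\ref{sec:constr-dyn-syst}, so nothing further is needed.
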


\begin{proof}
  On the one hand, using the notation~\eqref{eq:r}, one has
  \begin{equation*}
      \p_t \mathcal F(\mathbf{r}) = \nabla_{\mathbf{r}} \mathcal F(\mathbf{r})\p_t \mathbf{r}
       = -\nabla_{\mathbf{r}} \mathcal F (\mathbf{r})\p_x (u\mathbf{r})
       +\nabla_{\mathbf{r}} \mathcal F(\mathbf{r}) \cdot Q,
  \end{equation*}
  where $Q = \dfrac 1 \varepsilon \left(0, -(\rho-\rho_1)(\rho-\rho_2)f(\rho_2|\rho_1) ,(\rho-\rho_1)(\rho-\rho_2)
  f(\rho_1|\rho_2) \right)^t $.
  Hence it comes
  \begin{equation*}
    \p_t \mathcal F(\mathbf{r}) = -u \p_x \mathcal F(\mathbf{r}) - \nabla_{\mathbf{r}} \mathcal
      F(\mathbf{r}) \cdot \mathbf{r}\p_x u +\nabla_{\mathbf{r}}
      \mathcal F(\mathbf{r})\cdot Q.
  \end{equation*}
  Now the expression of $\nabla{_\mathbf r} \mathcal F(\mathbf r)$
  given in \eqref{eq:gradF2} leads to
  \begin{equation*}
    \nabla{_\mathbf{r}} \mathcal F(\mathbf{r}) \cdot \mathbf{r} =
    \dfrac{1}{\rho_1-\rho_2 } \left( ((f(\rho_1) -f(\rho_2))\rho +
      \alpha_1 \rho_1 f(\rho_2|\rho_1) - \alpha_2 \rho_2 f(\rho_1|\rho_2)\right).
  \end{equation*}
  Accounting on the definition of the relative free energy
  \eqref{eq:relat_f}, it yields
  \begin{equation*}
    \nabla{_\mathbf r} \mathcal F(\mathbf r) \cdot \mathbf r =
    \mathcal F(\mathbf r) + \alpha_1p(\rho_1)+\alpha_2p(\rho_2).
  \end{equation*}
  Thus one has
  \begin{equation*}
    \begin{aligned}
      \p_t \mathcal F(\mathbf{r}) &= -u\p_x \mathcal F(\mathbf{r}) -
      (\mathcal F(\mathbf{r}) +
      \alpha_1p(\rho_1)+\alpha_2p(\rho_2))\p_x u +\nabla_{\mathbf{r}}
      \mathcal F(\mathbf{r})\cdot Q.\\
      &= -\p_x (\mathcal F(\mathbf{r}) u) - (\alpha_1p(\rho_1)+\alpha_2p(\rho_2))\p_x u +\nabla_{\mathbf{r}}
      \mathcal F(\mathbf{r})\cdot Q.
    \end{aligned}
  \end{equation*}
  On the other hand, a classical Euler type computation gets
  \begin{equation*}
    \p_t \dfrac{\rho u^2}{2} + \p_x \left( u \dfrac{\rho u^2}{2}
    \right) +u \p_x (\alpha_1p(\rho_1)+\alpha_2p(\rho_2))=0.
  \end{equation*}
Combining the previous two relations gives
  \begin{equation*}
    \p_t \mathcal E + \p_x (u(\mathcal
    E+\alpha_1p(\rho_1)+\alpha_2p(\rho_2)))= \mathcal
    F(\mathbf{r})\cdot Q\leq 0,
  \end{equation*}
where the final inequality follows again from the expression~\eqref{eq:gradF2} of $\nabla_{\mathbf{r}}\mathcal F(\mathbf{r})$.
\end{proof}

However, the function $\mathcal E$ is
not convex everywhere, so that it cannot be considered as a mathematical entropy. Indeed 
$\mathcal E$ is convex where $\mathcal F$ is, and we have the following result.

\begin{Thm}
  \label{thm:convex_F}
  The total Helmholtz free energy $\mathcal F$ defined by \eqref{eq:F} is convex
  for $\rho\in]0,3[$, $u\in\R$ and
  \begin{itemize}
  \item $(\rho_1,\rho_2)\in (0,\rho) \times\big((\rho,\rho^-)\cup
    (\rho^+,3)\big)$, if $\rho \le \rho^-$
  \item $(\rho_1,\rho_2)\in (0,\rho^-) \times (\rho^+,3)$, if $\rho \in
    (\rho^-,\rho^+)$
  \item $(\rho_1,\rho_2)\in \big((0,\rho^-)
    \times(\rho^+,\rho)\big)\times (\rho,3)$, if $\rho\geq\rho^+$
  \end{itemize}
\end{Thm}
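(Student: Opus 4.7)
The plan is to recognize that $\mathcal F$, as defined in \eqref{eq:F}, is a sum of two perspective functions of the common EoS $f$. Introducing the natural extensive variables $V_i = \alpha_i$ and $M_i = \alpha_i\rho_i$ (with the total volume normalized to $V_1+V_2=1$, so that $\rho_i = M_i/V_i$), the definition \eqref{eq:F} rewrites as
\[
\mathcal F = F(M_1,V_1) + F(M_2,V_2),\qquad V_1+V_2 = 1,
\]
where $F(M,V) = V f(M/V)$ is the extensive free energy of Section~\ref{sec:therm-single-phase}. The whole argument then reduces to checking convexity of each summand separately on the appropriate region.

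For a single copy, a direct computation using $f'=\mu$ and \eqref{eq:fprim_rho} gives
\[
\nabla^2 F(M,V) = \frac{f''(\rho)}{V}\begin{pmatrix}1\\-\rho\end{pmatrix}\begin{pmatrix}1 & -\rho\end{pmatrix},
\]
a rank-one matrix whose sign is that of $f''(\rho) = \mu'(\rho) = p'(\rho)/\rho$. Hence $F$ is convex in $(M,V)$ precisely where the pressure is monotone, namely on the intervals $\rho\in(0,\rho^-]$ and $\rho\in[\rho^+,3)$ of the van der Waals isotherm; see Figure~\ref{fig:vdw_isoth}.

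Next I would dispatch the three cases of the statement, checking that each placement of $\rho$ combined with the ordering $\rho_1\leq\rho\leq\rho_2$ from \eqref{eq:H1} forces both $\rho_1$ and $\rho_2$ to sit in convex zones of $f$, i.e. to avoid the spinodal interval $(\rho^-,\rho^+)$. For instance when $\rho\leq\rho^-$ the constraint $\rho_1<\rho$ automatically places $\rho_1$ in $(0,\rho^-)$, and the stated condition on $\rho_2$ simply removes the spinodal part of $(\rho,3)$; the other two cases are similar. Once both $F(M_i,V_i)$ are convex on the relevant domains of $(M_i,V_i)$, their sum is convex in $(M_1,V_1,M_2,V_2)$, and restricting to the affine constraint $V_1+V_2=1$ preserves this. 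The kinetic term $\rho u^2/2 = (\rho u)^2/(2(M_1+M_2))$ is itself the perspective function of $u\mapsto u^2/2$ and is convex in $(\rho u,M_1,M_2)$, so adding it yields the convexity of the total mechanical energy $\mathcal E$ on the stated domains.

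The main care needed is that convexity is most naturally established in the extensive, conservative variables $(M_1,V_1,M_2,\rho u)$ associated with the system \eqref{eq:model}, rather than in the intensive tuple $(\rho,\rho_1,\rho_2,u)$. A direct Hessian in the intensive variables would give $\partial^2_\rho\mathcal F\equiv 0$ because the $\alpha_i$ are linear in $\rho$, and the coordinate change between the two descriptions is non-affine and does not preserve convexity, so the computation must be carried out in the extensive coordinates throughout, where each summand is a genuine perspective function of $f$.
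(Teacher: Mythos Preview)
Your argument is correct and substantially more careful than the paper's. The paper's proof is two lines: it observes that $\mathcal F=\alpha_1 f(\rho_1)+\alpha_2 f(\rho_2)$ is a convex combination and that $f$ is convex on $(0,\rho^-]\cup[\rho^+,3)$, then concludes. It does not specify in which variables convexity holds, nor does it address the dependence of the weights $\alpha_i$ on $(\rho,\rho_1,\rho_2)$.

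Your route through the perspective structure $F(M,V)=Vf(M/V)$ is the rigorous version of that sketch: the rank-one Hessian computation pins down exactly why the sign of $f''$ controls everything, and the sum-plus-affine-restriction step is clean. The most valuable part of your write-up is the final paragraph. Your observation that $\partial_\rho^2\mathcal F\equiv 0$ while $\partial_\rho\partial_{\rho_1}\mathcal F=f(\rho_2|\rho_1)/(\rho_1-\rho_2)^2$ is generically nonzero shows that the Hessian in the intensive coordinates $(\rho,\rho_1,\rho_2)$ is \emph{not} positive semidefinite; hence the theorem, read literally in those variables, is false, and the extensive reformulation you give is the right fix. The paper's proof glosses over this entirely.

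One small caveat: the tuple $(M_1,V_1,M_2,\rho u)$ you call ``conservative'' for \eqref{eq:model} does not match the system as actually written there, which evolves $(\rho,\rho_1,\rho_2,\rho u)$; moreover $\alpha_1=V_1$ obeys the non-conservative transport equation \eqref{eq:eq_alpha}. So while your convexity computation is correct in the variables you chose, tying it back to an entropy for the specific formulation \eqref{eq:model} would still require an additional change-of-variables argument that neither you nor the paper carries out.
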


\begin{proof}
  The function $\mathcal F$ is a convex combination of
  $f(\rho_1)$ and $f(\rho_2)$ where $f$ is the intensive Helmholtz free
  energy \eqref{eq:gibbs_intensive}. 
  By definition of 
  $\rho_-$ and $\rho_+$ (see Figure \ref{fig:vdw_isoth}), $f$ is convex
  on $(0,\rho^-] \cup [\rho^+,3)$, so the result follows.
\end{proof}

We turn now to the determination of the eigenvalues of the homogeneous
system~\eqref{eq:model}.
If we set $\mathbf{Y}=(\rho,\rho_1,\rho_2,u)$, for smooth solutions, the
homogeneous system can be written as
\begin{equation}
  \label{eq:syst_prim}
  \p_t \mathbf{Y} + A(\mathbf{Y}) \p_X \mathbf{Y} = 0,
\end{equation}
where the matrix $A(\mathbf{Y})$ is defined by
\begin{equation}
  \label{eq:AY}
  A(\mathbf{Y}) =
  \begin{pmatrix}
    u & 0 & 0 & \rho\\
    0 & u & 0 & \rho_1\\
    0 & 0 & u & \rho_2\\
    A_1(\mathbf{Y}) & A_2(\mathbf{Y}) & A_3(\mathbf{Y}) & u
  \end{pmatrix},
\end{equation}
and 
\begin{equation}
  \label{eq:ABC}
  \begin{aligned}
    A_1(\mathbf{Y}) &= \dfrac{p(\rho_1)-p(\rho_2)}{\rho(\rho_1-\rho_2)},\\
    A_2(\mathbf{Y}) &=
    \dfrac{\alpha_1}{\rho(\rho_1-\rho_2)}(p(\rho_2)-p(\rho_1)) +
    \dfrac{\alpha_1}{\rho}p'(\rho_1),\\
    A_3(\mathbf{Y}) &=
    \dfrac{\alpha_2}{\rho(\rho_1-\rho_2)}(p(\rho_2)-p(\rho_1)) +
    \dfrac{\alpha_2}{\rho}p'(\rho_2).
  \end{aligned}
\end{equation}

The characteristic equation of $A(\mathbf{Y})$ is given by
$$(u-\lambda)^2(u-c-\lambda)(u+c-\lambda),$$ with the speed of sound
\begin{equation}
  \label{eq:speed_c}
  c:=c(\mathbf{r})=\sqrt{\dfrac 1 \rho
\left(\alpha_1(\rho,\rho_1,\rho_2) \rho_1 p'(\rho_1) +
  \alpha_2(\rho,\rho_1,\rho_2) \rho_2 p'(\rho_2)\right)}.
\end{equation}
Thus we obtain
three distinct eigenvalues for the matrix $A(\mathbf{Y})$:

\begin{equation}
  \label{eq:eigval}
  \lambda_1(\mathbf{Y}) = u-c,\quad
  \lambda_2(\mathbf{Y})=\lambda_3(\mathbf{Y}) = u,\quad
  \lambda_4(\mathbf{Y}) = u+c.
\end{equation}

Note that the eigenvalues are real if $\mathbf r$ satisfies
\begin{equation*}
  \alpha_1(\mathbf r) \rho_1 p'(\rho_1) +
  \alpha_2(\mathbf r) \rho_2 p'(\rho_2) \geq 0.
\end{equation*}
Accounting on relations~\eqref{eq:fprim_rho}, it is equivalent to 
the following hyperbolicity condition
\begin{equation}
  \label{eq:hyp_cond}
    \alpha_1(\mathbf r) \rho_1^2 \mu'(\rho_1) +
  \alpha_2(\mathbf r) \rho_2^2 \mu'(\rho_2) \geq 0.
\end{equation}

The right eigenvectors $r_i(\mathbf{Y})$, $i=1,\ldots, 4$, that satisfy
$A(\mathbf{Y)}r_i(\mathbf{Y}) = \lambda_i(\mathbf{Y})r_i(\mathbf{Y})$
can be chosen as

\begin{equation}
  \label{eq:right_eigvect}
  r_1(\mathbf{Y}) =
  \begin{pmatrix}
   -\dfrac{\rho}{c}\\\\  -\dfrac{\rho_1}{c} \\\\  -\dfrac{\rho_2}{c} \\\\ 1
  \end{pmatrix}, \quad
  r_2(\mathbf{Y}) =
  \begin{pmatrix}
    -\dfrac{A_3}{A_1}\\\\ 0 \\  1 \\ 0
  \end{pmatrix}, \quad
  r_3(\mathbf{Y}) =
  \begin{pmatrix}
     -\dfrac{A_2}{A_1}\\\\ 1 \\  0 \\ 0
  \end{pmatrix}, \quad
  r_4(\mathbf{Y}) =
  \begin{pmatrix}
    \dfrac{\rho}{c}\\\\  \dfrac{\rho_1}{c} \\\\  \dfrac{\rho_2}{c} \\\\ 1
  \end{pmatrix},
\end{equation}
where the quantities $A_1, \; A_2$ and $A_3$ are defined
in~\eqref{eq:ABC}.



If the densities $\rho,\rho_1$ and $\rho_2$ satisfy \eqref{eq:H1} and
\eqref{eq:hyp_cond},
the matrix $A(\mathbf{Y})$ is diagonalizable in
$\R$ and its eigenvectors span the whole space $\R^4$ so that the
system is hyperbolic.

\subsection{Structure of the waves}
\label{sec:struct-waves}

In this paragraph we study the structure of the waves. 
Assuming that the densities $\rho,\rho_1$ and $\rho_2$ satisfy
\eqref{eq:H1} and \eqref{eq:hyp_cond},
one can observe that the waves are either
genuinely non linear or linearly degenerate.

Straightforward computations lead to the following property which will
be useful in the sequel.
\begin{Prop}
  The speed of sound $c$,
  function of state $\mathbf{r}$, satisfies the following properties
  relations
\begin{eqnarray}
    \label{eq:p_c}
    \nabla p(\mathbf{r}) \cdot \mathbf{r}
    &=&\rho c^2(\mathbf{r}),\\
    \nabla c(\mathbf{r})&=& \dfrac{1}{2 c(\mathbf{r})}
    \begin{pmatrix}
      -\dfrac{c(\mathbf{r})^2}{\rho}+ \dfrac{1}{\rho
        (\rho_1-\rho_2)}(\rho_1 p'(\rho_1) -\rho_2p'(\rho_2))\\
      \dfrac{\alpha_1(\mathbf{r})}{\rho} \left(
        \dfrac{\rho_2p'(\rho_2) -
          \rho_1p'(\rho_1)}{\rho_1-\rho_2}
        +p'(\rho_1)+\rho_1p''(\rho_1) \right)\\
      \dfrac{\alpha_2(\mathbf{r})}{\rho} \left(
        \dfrac{\rho_2p'(\rho_2) -
          \rho_1p'(\rho_1)}{\rho_1-\rho_2} +p'(\rho_2)+\rho_2p''(\rho_2) \right)
    \end{pmatrix}.
    \label{eq:grad_c}
\end{eqnarray}
\end{Prop}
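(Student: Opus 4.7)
The plan is to establish both identities by direct differentiation, using as only inputs the gradient expression
\[
\nabla_{\mathbf{r}} \alpha_1(\mathbf{r}) = -\nabla_{\mathbf{r}} \alpha_2(\mathbf{r}) = \frac{1}{\rho_1-\rho_2}\begin{pmatrix} 1 \\ -\alpha_1(\mathbf{r}) \\ -\alpha_2(\mathbf{r}) \end{pmatrix},
\]
already derived in Section~\ref{sec:constr-dyn-syst}, together with the constraint \eqref{eq:constraint_rho}, namely $\alpha_1\rho_1+\alpha_2\rho_2=\rho$.

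For \eqref{eq:p_c}, the product rule applied to the mixture pressure $p(\mathbf{r}) = \alpha_1(\mathbf{r}) p(\rho_1) + \alpha_2(\mathbf{r}) p(\rho_2)$ gives
\[
\nabla_{\mathbf{r}} p(\mathbf{r}) = \frac{p(\rho_1) - p(\rho_2)}{\rho_1 - \rho_2}\begin{pmatrix} 1 \\ -\alpha_1 \\ -\alpha_2 \end{pmatrix} + \begin{pmatrix} 0 \\ \alpha_1 p'(\rho_1) \\ \alpha_2 p'(\rho_2) \end{pmatrix}.
\]
The scalar product of the first vector with $\mathbf{r}$ is proportional to $\rho-\alpha_1\rho_1-\alpha_2\rho_2$ and therefore vanishes thanks to \eqref{eq:constraint_rho}. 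Only the second vector survives and contributes $\alpha_1 \rho_1 p'(\rho_1) + \alpha_2 \rho_2 p'(\rho_2) = \rho c^2(\mathbf{r})$ by the definition \eqref{eq:speed_c} of the sound speed. This yields \eqref{eq:p_c}.

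For \eqref{eq:grad_c}, I differentiate the defining relation $c^2(\mathbf{r})=S(\mathbf{r})/\rho$ with $S(\mathbf{r}):=\alpha_1\rho_1 p'(\rho_1)+\alpha_2\rho_2 p'(\rho_2)$, using $2c\,\nabla_{\mathbf{r}} c = \nabla_{\mathbf{r}}(c^2)$. The quotient rule yields $\nabla c^2 = \nabla S/\rho - (S/\rho^2)\,e_\rho$, where $e_\rho=(1,0,0)^T$. Expanding $\nabla S$ by the product rule in exactly the same way as for $\nabla p$ above, but with the scalar prefactor $\bigl(\rho_1 p'(\rho_1)-\rho_2 p'(\rho_2)\bigr)/(\rho_1-\rho_2)$ in front of the common vector $(1,-\alpha_1,-\alpha_2)^T$, and supplemented by the diagonal contributions $\alpha_i(p'(\rho_i)+\rho_i p''(\rho_i))$ in the $\rho_i$ slots, one reads off the three components of $\nabla c$ announced in the statement: the $-c^2/\rho$ contribution in the first entry comes from $-(S/\rho^2)e_\rho$ after using $c^2=S/\rho$, while in the $\rho_1$ and $\rho_2$ entries no such subtraction occurs and the formula follows verbatim.

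No step here presents a real obstacle; both identities are essentially routine applications of the product rule. The only conceptual point worth stressing is that the same direction $(1,-\alpha_1,-\alpha_2)^T$ appears in both $\nabla p$ and $\nabla S$ through $\nabla \alpha_1$, but it is annihilated in the scalar product $\nabla p\cdot\mathbf{r}$ by the mass constraint, whereas no such cancellation occurs inside $\nabla c$, which is why \eqref{eq:grad_c} retains a richer structure than \eqref{eq:p_c}.
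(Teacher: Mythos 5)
Your computation is correct and is exactly the "straightforward computation" the paper invokes without writing out: both identities follow from the product rule applied to $\bar p$ and to $S=\rho c^2$ using $\nabla_{\mathbf r}\alpha_1=-\nabla_{\mathbf r}\alpha_2=\tfrac{1}{\rho_1-\rho_2}(1,-\alpha_1,-\alpha_2)^T$, with the constraint $\alpha_1\rho_1+\alpha_2\rho_2=\rho$ killing the $\nabla\alpha$-direction in the scalar product $\nabla\bar p\cdot\mathbf r$. All three components of $\nabla c$ check out against the stated formula, so nothing is missing.
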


Let us start with the waves associated to the wave speed $u-c$ and
$u+c$.

\begin{Prop}
  The characteristic fields associated to the waves speed
  $\lambda_1(\mathbf{Y})=u-c$
  and $\lambda_4(\mathbf{Y})=u+c$ are genuinely non linear \textit{i.e.} 
  $\nabla_\mathbf{Y} \lambda_1(\mathbf{Y}) \cdot r_1(\mathbf{Y})\neq 0$ and 
  $\nabla_\mathbf{Y} \lambda_4(\mathbf{Y}) \cdot r_4(\mathbf{Y})\neq 0$ for admissible state
  vector $\mathbf{Y}$ that is for densities $(\rho,\rho_1,\rho_2)$
  satisfying \eqref{eq:H1} and \eqref{eq:hyp_cond}.
\end{Prop}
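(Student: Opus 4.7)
The plan is to exploit the fact that both $\lambda_1=u-c(\mathbf{r})$ and $\lambda_4=u+c(\mathbf{r})$ depend on $\mathbf{Y}$ only through $u$ and $\mathbf{r}$. Their gradients with respect to $\mathbf{Y}$ are therefore $\nabla_{\mathbf{Y}}\lambda_4=(\nabla_{\mathbf{r}}c(\mathbf{r}),1)^T$ and $\nabla_{\mathbf{Y}}\lambda_1=(-\nabla_{\mathbf{r}}c(\mathbf{r}),1)^T$. Taking the scalar product with the eigenvectors $r_4$ and $r_1$ from \eqref{eq:right_eigvect}, whose first three components differ only by a global sign, the two quantities collapse to the same expression:
\begin{equation*}
  \nabla_{\mathbf{Y}}\lambda_i\cdot r_i \;=\; 1+\frac{1}{c(\mathbf{r})}\,\nabla_{\mathbf{r}}c(\mathbf{r})\cdot\mathbf{r},\qquad i\in\{1,4\}.
\end{equation*}
So the two fields are simultaneously GNL or not, and it suffices to analyse this single scalar.

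Next, I would insert the explicit formula \eqref{eq:grad_c} and evaluate term by term. The contribution $\rho\,\partial_\rho c$ produces $-c^2$ together with the combination $(\rho_1p'(\rho_1)-\rho_2p'(\rho_2))/(\rho_1-\rho_2)$; the contributions $\rho_i\,\partial_{\rho_i}c$, $i=1,2$, produce the opposite combination multiplied by $(\alpha_1\rho_1+\alpha_2\rho_2)/\rho$, which equals $1$ by \eqref{eq:constraint_rho}, so this cross-term cancels. What survives is, modulo the factor $1/(2c\rho)$, the sum $\alpha_1\rho_1 p'(\rho_1)+\alpha_1\rho_1^2p''(\rho_1)+\alpha_2\rho_2p'(\rho_2)+\alpha_2\rho_2^2p''(\rho_2)$, in which the first and third terms reconstitute $\rho c^2$ by \eqref{eq:speed_c} and cancel the initial $-c^2$. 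The outcome is the compact identity
\begin{equation*}
  \nabla_{\mathbf{Y}}\lambda_i\cdot r_i \;=\; \frac{\alpha_1\big(2\rho_1 p'(\rho_1)+\rho_1^2 p''(\rho_1)\big)+\alpha_2\big(2\rho_2 p'(\rho_2)+\rho_2^2 p''(\rho_2)\big)}{2\rho\,c^2(\mathbf{r})} \;=\; \frac{\alpha_1\big(\rho_1^2 p'(\rho_1)\big)'+\alpha_2\big(\rho_2^2 p'(\rho_2)\big)'}{2\rho\,c^2(\mathbf{r})}.
\end{equation*}

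To conclude, note that the denominator is strictly positive on the hyperbolicity region by \eqref{eq:hyp_cond}, so the GNL property reduces to the non-vanishing of the numerator. Setting $g(\rho):=(\rho^2p'(\rho))'=2\rho p'(\rho)+\rho^2 p''(\rho)$ and substituting the van der Waals pressure from \eqref{eq:p_mu_vdw} yields the closed form
\begin{equation*}
  g(\rho)=-6a\rho^2+\frac{2\rho RT}{(1-b\rho)^3}.
\end{equation*}
The main obstacle is that $g$ itself is not of constant sign on $(0,1/b)$: near the pressure extrema $\rho^\pm$, the term $\rho^2 p''(\rho)$ is large and negative while $2\rho p'(\rho)$ is small, so $g$ may change sign inside the hyperbolic phases. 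I would therefore argue the result on the sub-region of \eqref{eq:H1}$\cap$\eqref{eq:hyp_cond} where the convex combination $\alpha_1 g(\rho_1)+\alpha_2 g(\rho_2)$ keeps a definite sign — which should coincide with (or contain) the invariant hyperbolic domain constructed in Section~\ref{sec:dom_hyp}, i.e.\ exactly the region where the model is used. In particular, in the pure-phase limit $\alpha_j=0$, $\rho_i=\rho$, the criterion becomes $g(\rho)\neq 0$, which one verifies on the relevant density intervals for the reduced van der Waals EoS \eqref{eq:reduced_vdw} at fixed $T<T_c$.
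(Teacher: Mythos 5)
Your computation is correct and follows the same route as the paper's proof (differentiate $c$ from \eqref{eq:grad_c}, contract with the eigenvectors \eqref{eq:right_eigvect}); the two cancellations you describe (the cross term via $\alpha_1\rho_1+\alpha_2\rho_2=\rho$, and the $-c^2$ via \eqref{eq:speed_c}) are exactly right, and the identity
\begin{equation*}
\nabla_\mathbf{Y}\lambda_i\cdot r_i
= 1+\frac{\alpha_1\rho_1^2p''(\rho_1)+\alpha_2\rho_2^2p''(\rho_2)}{2\rho\,c^2}
= \frac{\alpha_1\bigl(\rho_1^2p'(\rho_1)\bigr)'+\alpha_2\bigl(\rho_2^2p'(\rho_2)\bigr)'}{2\rho\,c^2},
\qquad i\in\{1,4\},
\end{equation*}
holds. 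Two points of comparison with the paper's proof. First, the paper obtains the \emph{opposite} sign on the $p''$ term, because in \eqref{eq:gen_non_lin_field} it uses $\nabla_\mathbf{Y}\lambda_1=(\nabla_\mathbf{r}c,1)^T$ for $\lambda_1=u-c$; your $+$ sign is the correct one, and it is also what the paper's own conventions yield for $\lambda_4=u+c$, so the paper's concluding ``similarly'' is not actually available. Second, the paper closes by asserting that $p''$ is strictly negative, which is false for the van der Waals law: $p''(\rho)=-2a+2bRT/(1-b\rho)^3\to+\infty$ as $\rho\to 1/b$, and indeed $p''(\rho_2^*)>0$ at $T=0.85$. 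Your proposal is therefore more careful than the paper on the algebra, but it exposes rather than resolves the real difficulty.

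The gap is that you never close the argument, and the difficulty you flag is fatal to the statement on the domain for which it is claimed. You propose to ``argue the result on the sub-region where $\alpha_1g(\rho_1)+\alpha_2g(\rho_2)$ keeps a definite sign,'' without identifying that region or proving anything about it; the proposition, however, is asserted for \emph{all} densities satisfying \eqref{eq:H1} and \eqref{eq:hyp_cond}. Concretely, with $g(\rho)=(\rho^2p'(\rho))'=-6a\rho^2+2\rho RT/(1-b\rho)^3$ one has $g(\rho)>0$ for small $\rho$ while $g(\rho^-)=(\rho^-)^2p''(\rho^-)<0$, so $g$ vanishes at some $\rho_0$ on the vapor branch (numerically $\rho_0\approx 0.38$ at $T=0.85$, inside the metastable interval $[\rho_1^*,\rho^-]$). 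The state $\rho_1=\rho=\rho_0$, $\alpha_1=1$, $\rho_2\in(\rho_0,3)$ arbitrary satisfies \eqref{eq:H1} and \eqref{eq:hyp_cond} (since $p'(\rho_0)>0$) yet gives $\nabla_\mathbf{Y}\lambda_i\cdot r_i=0$ by your own formula, and by continuity there is a whole hypersurface of such states with $\alpha_1\alpha_2\neq0$. So no completion along your lines can establish the proposition as stated; at best one can prove genuine nonlinearity on an explicitly restricted density set, and your final paragraph neither constructs that set nor verifies the sign condition on it.
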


\begin{proof}
  We introduce the notation $D(\mathbf{r})= (c(\mathbf{r}))^2$.
  We consider the wave associated to the eigenvalue
  $\lambda_1(\mathbf{Y})$.
  One has
  \begin{equation}
    \label{eq:gen_non_lin_field}
    \begin{aligned}
      \nabla_\mathbf{Y}\lambda_1(\mathbf{Y})\cdot
      r_1(\mathbf{Y}) &=
      \begin{pmatrix}
        \nabla_\mathbf{r}c(\mathbf{r}) \\ 1
      \end{pmatrix}
      \cdot r_1(\mathbf{Y})\\
      &=-\dfrac{1}{2c(\mathbf{r})^2} \left(\rho\dfrac{\p D}{\p \rho} +
        \rho_1 \dfrac{\p D}{\p \rho_1} +\rho_2 \dfrac{\p D}{\p \rho_2}
      \right) +1\\
      &=-\dfrac{1}{2\rho c(\mathbf{r})^2}(\alpha_1(\mathbf{r})\rho_1^2p''(\rho_1)
      + \alpha_2(\mathbf{r})\rho_2^2p''(\rho_2))+1.
    \end{aligned}
  \end{equation}
  The densities are assumed to be strictly positive.
  Under the hypothesis~\eqref{eq:H1}  and \eqref{eq:hyp_cond} the mass
  fractions $\alpha_i$ are positive and the second derivative of
  the van der Waals pressure \eqref{eq:p_mu_vdw} is a strictly negative
  function of the density. Thus 
  $\nabla_\mathbf{Y}\lambda_1(\mathbf{Y})\cdot r_1(\mathbf{Y}) \neq 0$.
  Similarly we can state that
  $\nabla_\mathbf{Y}\lambda_4(\mathbf{Y})\cdot r_4(\mathbf{Y}) \neq 0$
  that conclude the proof.
\end{proof}

We now study the wave associated to the speed $u$.

\begin{Prop}
  The characteristic fields associated to the waves
  $\lambda_2(\mathbf{Y})=\lambda_3(\mathbf{Y})=u$
  linearly degenerate \textit{i.e.} 
  $\nabla_\mathbf{Y} \lambda_2(\mathbf{Y}) \cdot r_2(\mathbf{Y})= 0$ and 
  $\nabla_\mathbf{Y} \lambda_3(\mathbf{Y}) \cdot r_3(\mathbf{Y})= 0$ for admissible state
  vector $\mathbf{Y}$ that is for densities $(\rho,\rho_1,\rho_2)$
  satisfying \eqref{eq:hyp_cond}.
\end{Prop}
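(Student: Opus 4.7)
The proof is essentially immediate from the explicit formulas for the eigenvalues and eigenvectors already established. My plan is to simply compute the two inner products $\nabla_{\mathbf{Y}} \lambda_i(\mathbf{Y}) \cdot r_i(\mathbf{Y})$ for $i=2,3$ using the expressions \eqref{eq:eigval} and \eqref{eq:right_eigvect}.

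First I would observe that since $\mathbf{Y}=(\rho,\rho_1,\rho_2,u)$ and $\lambda_2(\mathbf{Y})=\lambda_3(\mathbf{Y})=u$ do not depend on the density variables $\rho,\rho_1,\rho_2$, the gradients reduce to
\[
\nabla_{\mathbf{Y}} \lambda_2(\mathbf{Y}) = \nabla_{\mathbf{Y}} \lambda_3(\mathbf{Y}) = (0,0,0,1)^T.
\]
Next, I would recall from \eqref{eq:right_eigvect} that both eigenvectors $r_2(\mathbf{Y}) = (-A_3/A_1, 0, 1, 0)^T$ and $r_3(\mathbf{Y}) = (-A_2/A_1, 1, 0, 0)^T$ have vanishing fourth component. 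Taking the scalar products then yields $\nabla_{\mathbf{Y}} \lambda_2(\mathbf{Y}) \cdot r_2(\mathbf{Y}) = 0$ and $\nabla_{\mathbf{Y}} \lambda_3(\mathbf{Y}) \cdot r_3(\mathbf{Y}) = 0$, which is exactly the statement.

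There is no real obstacle here: the two hypotheses \eqref{eq:H1} and \eqref{eq:hyp_cond} are needed only to guarantee that the eigenvectors $r_2, r_3$ are well defined (in particular that $A_1 \neq 0$ and $c(\mathbf{r})$ is real), i.e.\ that we are indeed in a region where the system is hyperbolic and $\lambda_2=\lambda_3=u$ is a genuine double eigenvalue with the two-dimensional eigenspace spanned by $r_2, r_3$. Once the eigenstructure is in place, the linear degeneracy of the contact wave at speed $u$ follows from the trivial fact that $u$ depends only on the $u$-coordinate of $\mathbf{Y}$ while the contact eigenvectors involve only density components. I would write the argument as two short identity lines concluding the proof.
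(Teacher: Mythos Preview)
Your proposal is correct and follows exactly the same approach as the paper: compute $\nabla_\mathbf{Y}\lambda_i(\mathbf{Y})=(0,0,0,1)^T$ for $i=2,3$ and observe that the fourth component of $r_2$ and $r_3$ vanishes, so the scalar products are zero. The paper's proof is no more detailed than what you wrote (indeed it even contains a typo, writing $i\in\{1,2\}$ instead of $i\in\{2,3\}$).
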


\begin{proof}
  We deduce from the eigenvalues \eqref{eq:eigval} the relation
  \begin{equation}
    \label{eq:lin_deg_field}
    \nabla_\mathbf{Y}\lambda_i(\mathbf{Y}) \cdot r_i(\mathbf{Y}) =
    (0,0,0,1)^T\cdot r_i(\mathbf{Y}),
  \end{equation}
  for $i=\{1,2\}$.
  Then introducing the right eigenvectors~\eqref{eq:right_eigvect}
  in~\eqref{eq:lin_deg_field}, it is easily checked that
  $\nabla_\mathbf{Y}\lambda_i(\mathbf{Y}) \cdot r_i(\mathbf{Y}) =0$ 
  for  $i=\{1,2\}$ and this complete the proof.
\end{proof}

We now address the determination of the Riemann invariants of the
system. These computations are made easier using the following property.

\begin{Prop}
  The mass and volume fractions $\alpha_1$ and $\varphi_1$, defined
  by~\eqref{eq:alpha_i} and~\eqref{eq:varphi_i}, satisfy the
  following non conservative equations
  \begin{equation}
    \label{eq:edp_fraction}
    \begin{aligned}
      \p_t \alpha_1 + u\p_x \alpha_1=0,  \\
      \p_t \varphi_1 + u\p_x \varphi_1=0.
    \end{aligned}
  \end{equation}
\end{Prop}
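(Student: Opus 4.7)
The plan is to work with the homogeneous version of system~\eqref{eq:model} (consistent with the context of Section~\ref{sec:struct-waves}, where the source terms are set aside) and to compute directly the material derivative $D := \p_t + u\p_x$ of $\alpha_1$ and $\varphi_1$, viewed as smooth functions of $(\rho,\rho_1,\rho_2)$ through the formulas \eqref{eq:alpha_i} and \eqref{eq:varphi_i}. For smooth solutions, the three conservative mass equations of the homogeneous system may be expanded as
\begin{equation*}
  D\rho = -\rho\,\p_x u, \qquad D\rho_1 = -\rho_1\,\p_x u, \qquad D\rho_2 = -\rho_2\,\p_x u.
\end{equation*}
Each density is thus transported at velocity $u$ up to the same compressive factor $-\p_x u$, and this common structure is what will make the two fractions travel purely at velocity $u$.

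For $\alpha_1(\rho,\rho_1,\rho_2)=(\rho-\rho_2)/(\rho_1-\rho_2)$ I would apply the chain rule and substitute the three identities above. The numerator contributes $-(\rho-\rho_2)\p_x u$ while the derivative with respect to $(\rho_1,\rho_2)$ produces $-(\rho-\rho_2)(\rho_1-\rho_2)\p_x u/(\rho_1-\rho_2)^2$. The two terms cancel exactly, yielding $D\alpha_1=0$. Under assumption~\eqref{eq:H1} the denominator $\rho_1-\rho_2$ stays away from zero, so the computation is legitimate throughout the admissible domain, and $\alpha_2 = 1 - \alpha_1$ is transported identically.

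For $\varphi_1$ the cleanest route is to pass to specific volumes $\tau = 1/\rho$, $\tau_i = 1/\rho_i$, in which $\varphi_1 = (\tau-\tau_2)/(\tau_1-\tau_2)$. The same mass equations give $D\tau = \tau\,\p_x u$ and $D\tau_i = \tau_i\,\p_x u$, i.e.\ the specific volumes inherit a common multiplicative structure analogous to the one above (only the sign of the prefactor changes). Repeating the chain-rule argument with this sign change, the two contributions again cancel and one obtains $D\varphi_1 = 0$. The only genuinely delicate point, rather than a computational obstacle, is to keep in mind that these transport identities hold only for smooth solutions of the homogeneous system; across shocks or in the presence of the relaxation source term the right-hand sides pick up the corresponding non-conservative products, as already displayed in~\eqref{eq:eq_alpha}.
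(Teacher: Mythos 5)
Your proof is correct, and the computation you give is the natural one: the paper actually states this proposition without proof, relying implicitly on exactly the chain-rule cancellation you carry out (each of $\rho,\rho_1,\rho_2$ satisfies $(\p_t+u\p_x)(\cdot)=-(\cdot)\,\p_x u$ for smooth solutions of the homogeneous system, so any ratio of differences of these quantities, or of their reciprocals, is transported at speed $u$). Your closing caveat is also the right one, and matches the paper's own equation \eqref{eq:eq_alpha}, which displays the nonzero right-hand side that $\alpha_1$ acquires once the relaxation term is reinstated; a marginally shorter route to the second identity is to note $\varphi_1=\alpha_1\rho_1/\rho$ and combine $D\alpha_1=0$ with $D\rho_1=-\rho_1\p_x u$ and $D\rho=-\rho\,\p_x u$.
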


\begin{Prop}
  The Riemann invariants associated to the wave of speed $u$ are 
  \begin{equation}
    \label{eq:RI_u}
    \{u,\bar p\},
  \end{equation}
  {with $\bar p(\mathbf{r}) = \alpha_1(\mathbf r) p(\rho_1) +
    \alpha_2(\mathbf r) p(\rho_2)$.}
  The volume and mass fractions are Riemann invariants associated to the wave of speed $u\pm c$: 
  \begin{equation}
    \label{eq:RI_u+c}
    \{\alpha_1,\varphi_1\}.
  \end{equation}
\end{Prop}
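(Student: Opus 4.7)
The plan is to verify directly the defining property of a Riemann invariant: a scalar function $w$ of the state $\mathbf{Y}$ is associated to the wave of speed $\lambda_i$ precisely when $\nabla_{\mathbf{Y}} w \cdot r_i = 0$, with the right eigenvectors $r_i$ given by \eqref{eq:right_eigvect}.

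First I would handle the double eigenvalue $u$, for which one must check orthogonality against both $r_2$ and $r_3$. The velocity $u$ is an invariant trivially, since the fourth component of $r_2$ and $r_3$ vanishes. For $\bar p = \alpha_1 p(\rho_1) + \alpha_2 p(\rho_2)$, I would apply the product rule and the formula $\nabla_{\mathbf{r}} \alpha_1 = -\nabla_{\mathbf{r}} \alpha_2 = (\rho_1-\rho_2)^{-1}(1, -\alpha_1, -\alpha_2)^T$ from Section~\ref{sec:constr-dyn-syst}, obtaining after simplification the compact identity $\nabla_{\mathbf{r}} \bar p = \rho(A_1, A_2, A_3)^T$ with the $A_i$ of \eqref{eq:ABC}; the fourth component of $\nabla_{\mathbf{Y}} \bar p$ is zero. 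Substitution against $r_2 = (-A_3/A_1, 0, 1, 0)^T$ and $r_3 = (-A_2/A_1, 1, 0, 0)^T$ then gives zero by inspection.

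For the genuinely nonlinear waves of speed $u \pm c$, the cleanest route goes through the transport equations \eqref{eq:edp_fraction} already established. For smooth solutions of the homogeneous system $\partial_t \mathbf{Y} + A(\mathbf{Y}) \partial_x \mathbf{Y} = 0$, combining $\partial_t \alpha_1 + u \partial_x \alpha_1 = 0$ with the PDE yields $\nabla_{\mathbf{Y}} \alpha_1 \cdot (uI - A) \partial_x \mathbf{Y} = 0$ for arbitrary $\partial_x \mathbf{Y}$; hence $\nabla_{\mathbf{Y}} \alpha_1$ is a left eigenvector of $A$ with eigenvalue $u$. Pairing with the right eigenvector $r_i$ via $A r_i = \lambda_i r_i$ gives $(\lambda_i - u) \nabla_{\mathbf{Y}} \alpha_1 \cdot r_i = 0$, so invariance against $r_1$ and $r_4$ follows from $\lambda_{1,4} \neq u$. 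The same reasoning, using the second identity in \eqref{eq:edp_fraction}, handles $\varphi_1$.

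There is no real obstacle; the argument is essentially algebraic bookkeeping, and the two families are treated by complementary strategies (direct computation on the one hand, left-eigenvector duality on the other). The most delicate step is the compact identification $\nabla_{\mathbf{r}} \bar p = \rho(A_1, A_2, A_3)^T$, which requires careful combination of the product rule with the explicit form of $\nabla_{\mathbf{r}} \alpha_i$ and the structure of $A_2, A_3$ in which $\alpha_i$ appears as a prefactor; once it is established, the orthogonality relations reduce to trivial linear algebra.
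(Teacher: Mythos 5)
Your proof is correct and follows essentially the same route as the paper: a direct gradient computation for $\{u,\bar p\}$ against $r_2,r_3$ (your identity $\nabla_{\mathbf r}\bar p=\rho(A_1,A_2,A_3)^T$ is exactly the ``straightforward computation'' the paper invokes, and is consistent with \eqref{eq:p_c}), and the transport equations \eqref{eq:edp_fraction} for the fractions. The only difference is that you make explicit the left-eigenvector duality step that the paper leaves implicit when passing from \eqref{eq:edp_fraction} to the orthogonality with $r_1,r_4$.
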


\begin{proof}
  Because the field associated to the speed $u$ is linearly
  degenerate, $u$ is clearly a Riemann invariant for this wave.
  Using the gradient of $\bar p$ with respect to $\mathbf{r}$, a straightforward
  computation gives
  \begin{equation}
    \label{eq:RI_u_p}
    \nabla_\mathbf{Y} \bar p(\mathbf{Y}) \cdot r_2(\mathbf{Y})=0.
  \end{equation}
  
  On the other hand the volume fractions $\alpha_i$ and the mass
  fractions $\varphi_i$ satisfy the equations~\eqref{eq:edp_fraction}. Thus the
  fractions are Riemann invariants for the waves of speed $u\pm c$.
\end{proof}
The characterization of the third Riemann invariant for the waves of speed $u\pm c$ is more intricate
and is not addressed here.

\subsection{Invariant domains of hyperbolicity for the relaxed system}
\label{sec:dom_hyp}

According to Section \ref{sec:hyperbolicity}, it is
clear that the \textit{homogeneous} system \eqref{eq:model} is hyperbolic if
and only if the densities $\rho$, $\rho_1$ and $\rho_2$ satisfy
\eqref{eq:H1} and the constraint \eqref{eq:hyp_cond} on the speed of sound.
Nonetheless we are interested in the study and the numerical
approximation of the \textit{whole} relaxed system \eqref{eq:model}, that is
taking into account the relaxation term with a finite relaxation parameter
$\varepsilon > 0$. 
Actually the domains of hyperbolicity of \eqref{eq:model} strongly
depend on the attraction basins of the dynamical system
\eqref{eq:syst_dyn2}.
In the present section, we introduce the notion of invariant domains
in the same spirit as in \cite{Chueh77} for diffusive systems. 
We show that invariant domains $\Omega$ of
hyperbolicity for the relaxed system  \eqref{eq:model} are subsets of
the attraction basins of the dynamical system \eqref{eq:syst_dyn2}.
First note that the hyperbolicity of the homogeneous system
\eqref{eq:model} solely depends on the densities
$\mathbf{r}(t,x)=(\rho,\rho_1,\rho_2)^t(t,x)$, $\forall (t,x)\in
\R^+\times \R$, according to the constraint
\eqref{eq:hyp_cond} on the speed of sound and not on the velocity $u(t,x)$.
Hence we consider the following definition of an invariant region.
\begin{Def}
  \label{def:dom_inv}
  Let $\Omega=\{\mathbf{r}=(\rho,\rho_1,\rho_2)\in ]0,3[^3| \; 0<\rho_1\leq \rho\leq
  \rho_2 \text{ and } \rho_1<\rho_2\}$ a subset of the phase space
  $(\rho,\rho_1, \rho_2)$ with a
  Lipschitz continuous boundary $\p \Omega$. The region $\Omega$
  is said to be a invariant domain if
  \begin{equation*}
    \{\forall x\in \R, \; \mathbf{r}(0,x) \in \Omega \}
    \Leftrightarrow \forall t>0, \; \{ \forall x\in \R,\;
    \mathbf{r}(t,x) \in \Omega\}.
  \end{equation*}
\end{Def}
We now define some kind of indicator function for such a domain $\Omega$: let $S$ be defined by
\begin{equation}
  \label{eq:indicS}
  \begin{array}{rccl}
    S : & ]0,3[^3 &\to &\R\\
      &\mathbf{r }=(\rho,\rho_1,\rho_2)&\mapsto &\rho s(\rho_1/\rho,\rho_2/\rho),
  \end{array}
\end{equation}
where $s(\beta_1,\beta_2) = 1-\mathbf{1}_{\{\beta_i^-\le\beta_i\le\beta_i^+\}}$.
Obviously we have
  \begin{equation*}
    S(\mathbf{r})=0 \Leftrightarrow \mathbf{r}\in \Omega.
  \end{equation*}
Next we introduce the nonnegative quantity $J$
  \begin{equation}\label{eq:SJ}
    \begin{array}{lccl}
      J : &\R^+ &\to &\R\\
      &t &\mapsto &\int_\R S(r(t,x))dx.
    \end{array}
  \end{equation}

\begin{Prop}
  Consider $\Omega$ a subset of the phase space
  $(\rho,\rho_1, \rho_2)$ with a Lipschitz continuous boundary $\p
  \Omega$ and the associated function $S$ defined by
  \eqref{eq:indicS}.
  Then one has the following properties:
  \begin{enumerate}
  \item In the sense of distributions we have
    \begin{equation}\label{Stokes}
      \langle\nabla S,\phi\rangle = [S]\int_{\partial\Omega}{\bf n}\phi(\sigma)\,d\sigma,
    \end{equation}
    where $d\sigma$ is the surface measure on $\partial\Omega$, ${\bf n}$
    the outer normal of $\Omega$ and 

    \noindent $[S]=S_{\rm out}-S_{\rm in}$ is the jump of $S$
    across the boundary $\partial\Omega$.
  \item The function $S$ is positively homogeneous of degree 1 so that
    it verifies the Euler relation
    $S(\mathbf{r})=\nabla_\mathbf{r}S(\mathbf{r})\cdot \mathbf{r}$.
  \item The function $S$ satisfies
    \begin{equation}
      \label{eq:lcS}
        \langle\nabla_{\mathbf{r}}S(\mathbf{r}), \p_t \mathbf{r} +
        \p_x (u\mathbf{r})\rangle
        = \p_t S(\mathbf{r}) + \p_x (uS(\mathbf{r})).
    \end{equation}
  \end{enumerate}
\end{Prop}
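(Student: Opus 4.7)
The plan is to prove the three assertions in order, treating them as increasingly explicit consequences of the structure of $S$ and some standard distributional calculus; assertion (1) is the only delicate one, (2) is a rescaling computation, and (3) follows from (2) by the chain rule.

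For assertion (1), I would view $S$ as a piecewise-smooth function that jumps across the Lipschitz hypersurface $\partial\Omega$. By the definition of $s$, we have $s=0$ on $\Omega$ and $s=1$ on its complement, so $S$ vanishes on $\Omega$ and equals $\rho$ on $\R^3\setminus\Omega$. Fix a vector-valued test function $\phi\in C_c^\infty(\R^3;\R^3)$. By definition,
\[
\langle\nabla S,\phi\rangle=-\int_{\R^3}S\,\operatorname{div}\phi\,d\mathbf{r}.
\]
Split this integral over $\Omega$ and $\R^3\setminus\Omega$; the first piece vanishes since $S\equiv 0$ on $\Omega$. On the exterior, $S$ is smooth, and the divergence theorem produces a boundary term whose integrand is $-S_{\rm out}\mathbf{n}_{\rm ext}=S_{\rm out}\mathbf{n}$ with $\mathbf{n}$ the outer normal to $\Omega$. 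Because $S_{\rm in}=0$, this boundary integral is exactly $[S]\mathbf{n}\,d\sigma$, which is \eqref{Stokes}. The only subtlety is the orientation convention for $\mathbf{n}$ and the treatment of any absolutely continuous contribution coming from $\nabla\rho$ on the exterior; within the conventions of the paper this contribution does not play a role in the identity as written.

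Assertion (2) is immediate: for $\lambda>0$,
\[
S(\lambda\mathbf{r})=\lambda\rho\,s\!\left(\frac{\lambda\rho_1}{\lambda\rho},\frac{\lambda\rho_2}{\lambda\rho}\right)=\lambda\rho\,s(\rho_1/\rho,\rho_2/\rho)=\lambda S(\mathbf{r}),
\]
since the arguments of $s$ are scale-invariant. Differentiating this identity in $\lambda$ at $\lambda=1$ yields the Euler relation $\nabla_{\mathbf{r}}S(\mathbf{r})\cdot\mathbf{r}=S(\mathbf{r})$ at every point of classical differentiability (i.e.\ off $\partial\Omega$).

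For assertion (3), I would apply the chain rule and use (2). Expanding,
\[
\nabla_{\mathbf{r}}S(\mathbf{r})\cdot\bigl(\p_t\mathbf{r}+\p_x(u\mathbf{r})\bigr)=\nabla_{\mathbf{r}}S\cdot\p_t\mathbf{r}+u\,\nabla_{\mathbf{r}}S\cdot\p_x\mathbf{r}+(\p_x u)\,\nabla_{\mathbf{r}}S\cdot\mathbf{r}.
\]
Using the Euler identity from (2) to substitute $\nabla_{\mathbf{r}}S\cdot\mathbf{r}=S$, the right-hand side becomes
\[
\nabla_{\mathbf{r}}S\cdot\p_t\mathbf{r}+u\,\nabla_{\mathbf{r}}S\cdot\p_x\mathbf{r}+S\,\p_x u=\p_t S(\mathbf{r})+\p_x\bigl(u\,S(\mathbf{r})\bigr),
\]
which is precisely \eqref{eq:lcS}. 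The main obstacle is clearly assertion (1), because it requires interpreting $\nabla S$ distributionally across a jump set; once that is settled, (2) reduces to a one-line rescaling computation and (3) follows as a mechanical application of the chain rule combined with the Euler relation.
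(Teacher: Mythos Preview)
Your proof is correct and follows essentially the same route as the paper: item (1) via the divergence/Stokes theorem applied to the piecewise-smooth function $S$, item (2) by direct verification of homogeneity and the Euler relation, and item (3) by the chain rule combined with the Euler identity. Your treatment is more explicit than the paper's (which simply invokes ``Stokes theorem'' for (1), ``by construction'' for (2), and replays the energy-estimate computation for (3)); in particular your remark about the absolutely continuous contribution from $\nabla\rho$ on the exterior is a genuine subtlety that the paper leaves implicit, and indeed it is harmless in the applications since the first component of the source $Q$ vanishes.
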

\begin{proof}
  The first item is a consequence of the Stokes theorem. 
  By construction the function $S$ is positively homogeneous of degree
  1. Then it satisfies the Euler relation given in the second item.
  Finally following the same steps as in the energy estimate
  \eqref{eq:comp_bal_law}, we have
\begin{equation*}
  \begin{aligned}
    &\nabla_{\mathbf{r}}S(\mathbf{r}) \p_t \mathbf{r} +
    \nabla_{\mathbf{r}}S(\mathbf{r}) \p_x (u\mathbf{r})\\
    & =\p_t S(\mathbf{r}) + u\p_x S(\mathbf{r}) + \nabla_{\mathbf{r}}
    S(\mathbf{r})\cdot \mathbf{r}\p_x u\\
    &= \p_t S(\mathbf{r}) + \p_x (uS(\mathbf{r})), 
  \end{aligned}
\end{equation*}
where we use the above Euler relation for $S$ to obtain the last equality.
\end{proof}

We now relate the definition of an invariant domain to the functions $S$
and $J$ through several propositions.

\begin{Prop}
  \label{prop:caract_invdom1}
  The domain $\Omega$ is an invariant region if and only if
  \begin{equation*}
    \{ J(0) = 0 \Rightarrow \forall t>0, \quad J(t)=0\}.
  \end{equation*}
\end{Prop}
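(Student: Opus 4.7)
The plan is straightforward: the function $S$ plays the role of a nonnegative indicator for the complement of $\Omega$, and $J$ is its spatial integral, so the equivalence reduces to unpacking the characterization ``$J(t)=0 \Leftrightarrow \mathbf{r}(t,\cdot)\in\Omega$ almost everywhere'' and then comparing it with Definition \ref{def:dom_inv}.

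First I would record the key pointwise fact, which was already noted just after \eqref{eq:indicS}: since $\rho>0$ on the admissible phase space and $s=1-\mathbf 1_A$ takes values in $\{0,1\}$, the function $S$ is nonnegative and
\begin{equation*}
  S(\mathbf r)=0 \quad\Longleftrightarrow\quad \mathbf r\in\Omega.
\end{equation*}
Integrating over $\R$ this immediately gives $J(t)\ge 0$ together with the equivalence $J(t)=0 \Leftrightarrow \mathbf r(t,x)\in\Omega$ for almost every $x\in\R$. Working with sufficiently regular $\mathbf r(t,\cdot)$ (as is implicit throughout the paper), the almost everywhere statement can be strengthened to ``for every $x\in\R$''.

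For the direct implication, I would assume that $\Omega$ is invariant in the sense of Definition \ref{def:dom_inv}: if $J(0)=0$ then $\mathbf r(0,x)\in\Omega$ for every $x$, so by invariance $\mathbf r(t,x)\in\Omega$ for every $t>0$ and every $x\in\R$, whence $S(\mathbf r(t,\cdot))\equiv 0$ and $J(t)=0$. For the converse, I would assume the implication $J(0)=0\Rightarrow J(t)=0$ and take initial data with $\mathbf r(0,x)\in\Omega$ for all $x$; then $J(0)=0$, hence $J(t)=0$ for $t>0$, giving $\mathbf r(t,x)\in\Omega$ for every $x$ and every $t>0$. The reverse inclusion in the equivalence of Definition \ref{def:dom_inv} is then obtained by continuity of the flow as $t\to 0^+$.

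The proposition is essentially tautological once the vanishing characterization of $J$ is in hand, so there is no real obstacle here; the only subtle point is the careful handling of ``for all $x$'' versus ``for almost every $x$'', which is innocuous under the regularity assumptions on $\mathbf r$ used elsewhere in the paper. The role of this proposition is preparatory: it reformulates invariance as a one-sided implication on the scalar quantity $J$, which can then be checked using the transport-type equation \eqref{eq:lcS} for $S$ in the subsequent arguments.
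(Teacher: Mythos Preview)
Your proof is correct and follows essentially the same route as the paper: the paper isolates the equivalence ``$\forall x,\ \mathbf r(\cdot,x)\in\Omega \Leftrightarrow J(\cdot)=0$'' as a separate Lemma (resting on the nonnegativity of $S$ and its vanishing characterization) and then combines the two instances at $t=0$ and at $t>0$, which is exactly what you do inline. Your remarks on the a.e.\ versus everywhere issue and on the $\Leftarrow$ direction of Definition~\ref{def:dom_inv} via continuity are in fact a bit more careful than the paper's own brief argument.
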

The proof of the Proposition \ref{prop:caract_invdom1} relies on the
following Lemma.
\begin{Lem} 
  \label{lem:dom_inv}
  Let $\Omega$ be a subset of the phase space
  $(\rho,\rho_1,\rho_2)$ and $S$ defined by \eqref{eq:indicS}. Then one has
  \begin{equation*}
    \forall x \in \R, \quad \mathbf{r}(.,x) \in \Omega \Leftrightarrow
    J(.)=\int_{\R} S(\mathbf{r}(.,x))dx=0.
  \end{equation*}
\end{Lem}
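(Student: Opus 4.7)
The proof plan rests entirely on the observation that the function $S$ defined in \eqref{eq:indicS} is a nonnegative ``indicator-type'' function that vanishes precisely on $\Omega$. The first step is to unpack the definition: since $\rho > 0$ on the admissible phase space and the scalar factor $s(\beta_1,\beta_2) = 1 - \mathbf{1}_{\{\beta_i^-\le\beta_i\le\beta_i^+\}}$ takes values in $\{0,1\}$, we immediately get $S(\mathbf{r}) \geq 0$ for every $\mathbf{r} \in {]0,3[}^3$, with equality if and only if $(\rho_1/\rho,\rho_2/\rho)$ belongs to the rectangle defining $\Omega$, i.e.\ if and only if $\mathbf{r} \in \Omega$. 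This is exactly the property already announced in the paragraph following \eqref{eq:indicS}.

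For the forward implication, the plan is straightforward: if $\mathbf{r}(\cdot,x) \in \Omega$ for every $x \in \R$, then the integrand in \eqref{eq:SJ} vanishes pointwise, and therefore $J(\cdot) = 0$ trivially.

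For the converse, I would use the nonnegativity established in the first step. If $J(\cdot) = \int_\R S(\mathbf{r}(\cdot,x))\,dx = 0$ and the integrand is nonnegative, standard Lebesgue integration theory forces $S(\mathbf{r}(\cdot,x)) = 0$ for almost every $x \in \R$, and hence $\mathbf{r}(\cdot,x) \in \Omega$ for almost every $x$. To upgrade this to ``for all $x$'', one invokes the continuity of $\mathbf{r}(\cdot,x)$ in $x$, which is legitimate in the framework of smooth solutions considered in the paper (or, alternatively, one works with the closure $\overline{\Omega}$ and notes that $\Omega$ is defined by closed inequalities on $(\rho_1/\rho,\rho_2/\rho)$).

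The only mildly delicate point is this almost-everywhere versus everywhere gap in the converse direction; everything else is a direct consequence of the indicator-like nature of $S$. No use of the PDE \eqref{eq:model} itself is required at this stage, since the lemma is purely a statement about how $J$ encodes the pointwise inclusion $\mathbf{r}(\cdot,x) \in \Omega$. The dynamical content (showing that $J(0)=0$ implies $J(t)=0$) is deferred to Proposition \ref{prop:caract_invdom1}, for which this lemma serves only as the elementary translation step.
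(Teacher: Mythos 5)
Your proposal is correct and follows essentially the same route as the paper, whose proof is a one-line remark that the result ``relies on the definition of $S$ and its positivity''; you simply make explicit the two implications (pointwise vanishing $\Rightarrow$ zero integral, and zero integral of a nonnegative integrand $\Rightarrow$ vanishing a.e.). Your observation about the almost-everywhere versus everywhere gap, resolved by continuity of $\mathbf{r}(\cdot,x)$ in the smooth-solution setting, is a detail the paper leaves implicit, and flagging it is appropriate.
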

The proof of the Lemma relies on the definition of $S$ and its
positivity.

\begin{proof}[Proof of Proposition \ref{prop:caract_invdom1}]
  According to the Lemma \ref{lem:dom_inv} and by the definition of $J$
  \eqref{eq:SJ} of the quantity $J$, it follows
  \begin{equation*}
    \forall x\in \R, \quad r(0,x) \in \Omega \Leftrightarrow J(0)=0.
  \end{equation*}
  Using the Lemma \ref{lem:dom_inv} again, one gets
  \begin{equation*}
     \forall x\in \R, \quad r(t,x) \in \Omega \Leftrightarrow J(t)=0.
  \end{equation*}
  Combining these two equivalences leads to the conclusion.
\end{proof}

\begin{Prop}
  \label{prop:J_inv}
  Let $\Omega$ be a subset of the phase space
  $(\rho,\rho_1,\rho_2)$ and $J$ given by \eqref{eq:SJ}. Assume $J$
  is differentiable. Then it follows
  \begin{equation*}
    \left\{ \dfrac{d}{dt}J(t) \leq 0\right\} \Rightarrow \Omega \text{ is an
      invariant domain.} 
  \end{equation*}
\end{Prop}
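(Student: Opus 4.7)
The plan is to exploit the nonnegativity of $S$, defined in \eqref{eq:indicS}, together with the nonincrease of $J$ and the characterization of invariant domains from Proposition \ref{prop:caract_invdom1}.

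First I would observe that $S$ takes only the values $0$ and $\rho$, so in particular $S(\mathbf{r}) \geq 0$ for every admissible state. Integrating this inequality in $x$ gives $J(t) \geq 0$ for every $t \geq 0$ for which $J(t)$ is well defined.

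Next I would assume that for every $x \in \mathbb{R}$ the initial datum satisfies $\mathbf{r}(0,x) \in \Omega$. By Lemma \ref{lem:dom_inv} this is equivalent to $J(0)=0$. The hypothesis $\frac{d}{dt}J(t) \leq 0$ then yields $J(t) \leq J(0) = 0$ for every $t > 0$, while the previous step gives $J(t) \geq 0$. Combining both bounds forces $J(t) = 0$ for all $t > 0$, and applying Lemma \ref{lem:dom_inv} again at each time $t$ one recovers $\mathbf{r}(t,x) \in \Omega$ for every $x \in \mathbb{R}$ and every $t > 0$. According to Definition \ref{def:dom_inv} (and equivalently Proposition \ref{prop:caract_invdom1}), this is precisely the statement that $\Omega$ is an invariant domain.

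There is no substantial obstacle here: the argument is a direct sandwich $0 \leq J(t) \leq J(0)$ once the nonnegativity of $S$ and the characterization through $J$ are available. The only mild care needed is to make sure that $J$ being differentiable (as assumed) implies absolute continuity on $[0,T]$, so that the sign of $\dot J$ indeed yields the monotonicity $J(t) \leq J(0)$; this is standard and does not require further work.
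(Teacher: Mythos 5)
Your proof is correct and follows exactly the paper's argument: the sandwich $0 \le J(t) \le J(0) = 0$ from the nonnegativity of $S$ and the nonincrease of $J$, concluded via Proposition \ref{prop:caract_invdom1}. The extra remark on differentiability implying the monotonicity $J(t)\le J(0)$ is a welcome touch of rigor that the paper leaves implicit.
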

\begin{proof}
  Assume that $J(0)=0$. By assumption on the time derivative of $J$, 
  $J(t)\leq J(0)$ and $J(t)\geq 0$
  by positivity. Thus $J(t)=0, \; \forall t$. Hence according to Proposition
  \ref{prop:caract_invdom1}, the domain $\Omega$ is invariant.
\end{proof}

\begin{Cor}
  Let $\Omega$ be a subset of the plan $(\rho,\rho_1,\rho_2)$ and $S$
  and $J$ the associated functions given by \eqref{eq:SJ}
  and \eqref{eq:indicS}.
  Denote $$Q=\dfrac 1 \varepsilon \left(0,
    -(\rho-\rho_1)(\rho-\rho_2)f(\rho_2|\rho_1)
    ,(\rho-\rho_1)(\rho-\rho_2)  f(\rho_1|\rho_2) \right)^t $$
  the right-hand side of the relaxed model \eqref{eq:model}.
  Then for any $\mathbf{r}\in \Omega$ such that $\lim\limits_{x\to +\infty}
  \mathbf{r}(.,x) = \lim\limits
_{x\to -\infty}
  \mathbf{r}(.,x)$, 
  one has the following assertions
  \begin{equation*}
    \langle Q,\nabla_\mathbf{r}S(\mathbf{r})\rangle \leq 0
    \quad\Rightarrow\quad
    \dfrac{d}{dt}J(t) \leq 0
    \quad\Rightarrow\quad
    \Omega \text{ is an invariant domain.}
  \end{equation*}
\end{Cor}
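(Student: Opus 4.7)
The implication $\dfrac{d}{dt}J(t)\le 0 \Rightarrow \Omega$ is invariant is exactly Proposition \ref{prop:J_inv}, so only the first implication requires argument. My plan is a direct energy-type manipulation of the three density equations of the relaxed system, using the chain-rule identity \eqref{eq:lcS} from the preceding proposition.

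The first three equations of \eqref{eq:model} read compactly $\partial_t\mathbf{r} + \partial_x(u\mathbf{r}) = Q$. Taking the inner product with $\nabla_{\mathbf{r}}S(\mathbf{r})$ and applying \eqref{eq:lcS} on the left-hand side yields the scalar balance
\begin{equation*}
\partial_t S(\mathbf{r}) + \partial_x\bigl(u\,S(\mathbf{r})\bigr) = \langle\nabla_{\mathbf{r}}S(\mathbf{r}),Q\rangle.
\end{equation*}
Integrating in space over $\R$ gives
\begin{equation*}
\dfrac{d}{dt}J(t) + \bigl[u\,S(\mathbf{r})\bigr]_{x=-\infty}^{x=+\infty} = \int_{\R}\langle\nabla_{\mathbf{r}}S(\mathbf{r}),Q\rangle\,dx.
\end{equation*}
The flux term vanishes because $S$ depends only on $\mathbf{r}$ and the common-limit assumption on $\mathbf{r}(\cdot,x)$ as $x\to\pm\infty$ forces $S(\mathbf{r})$ to have equal limits at both ends (with $u$ bounded at infinity). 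The pointwise hypothesis $\langle Q,\nabla_{\mathbf{r}}S(\mathbf{r})\rangle\le 0$ then transfers directly to $\dot J(t)\le 0$, and a second application of Proposition \ref{prop:J_inv} closes the chain.

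The main obstacle is one of rigor rather than calculation: $S$ is built from an indicator function, so $\nabla_{\mathbf{r}}S$ is not a classical function but a distribution supported on $\partial\Omega$, as the Stokes formula \eqref{Stokes} makes explicit. All pairings involving $\nabla_{\mathbf{r}}S$ must therefore be interpreted as surface measures on $\partial\Omega$ weighted by the jump $[S]$ and the outer normal $\mathbf{n}$, and the product $\langle\nabla_{\mathbf{r}}S(\mathbf{r}),Q\rangle$ is a signed measure carried by $\partial\Omega$. Crucially, this distributional reading is already baked into the identity \eqref{eq:lcS} and into the formulation of the hypothesis $\langle Q,\nabla_{\mathbf{r}}S\rangle\le 0$, so the formal manipulation above is in fact correct once read in that sense; intuitively, it says that the flow generated by $Q$ at the interface $\partial\Omega$ never points outward, which is the geometric content of invariance.
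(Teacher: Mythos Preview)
Your proof is correct and follows essentially the same route as the paper: rewrite the density equations as $\partial_t\mathbf{r}+\partial_x(u\mathbf{r})=Q$, apply the identity \eqref{eq:lcS}, integrate over $\R$, and invoke Proposition~\ref{prop:J_inv}. You are slightly more explicit than the paper in handling the boundary flux term via the common-limit hypothesis and in flagging the distributional reading of $\nabla_{\mathbf{r}}S$, but the argument is the same.
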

\begin{proof}
  Since $Q$ is the right-hand side of the relaxed model
  \eqref{eq:model}, it yields
  \begin{equation*}
    \langle Q,\nabla_\mathbf{r}S(\mathbf{r})\rangle = \langle \p_t
    \mathbf{r} + \p_x (u\mathbf{r}), \nabla_\mathbf{r}S(\mathbf{r})\rangle.
  \end{equation*}
  According to \eqref{eq:lcS} it follows that if $\langle
  Q,\nabla_\mathbf{r}S(\mathbf{r})\rangle \leq 0$ then
  \begin{equation*}
    \langle \p_t
    \mathbf{r} + \p_x (u\mathbf{r}),
    \nabla_\mathbf{r}S(\mathbf{r})\rangle \leq 0.
  \end{equation*}
  Integrating the above inequality on $\R$ gives
  $\int_R \p_t S(\mathbf{r}) dx \leq 0$, that is
  \begin{equation*}
    \dfrac{d}{dt}\int_R S(\mathbf{r}) dx =   \dfrac{d}{dt} J(t) \leq 0.
  \end{equation*}
  Proposition \ref{prop:J_inv} now leads to the conclusion.
\end{proof}

Hence in order to check that $\Omega$ is an invariant domain, one has
solely to verify that
\begin{equation*}
  \langle Q,
  \nabla_{\mathbf{r}} S(\mathbf{r})\rangle\leq 0.
\end{equation*}
Taking the scalar product of $\nabla S$ with the right-hand side $Q$
of the relaxation system we obtain
\begin{equation*}
\langle Q,
  \nabla_{\mathbf{r}} S(\mathbf{r})\rangle =\dfrac1\epsilon(\rho-\rho_1)(\rho-\rho_2)
\big(-\partial_{\rho_1}Sf(\rho_2|\rho_1)+\partial_{\rho_2}Sf(\rho_1|\rho_2)\big).
\end{equation*}
Using equation \eqref{Stokes}, we obtain formally
\begin{equation*}
\langle Q,
  \nabla_{\mathbf{r}} S(\mathbf{r})\rangle=\dfrac1\epsilon(\rho-\rho_1)(\rho-\rho_2)
\rho\big(-n_1f(\rho_2|\rho_1)+n_2f(\rho_1|\rho_2)\big),
\end{equation*}
where ${\bf n_\rho}=(n_1,n_2)$ is the outer normal of the
domain $\Omega_\rho=\{(\rho_1,\rho_2) |(\rho,\rho_1,\rho_2) \in \Omega
\}$. Note that the domain $\Omega_\rho$ are rectangles in the phase
space $(\rho_1,\rho_2)$.
Now checking the sign of $\langle  Q,
\nabla_{\mathbf{r}} S(\mathbf{r})\rangle$ is quite straightforward on
each part of the boundary $\p\Omega_\rho$.

\begin{figure}
  \begin{center}\resizebox{7cm}{!} {\input 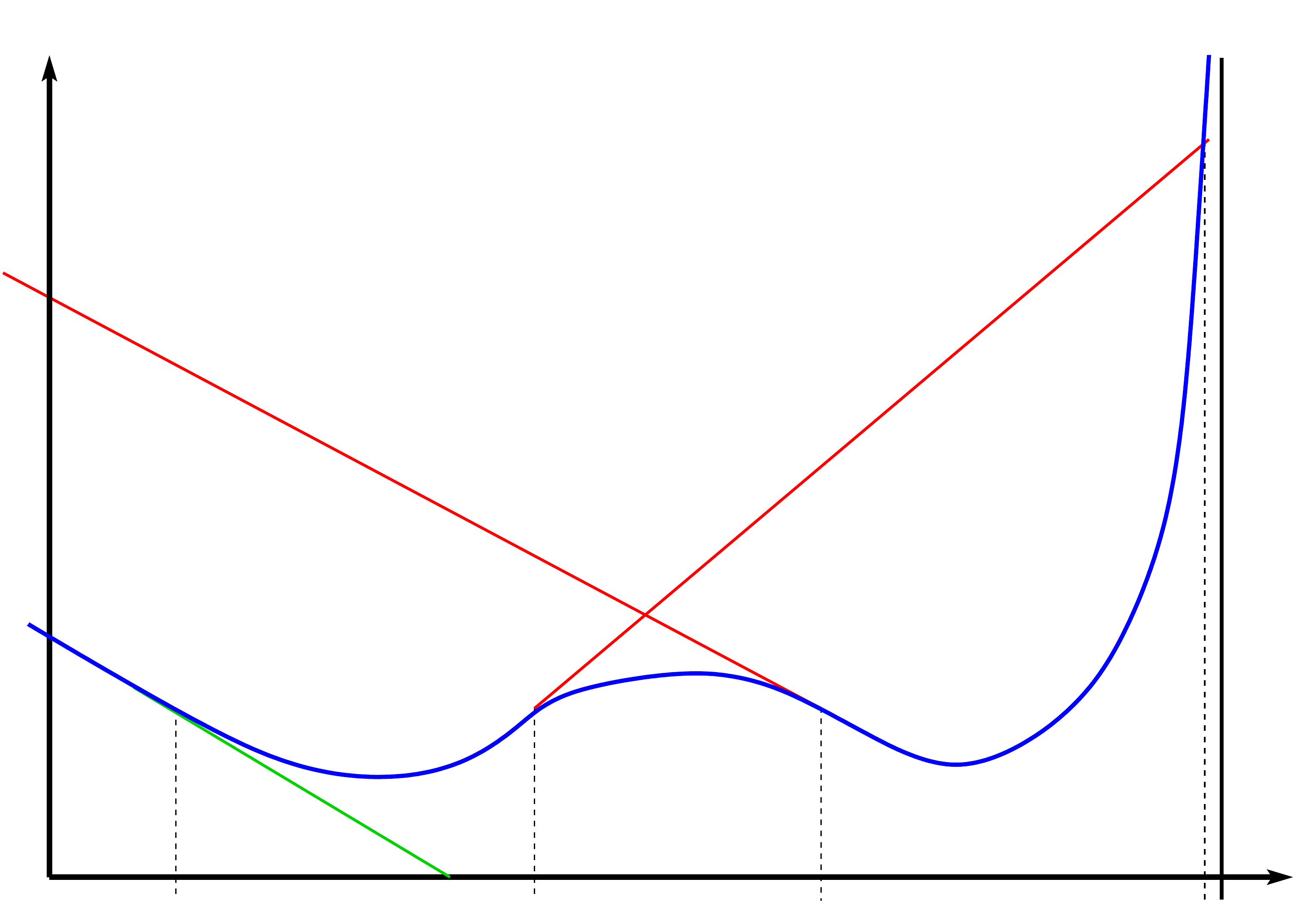_t}
  \end{center}
  \caption{Some reference points on the graph of the Helmoltz energy
    $f(\rho)$. The blue curve is a sketch of the graph of $f(\rho)$. The green
    line is the tangent to the graph of $f$ at the point $\delta$.
    The red lines are the tangents of the blue curve at $\rho^-$ and
    $\rho^+$.
    The dashed line is the tangent of the graph of $f$ at
    $\rho_\infty$ defined by $f(\rho_\infty,\rho^-)=0$.
    Depending on the position of the tangent line to the blue curve at
    a given point $\rho$, one can determine the sign of $f(.|\rho)$,
    see Proposition \ref{prop:graphf}.}
  \label{fig:graphf}
\end{figure}

\begin{figure}
  \begin{center}\resizebox{7cm}{!} {\input 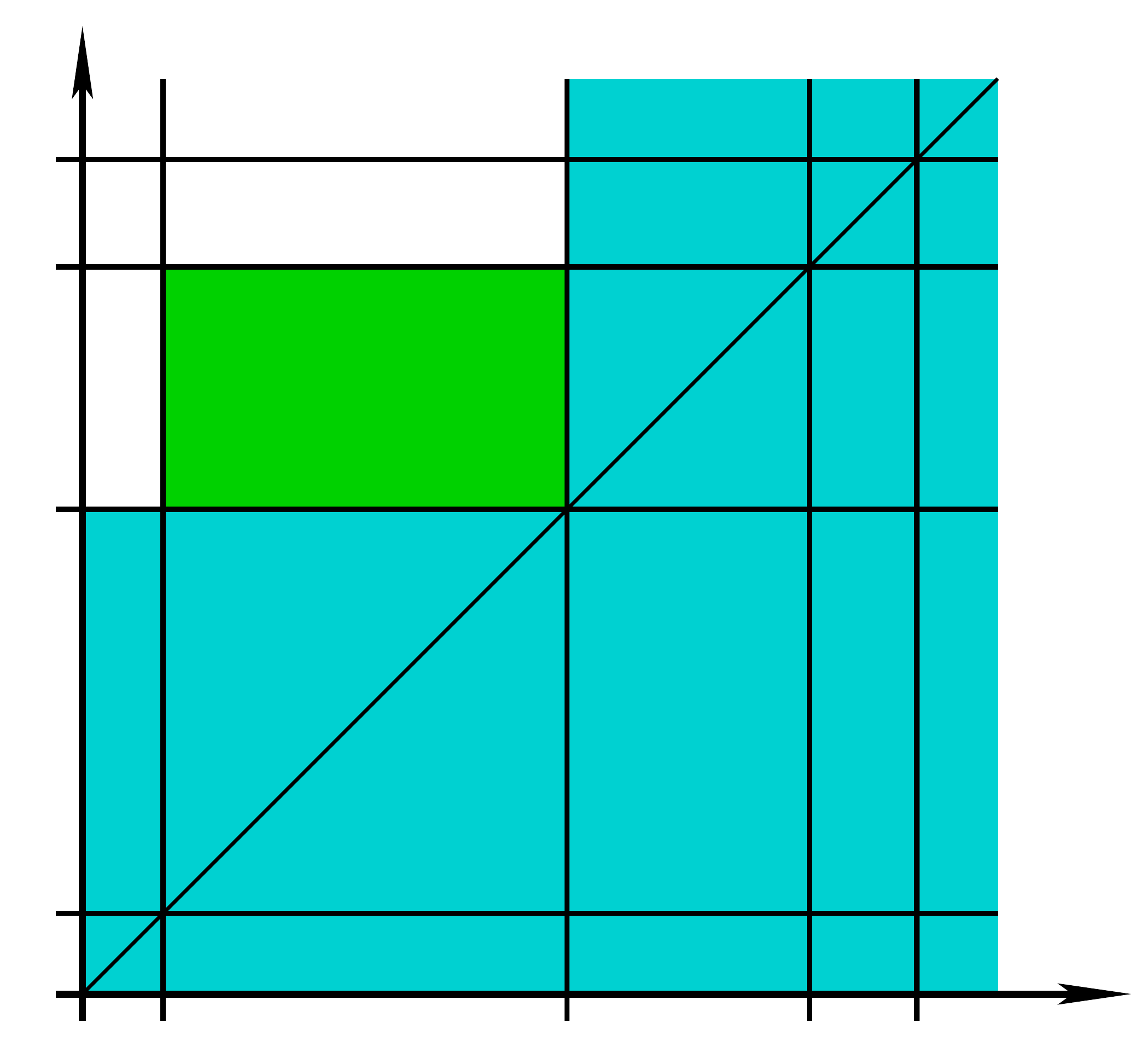_t}
  \end{center}
  \caption{\textit{A priori} estimate for $(\rho_1,\rho_2)$: the blue
    area refers to nonattainable states according to \eqref{eq:H1},
    the green zone is an invariant domain, providing in particular
    that void cannot appear.}
  \label{fig:Apriori}
\end{figure}
To characterize the invariant regions we study the sign of
the relative Helmoltz free energy $f(.|.)$ defined by
\eqref{eq:relat_f}.
According to Figure \ref{fig:graphf}, one can determine the sign of $f(.|.)$ according to the following proposition.
\begin{Prop}
  \label{prop:graphf}
  Let $\delta\in ]0,\rho_1^*[$ and $\rho_\infty\in [\rho_2^*,3[$ such
  that $f(\rho_\infty|\rho^-)=0$. Then the relative Helmoltz free
  energy satisfies
  \begin{itemize}
  \item $f(\rho|\delta)\geq 0$, $\forall \rho\in ]\delta,\rho_\infty[$,
  \item $f(\rho|\rho^-)\geq 0$ (resp. $\leq 0$), $\forall \rho\in
    ]\delta,\rho^-[$ (resp. $\rho\in ]\rho^-,\rho_\infty[$),
  \item $f(\rho|\rho^+)\leq 0$ (resp. $\geq 0$), $\forall
    \rho\in]\delta,\rho^+[$ (resp. $\rho\in ]\rho^+,\rho_\infty[$),
  \item $f(\rho|\rho_\infty)\geq 0$, $\forall\rho\in ]\delta, \rho_\infty[$.
  \end{itemize}
\end{Prop}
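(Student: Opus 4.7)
The plan rests on reading
\[
f(\rho|\rho_0)=f(\rho)-\bigl(f(\rho_0)+\mu(\rho_0)(\rho-\rho_0)\bigr)
\]
as the signed vertical gap between the graph of $f$ and its tangent line $L_{\rho_0}$ at $\rho_0$, so that $f(\rho|\rho_0)\ge 0$ (resp.\ $\le 0$) means the graph lies above (resp.\ below) $L_{\rho_0}$ at $\rho$. Since $f''=\mu'=p'/\rho$, the van der Waals $f$ is convex on $(0,\rho^-]$ and on $[\rho^+,3)$ and concave on $[\rho^-,\rho^+]$. I would rely on two elementary facts: on a convex (resp.\ concave) arc, any tangent line lies below (resp.\ above) the graph; and a convex function on an interval that is $\le 0$ at both endpoints stays $\le 0$ throughout. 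The Maxwell line $L_M$, tangent to $f$ at $\rho_1^\ast$ and $\rho_2^\ast$ with slope $\mu^\ast$, serves as a convenient reference line, satisfying $L_M\le f$ on $(0,3)$ thanks to the tangent-below-convex inequality on $(0,\rho_1^\ast]\cup[\rho_2^\ast,3)$ and the convex-hull identification $L_M=f^{**}$ on $[\rho_1^\ast,\rho_2^\ast]$.

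For (i), I would compare $L_\delta$ with $L_M$. Convexity of $f$ on $[\delta,\rho_1^\ast]$ yields $L_\delta(\rho_1^\ast)\le f(\rho_1^\ast)=L_M(\rho_1^\ast)$, and the monotonicity of $\mu$ on $(0,\rho^-]$ gives $\mu(\delta)<\mu^\ast$, so $L_\delta\le L_M$ on $[\rho_1^\ast,3)$; combined with $L_M\le f$, this yields $L_\delta\le f$ on $[\delta,3)\supset[\delta,\rho_\infty)$. For (ii) and (iii), the signs of $f(\cdot|\rho^\pm)$ on $(\delta,\rho^-)\cup(\rho^+,\rho_\infty)$ follow from the tangent-below-convex inequality on the appropriate flank, and those on $(\rho^-,\rho^+)$ from the tangent-above-concave inequality. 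The two remaining sub-intervals are handled with the convex-at-endpoints fact: $f-L_{\rho^-}$ is convex on $[\rho^+,\rho_\infty]$ with value $\le 0$ at $\rho^+$ (by concavity) and $=0$ at $\rho_\infty$ (by definition), hence $\le 0$ throughout; and, implicitly taking $\delta$ to be defined by the symmetric condition $f(\delta|\rho^+)=0$ (the unique point in $(0,\rho_1^\ast)$ at which $L_{\rho^+}$ recrosses the graph of $f$), the function $f-L_{\rho^+}$ is convex on $[\delta,\rho^-]$ with value $0$ at $\delta$ and $\le 0$ at $\rho^-$, hence $\le 0$ throughout.

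For (iv), I would first establish $\mu(\rho_\infty)>\mu(\rho^-)$: the convex function $f-L_{\rho^-}$ on $[\rho^+,3)$ vanishes at $\rho_\infty$ and is $\le 0$ at $\rho^+$, which forces a non-negative derivative at $\rho_\infty$, with strict inequality away from degenerate double tangencies. Then $f\ge L_{\rho_\infty}$ is checked piecewise: on $[\rho^+,\rho_\infty]$ by the convexity of $f$ together with tangency at $\rho_\infty$; on $[\rho^-,\rho^+]$ by the concavity of $f-L_{\rho_\infty}$ combined with the explicit identity $(f-L_{\rho_\infty})(\rho^-)=(\mu(\rho_\infty)-\mu(\rho^-))(\rho_\infty-\rho^-)>0$ and the estimate $(f-L_{\rho_\infty})(\rho^+)\ge 0$ (tangent-below-convex again); and on $(\delta,\rho^-]$ by the observation that $(f-L_{\rho_\infty})'(\rho)=\mu(\rho)-\mu(\rho_\infty)\le\mu(\rho^-)-\mu(\rho_\infty)<0$, which makes $f-L_{\rho_\infty}$ strictly decreasing there and hence bounded below by its positive value at $\rho^-$.

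The hard part will be the implicit identification of $\delta$: the statement on $(\delta,\rho^-)$ in (iii) simply fails for arbitrary $\delta\in(0,\rho_1^\ast)$, since $L_{\rho^+}$ lies strictly below $f$ near $\rho=0$ (indeed $L_{\rho^+}(0)=-p(\rho^+)<0=f(0^+)$), and it can only be salvaged by fixing $\delta$ to be the symmetric partner of $\rho_\infty$, namely the unique $\delta\in(0,\rho_1^\ast)$ with $f(\delta|\rho^+)=0$. Once this symmetry is enforced, everything else reduces to careful case-splitting across the two convex flanks and the concave middle zone.
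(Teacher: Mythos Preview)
Your approach and the paper's rest on the same geometric observation: $f(\rho|\rho_0)$ is the signed gap between the graph of $f$ and its tangent $L_{\rho_0}$, so the sign is read off from whether the tangent lies below or above the graph. The paper's proof is literally a two-sentence appeal to Figure~\ref{fig:graphf}; you supply the analytic backbone the paper omits, namely the piecewise convex/concave structure of $f$ and the endpoint/monotonicity arguments that control the sign on each sub-interval. In that sense your write-up is not a different route but a rigorous fleshing-out of the paper's picture.

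Your detailed arguments for items (i), (ii) and (iv) are correct. The comparison $L_\delta\le L_M\le f$ for (i), the convex-with-nonpositive-endpoints trick on $[\rho^+,\rho_\infty]$ for (ii), and the derivative estimate $\mu(\rho_\infty)>\mu(\rho^-)$ followed by the concave-with-nonnegative-endpoints argument for (iv) are all clean.

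Your diagnosis of item (iii) is also right, and it is worth stating plainly: as written, with $\delta$ an \emph{arbitrary} point of $(0,\rho_1^*)$, the claim $f(\rho|\rho^+)\le 0$ on $(\delta,\rho^-)$ is false. Indeed $L_{\rho^+}(0)=-p(\rho^+)<0$ while $f(0^+)=0$, so $f-L_{\rho^+}>0$ near $0$; on the convex arc $(0,\rho^-]$ this function has exactly one zero $\hat\delta\in(0,\rho_1^*)$, and for $\delta<\hat\delta$ the assertion fails on $(\delta,\hat\delta)$. The fix you propose --- defining $\delta$ by the symmetric condition $f(\delta|\rho^+)=0$, mirroring the definition of $\rho_\infty$ --- is exactly what is needed to make (iii) true and is consistent with how the figure is drawn. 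The paper's proof, being purely pictorial, does not confront this point.
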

\begin{proof}
 The sign of $f(.|a)$ for any remarkable density $a$ depends on the
 positition of the tangent to the graph of $f$ at the point $a$, see
 Figure \ref{fig:graphf}.
 If the tangent at the point $a$ is below the curve (resp. above), one
 has
 $f(.|a)=f(.)-f(a)-f'(a)(.-a)\geq0$ (resp. $\leq 0$).
\end{proof}

We first state a global \textit{a priori} estimate which ensures that
if \eqref{eq:H1} is satisfied at $t=0$ then it is preserved for any time by
the relaxed system \eqref{eq:model}.
\begin{Prop}
  \label{prop:aprioriestimate}
  Let $\delta\in ]0,\rho_1^*]$ and $\rho_\infty \in [\rho_2^*,3[$ (see
  Figure \ref{fig:graphf}).
  Then, for any $0<\rho<3$, the domain 
  \begin{equation*}
   \Omega_\rho :=\{ (\rho_1,\rho_2) \in (\delta,\rho)\times(\rho,\rho_\infty)\}
  \end{equation*}
  is invariant.
\end{Prop}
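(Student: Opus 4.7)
The plan is to apply the corollary that immediately precedes the statement: it suffices to verify that $\langle Q,\nabla_{\mathbf r}S(\mathbf r)\rangle\le 0$ along the boundary $\partial\Omega_\rho$, where $S$ is the indicator-like function \eqref{eq:indicS} associated with $\Omega_\rho$. The explicit expression derived just above,
\[
\langle Q,\nabla_{\mathbf r}S(\mathbf r)\rangle=\dfrac{\rho}{\varepsilon}(\rho-\rho_1)(\rho-\rho_2)\bigl(-n_1f(\rho_2|\rho_1)+n_2f(\rho_1|\rho_2)\bigr),
\]
already localises the computation on $\partial\Omega_\rho$. Since $\Omega_\rho=(\delta,\rho)\times(\rho,\rho_\infty)$ is a rectangle, I would inspect its four sides separately.

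First I would dispose of the two ``inner'' sides. On $\{\rho_1=\rho\}$ (outer normal $(1,0)$) and on $\{\rho_2=\rho\}$ (outer normal $(0,-1)$), the prefactor $(\rho-\rho_1)(\rho-\rho_2)$ vanishes identically, so the scalar product is automatically zero and no sign check is needed. This is consistent with the fact that the ordering $\rho_1\le\rho\le\rho_2$ is already propagated by the dynamical part through \eqref{eq:H1_time}.

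The core of the proof then reduces to the two remaining sides. On $\{\rho_1=\delta\}$ the outer normal is $(-1,0)$, so the bracketed factor becomes $f(\rho_2|\delta)$; multiplied by $(\rho-\delta)(\rho-\rho_2)$ the contribution has the sign of $-f(\rho_2|\delta)$, because $(\rho-\delta)>0$ while $(\rho-\rho_2)<0$. Proposition~\ref{prop:graphf} then gives $f(\rho_2|\delta)\ge 0$ for every $\rho_2\in(\delta,\rho_\infty)$, so the contribution is nonpositive. Symmetrically, on $\{\rho_2=\rho_\infty\}$ with outer normal $(0,1)$ the bracketed factor reduces to $f(\rho_1|\rho_\infty)$; after multiplication by $(\rho-\rho_1)(\rho-\rho_\infty)$ the contribution again has the sign of $-f(\rho_1|\rho_\infty)$, and Proposition~\ref{prop:graphf} furnishes $f(\rho_1|\rho_\infty)\ge 0$ for $\rho_1\in(\delta,\rho_\infty)$. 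In every case the sign is the required one.

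The main subtlety, rather than a genuine obstacle, is the distributional nature of the argument: $\nabla_{\mathbf r}S$ is concentrated on $\partial\Omega_\rho$, so the pointwise ``evaluation on a side'' above is an interpretation of the Stokes identity \eqref{Stokes} as a surface measure on $\partial\Omega_\rho$. With this understood, combining the four sign checks yields $\langle Q,\nabla_{\mathbf r}S(\mathbf r)\rangle\le 0$; integrating \eqref{eq:lcS} in $x$ (using the assumption of equal limits at $\pm\infty$) gives $\frac{d}{dt}J(t)\le 0$, and the corollary then delivers the invariance of $\Omega_\rho$.
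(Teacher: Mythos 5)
Your proof is correct and follows essentially the same route as the paper's: the sides $\{\rho_1=\rho\}$ and $\{\rho_2=\rho\}$ are disposed of by the vanishing of the prefactor $(\rho-\rho_1)(\rho-\rho_2)$, and the sides $\{\rho_1=\delta\}$ and $\{\rho_2=\rho_\infty\}$ by the sign of the relative free energies $f(\cdot|\delta)$ and $f(\cdot|\rho_\infty)$ from Proposition~\ref{prop:graphf}, combined with the preceding corollary. Incidentally, on the side $\{\rho_1=\delta\}$ you correctly obtain $f(\rho_2|\delta)$ where the paper's proof writes $f(\rho_2|\rho^-)$ while pointing to the tangent at $\delta$, so your version is the cleaner one.
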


\begin{proof}
  One has to check the sign of $\langle Q,
  \nabla_{\mathbf{r}} S(\mathbf{r})\rangle$ on each side of the
  green rectangle domain, see Figure \ref{fig:Apriori}.
  On the sides $\{\rho_1=\rho\}$ and $\{\rho_2=\rho\}$ of the
  rectangle, $Q$ vanishes.
  Now for the side $\{\rho_1=\delta\}$, the outer normal is $\mathbf{n_\rho}=(-1,0)$ and 
  \begin{equation*}
    \langle Q,\nabla_{\mathbf{r}} S(\mathbf{r}) \rangle = \dfrac 1
    \varepsilon (\rho-\delta)(\rho-\rho_2)\rho f(\rho_2|\rho^-).
  \end{equation*}
  The product $ (\rho-\delta)(\rho-\rho_2)\rho$ is nonpositive and
  $$f(\rho_2|\rho^-)=f(\rho_2) -
  f(\rho^-)-f'(\rho^-)(\rho_2-\rho^-)\geq 0,$$ for
  $\rho_2>\rho_2^*$, see the green tangent line on Figure \ref{fig:graphf}.
  A similar argument involving $f(\rho_1|\rho_\infty)$ works for the
  side $\{ \rho_2=\rho_\infty\}$.
\end{proof}

We turn now to determine the invariant domains of hyperbolicity of the
relaxed model \eqref{eq:model} depending on the value of the density
$\rho$.
\begin{Prop}
  \label{prop:invdom}
  Fix $0<\delta<\rho_1^*$ and let $\rho_\infty$ be such that
  $f(\rho_\infty|\rho^-)=0$. The following subsets $\Omega_\rho$ are
  invariant domains of hyperbolicity :
  \begin{enumerate}
  \item \textbf{Pure gaseous stable zone}. For any $\delta<\rho<\rho_1^*$, 
    \begin{equation*}
      \Omega_\rho :=\{ (\rho_1,\rho_2) \in ]\delta,\rho^-[\times]\rho^+,\rho_\infty[\},
    \end{equation*}
    see Figure \ref{fig:PureBasin}.
  \item \textbf{Pure liquid stable zone}.
    For any $\rho_\infty >\rho>\rho_2^*$, 
    \begin{equation*}
      \Omega_\rho :=\{ (\rho_1,\rho_2) \in ]\rho^+,\rho[\times]\rho,\rho_\infty[\}.
    \end{equation*}
  \item \textbf{Metastable zones}.
    For any $\delta<\rho<\rho_\infty$,
    \begin{equation*}
      \Omega_\rho :=\{ (\rho_1,\rho_2) \in ]\delta,\min(\rho,\rho^-)[\times]max(\rho,\rho^+),\rho_\infty[\},
    \end{equation*}
    see Figure \ref{fig:MetaBasin} for the metastable vapor zone.
  \item\textbf{Spinodal zone}. For any $\rho^-<\rho<\rho^+$, 
    \begin{equation*}
      \Omega_\rho :=\{ (\rho_1,\rho_2) \in ]\delta,\rho^-[\times]\rho^+,\rho_\infty[\},
    \end{equation*}
    see Figure \ref{fig:SpinodalBasin}.
  \end{enumerate}
\end{Prop}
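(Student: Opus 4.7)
The plan is to apply the Corollary just established: it suffices to check, for each case, that the rectangle $\Omega_\rho$ lies in the hyperbolicity region and that $\langle Q,\nabla_{\mathbf r}S(\mathbf r)\rangle \le 0$ on each of the four sides of the rectangle. Using the computation
\[
\langle Q,\nabla_{\mathbf r}S(\mathbf r)\rangle
=\dfrac1\varepsilon(\rho-\rho_1)(\rho-\rho_2)\rho\bigl(-n_1 f(\rho_2|\rho_1)+n_2 f(\rho_1|\rho_2)\bigr),
\]
the sign on each face is determined by (a) the sign of $(\rho-\rho_1)(\rho-\rho_2)$, which is controlled by the position of the face relative to $\rho$, and (b) the sign of one of the relative free energies $f(\,\cdot\,|\,\cdot\,)$ at the corresponding boundary value, which is provided by Proposition \ref{prop:graphf}.

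First I would dispose of hyperbolicity. In each of the four cases, the rectangles are contained in $]\delta,\rho^-[\,\times\,]\rho^+,\rho_\infty[$ (up to replacing one of the endpoints by $\rho$ when $\rho$ itself lies in the convex region of $f$), hence both $\rho_1$ and $\rho_2$ belong to the convex branches of $f$. Since $\mu'=p'/\rho$, this gives $\mu'(\rho_i)\ge0$ for $i=1,2$, so the hyperbolicity condition \eqref{eq:hyp_cond} is automatically satisfied throughout $\Omega_\rho$. Thus only invariance has to be proved.

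Next I would check invariance face by face. Two of the four faces of every rectangle are always $\{\rho_1=\delta\}$ and $\{\rho_2=\rho_\infty\}$: these were already treated in Proposition \ref{prop:aprioriestimate}, using $f(\rho_2|\delta)\ge 0$ and $f(\rho_1|\rho_\infty)\ge 0$ from Proposition \ref{prop:graphf}, together with the sign of $(\rho-\delta)(\rho-\rho_2)$ and $(\rho-\rho_1)(\rho-\rho_\infty)$ in $\Omega_\rho$. The two remaining faces are case-dependent. When the face is $\{\rho_1=\rho\}$ or $\{\rho_2=\rho\}$ (which occurs in the pure liquid/vapor and metastable cases), the prefactor $(\rho-\rho_1)(\rho-\rho_2)$ vanishes, so $Q\equiv 0$ on that face and the condition is trivially satisfied. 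When the face is $\{\rho_1=\rho^-\}$ (with outer normal $(1,0)$) or $\{\rho_2=\rho^+\}$ (with outer normal $(0,-1)$), which arises in the spinodal and, by $\min/\max$, in the metastable cases, the sign of $\langle Q,\nabla_{\mathbf r}S\rangle$ reduces to the sign of $\mp f(\rho_2|\rho^-)$ or $\mp f(\rho_1|\rho^+)$, and Proposition \ref{prop:graphf} together with the location of $\rho$ yields the desired nonpositivity. In the spinodal case, for instance, $\rho^-<\rho<\rho^+$ makes $(\rho-\rho^-)(\rho-\rho_2)\le 0$, and one uses $f(\rho_2|\rho^-)\le 0$ on $]\rho^+,\rho_\infty[$ to conclude; the face $\rho_2=\rho^+$ is symmetric via $f(\rho_1|\rho^+)$.

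The main obstacle is the bookkeeping: each of the four cases involves up to four faces, and for each face one must simultaneously control the sign of the product $(\rho-\rho_1)(\rho-\rho_2)$, the sign of the outer normal component, and the sign of the relevant $f(\,\cdot\,|\,\cdot\,)$ supplied by Proposition \ref{prop:graphf}. The delicate point is that two different geometric mechanisms ensure the inequality, depending on the face: either $Q$ vanishes because $\rho_i=\rho$, or the sign combination of the prefactor and of the relative Helmholtz energy aligns correctly. Organizing the case analysis by collecting the ``trivial'' faces first (those where $\rho_i=\rho$ or those handled by Proposition \ref{prop:aprioriestimate}) and then treating only the non-trivial faces $\{\rho_1=\rho^-\}$ and $\{\rho_2=\rho^+\}$ uniformly should make the proof compact. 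Finally I would invoke once more the Corollary to conclude that $J$ is nonincreasing, hence $\Omega_\rho$ is invariant.
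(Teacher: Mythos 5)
Your proposal is correct and follows essentially the same route as the paper, which in fact only writes out one face of the spinodal case and dismisses the rest with ``the same kind of arguments'': both rely on the Corollary, the boundary formula for $\langle Q,\nabla_{\mathbf r}S(\mathbf r)\rangle$, the vanishing of $Q$ on faces $\{\rho_i=\rho\}$, and the sign table of Proposition \ref{prop:graphf}. One bookkeeping slip to fix: in the pure liquid zone the left face is $\{\rho_1=\rho^+\}$, not $\{\rho_1=\delta\}$, so it is not covered by Proposition \ref{prop:aprioriestimate}; it is nonetheless handled by the identical mechanism, since with outer normal $(-1,0)$ one gets the prefactor $(\rho-\rho^+)(\rho-\rho_2)\rho\le 0$ and $f(\rho_2|\rho^+)\ge 0$ on $]\rho^+,\rho_\infty[$ from the third item of Proposition \ref{prop:graphf}.
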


\begin{proof}
  We only give the proof for the spinodal zone. Following the proof of
  Proposition \ref{prop:aprioriestimate} one has to check that
  $\langle Q,\nabla_{\mathbf{r}}S(\mathbf{r})\rangle$ on the boundary of
  the green domain $\Omega_\rho $ of Figure \ref{fig:SpinodalBasin}.
  On the side $\{\rho_1=\rho^- \}$, the outer normal is
  $\mathbf{n}_\rho=(1,0)^t$. Thus one has
  \begin{equation*}
        \langle Q,\nabla_{\mathbf{r}} S(\mathbf{r}) \rangle =
        \dfrac{1}{\varepsilon}(\rho-\rho^-)(\rho-\rho_2)\rho(-f(\rho_2|\rho^-)).
  \end{equation*}
  The product $(\rho-\rho^-)(\rho-\rho_2)\rho$ is nonpositive since
  $\rho>0$, $\rho>\rho^-$ (because $\rho$ is fixed in
  $[\rho^-,\rho^+[$) and $\rho_2>\rho$ thanks to hypothesis \eqref{eq:H1}.
  Moreover $f(\rho_2|\rho^-)\leq 0$ for any $\rho_2\in
  ]\rho^-,\rho_\infty[$ according to Proposition \ref{prop:graphf}. Indeed
  the red tangent line of the blue curve of $f$ at $\rho^-$ is above
  the graph of $f$.
  Hence $ \langle Q,\nabla_{\mathbf{r}} S(\mathbf{r}) \rangle\leq
  0$ on the side $\{\rho_1=\rho^- \}$.
  The same kind of arguments are used to prove that 
  $ \langle Q,\nabla_{\mathbf{r}} S(\mathbf{r}) \rangle\leq 0$ 
  on the three other sides of $\Omega_\rho$.
\end{proof}
Note that this characterization of invariant domains of hyperbolicity
is formal since it relies on the smoothness of the densities
$\mathbf{r}$.  A possible way to generalize to weak solutions is to follow the definition of Hoff \cite{Hoff85}.


\section{Numerical approximation}
\label{sec:numer-appr}
This section is devoted to numerical experiments. 
We do not wish to elaborate here on efficient numerical schemes for
this problem, but merely to illustrate some
typical behaviours of the model. Hence we limit ourselves to a simple
finite volume scheme, coupled to a time-splitting method for the
source terms. Considering the
stiffness of the problem, a complete numerical study is mandatory but
far beyond the aim of this paper. We emphasize that we did not implement any specific strategy for
the non hyperbolicity of the homogeneous system. However for all the cases we present, the computed sound velocity is
real. This does not prevent from possible losses of hyperbolicity, probably due to lack of convergence in the source term 
treatment, see in particular sections \ref{sec:bulle-double-choc} and \ref{sec:perturb-meta}.

\subsection{Definition of the splitting strategy}
\label{sec:defin-splitt-strat}
We rewrite the system~\eqref{eq:model} in a more compact form, considering the following Cauchy problem
\begin{equation}
  \label{eq:cauchy}
  \begin{aligned}
    & \p_t W + \p_xF(W) = S(W),\\
    & W(t=0,x) = W_0(x), \, \forall x \in \R,
  \end{aligned}
\end{equation}
where 
\begin{equation*}
  \begin{aligned}
    W &=(\rho, \rho_1,\rho_2, \rho u)^T,\\
    F(W) &=(\rho u, \rho_1 u,\rho_2 u, \rho u^2 + \alpha_1p(\rho_1) +
    \alpha_2 p(\rho_2))^T,\\
    S(W) &= \dfrac{1}{\varepsilon}
    (0,-(\rho-\rho_1)(\rho-\rho_2) f(\rho_2|\rho_1),
    (\rho-\rho_1)(\rho-\rho_2)f(\rho_1|\rho_2), 0),
  \end{aligned}
\end{equation*}
and $\varepsilon$ is the relaxation parameter.
Note that we exclude pure phase initial data, so that the equations on
the partial densities are not multiplied by $\alpha_i$ anymore.

Convective terms and source terms are taken into account by a
fractional step approach. We denote $\Delta t$ the time step and
$\Delta x$ the length of the cell $(x_{i-1/2}, x_{i+1/2})$ on the
regular mesh. Let $W^n$ be the Finite Volume approximation at time
$t^n = n\Delta t, \, n\in \mathbb N$. The approximated solution
$W^{n+1}$ of the Cauchy problem
\begin{equation}
  \label{eq:cauchy_disc}
  \begin{cases}
    \p_t W + \p_xF(W) = S(W), \, t\in (t^n, t^{n+1}),\, x \in\R,\\
    W(t^n,x) = W^n(x), \, \forall x \in \R,
  \end{cases}
\end{equation}
is approximated by splitting the problem in two steps.
The first one corresponds to the convective part
\begin{equation}
  \label{eq:convective}
  \begin{cases}
    \p_t W + \p_xF(W) = 0, \, t\in (t^n, t^{n+1}),\, x \in\R\\
     W(t^n,x) = W^n(x), \, \forall x \in \R,
  \end{cases}
\end{equation}
which provides $W^{n,-}$.
The second steps corresponds to the relaxation process
\begin{equation}
  \label{eq:relax}
  \begin{cases}
    \p_t W =S(W), \, t\in (t^n, t^{n+1}),\, x \in\R\\
     W(t^n,x) = W^{n,-}(x), \, \forall x \in \R,    
  \end{cases}
\end{equation}
which finally gives $W^{n+1}$.


\textbf{Numerical scheme for the convective part.} 
We consider a classical HLL numerical flux.
We adopt the following classical
notation
\begin{equation}
  \label{eq:W_i}
  W_i^n = \dfrac 1 {\Delta x} \int_{x_{i-1/2}}^{x_{i+1/2}} W(t^n, x)
  dx, \, n\geq 0, \, i\in \R.
\end{equation}
The scheme is the following
\begin{equation}
  \label{eq:rusanov}
  \Delta x (W_i^{n,-} - W_i^n) + \Delta t (F_{i+1/2}^n - F_{i-1/2}^n)=0,
\end{equation}
together with
\begin{equation*}
  F_{i+1/2}^n =
  \begin{cases}
    F(W_i^n), & if 0\leq s_L,\\
    \dfrac{s_RF(W_i^n) - s_LF(W_{i+1}^n) + s_L s_R (W_{i+1}^n
      -W_i^n)}{s_R-s_L},
    & if s_L\leq 0 \leq s_R,\\
    F(W_{i+1}^n), & if 0\geq s_R,
  \end{cases}
\end{equation*}
where $s_R = \max(u_i^n+c_i^n, u_{i+1}^n+c_{i+1}^n)$ and
$s_L=\min(u_i^n-c_i^n, u_{i+1}^n-c_{i+1}^n)$.
The time step is subjected to the classical CFL condition
\begin{equation}
  \label{eq:CFL}
  \dfrac{\Delta t}{\Delta x} |\lambda_{\max}|\leq 1,
\end{equation}
where $\lambda_{\max}$ is the maximal speed of wave computed on each
cell of the mesh.

\textbf{Numerical treatment of the source terms.}
The initial condition for this step is $W^{n,-}$ which is assumed to be
admissible, that is $(\rho^{n,-},\rho_1^{n,-},\rho_2^{n,-})\in
\mathcal R$. 
The total density $\rho$ and the momentum $\rho u$ remain unchanged
during this step.
Only the densities $\rho_1$ and $\rho_2$ may vary which leads to the
following system
\begin{equation}
  \label{eq:source}
  \begin{aligned}
    \p_t \rho &= \p_t (\rho u) =0,\\
    \p_t \rho_1 &=
    -\dfrac 1 \varepsilon
    (\rho-\rho_1)(\rho-\rho_2)f(\rho_2|\rho_1),\\
    \p_t \rho_2&=
    \dfrac 1 \varepsilon
    (\rho-\rho_1)(\rho-\rho_2)f(\rho_1|\rho_2).
  \end{aligned}
\end{equation}
At this stage we merely use a classic explicit order 4 Runge-Kutta
method to integrate the source term. 

Such a treatment enforces tough constraints on the time step: the computations were performed
with 1000 iterations using a time step of $10^{-6}$. We emphasize that this does not ensure
the actual convergence to the equilibrium state. This pleads for a more efficient method, for instance
 a semi-implicit scheme in the spirit of \cite{IMEX09}.

\subsection{Numerical results}
\label{sec:numerical-results}

We present here numerical results that assess the ability of the
model~\eqref{eq:model} to capture phase transition and metastable
states.

We consider the van der Waals pressure
in its reduced form~\eqref{eq:reduced_vdw} at a constant subcritical
temperature $T=0.85$.
At this fixed temperature the extrema $\rho^-$ and $\rho^+$ of the
pressure and the values $\rho_1^*$ and $\rho_2^*$ defined by the Maxwell
construction on the chemical potential are given in
Table~\ref{tab:carac}.

We propose test cases with Riemann initial conditions that is
\begin{equation}
  \label{eq:CI_RP}
  W(t=0,x)=w_0(x)=
  \begin{cases}
  W_L, & \text{if } x\leq 0,\\
  W_R, & \text{if } x >0.
  \end{cases}
\end{equation}

The following test cases are set on the domain $[0,1]$,
with an  uniform mesh of 10000 cells and Neumann boundary
conditions at $x=0$ and $x=1$.

\subsubsection{Riemann problem with phase transition}
\label{sec:riemann-problem-with}
The initial state $W_L$ and $W_R$ are 
\begin{equation*}
  \begin{aligned}
    \rho_L=\rho_{1,L}=0.3, \; \rho_{2,L}=\rho_2^*, \; u_L=0,\\
    \rho_R=\rho_{2,R}=1.9, \; \rho_{1,R} = \rho_1^*, \; u_R=0.
  \end{aligned}
\end{equation*}
The left state is a pure stable gas and the right state is a pure
stable liquid. Various value of $\varepsilon$ are considered. 
The solution is at time $t=0.1$s. 
One can observe the appearance of a mixture zone 
on both sides of the interface, see in particular the pressure profile
Figure~\ref{fig:transition1}-bottom left. The results with $\varepsilon=10^{-4}$
or $10^{-6}$ are similar expect on the velocity profile 
(see Figure~\ref{fig:transition1}-bottom right) where the intermediate state on the left of the
interface is slightly modified for $\varepsilon=10^{-4}$.

\begin{figure}
  \centering
  \includegraphics[width=6cm]{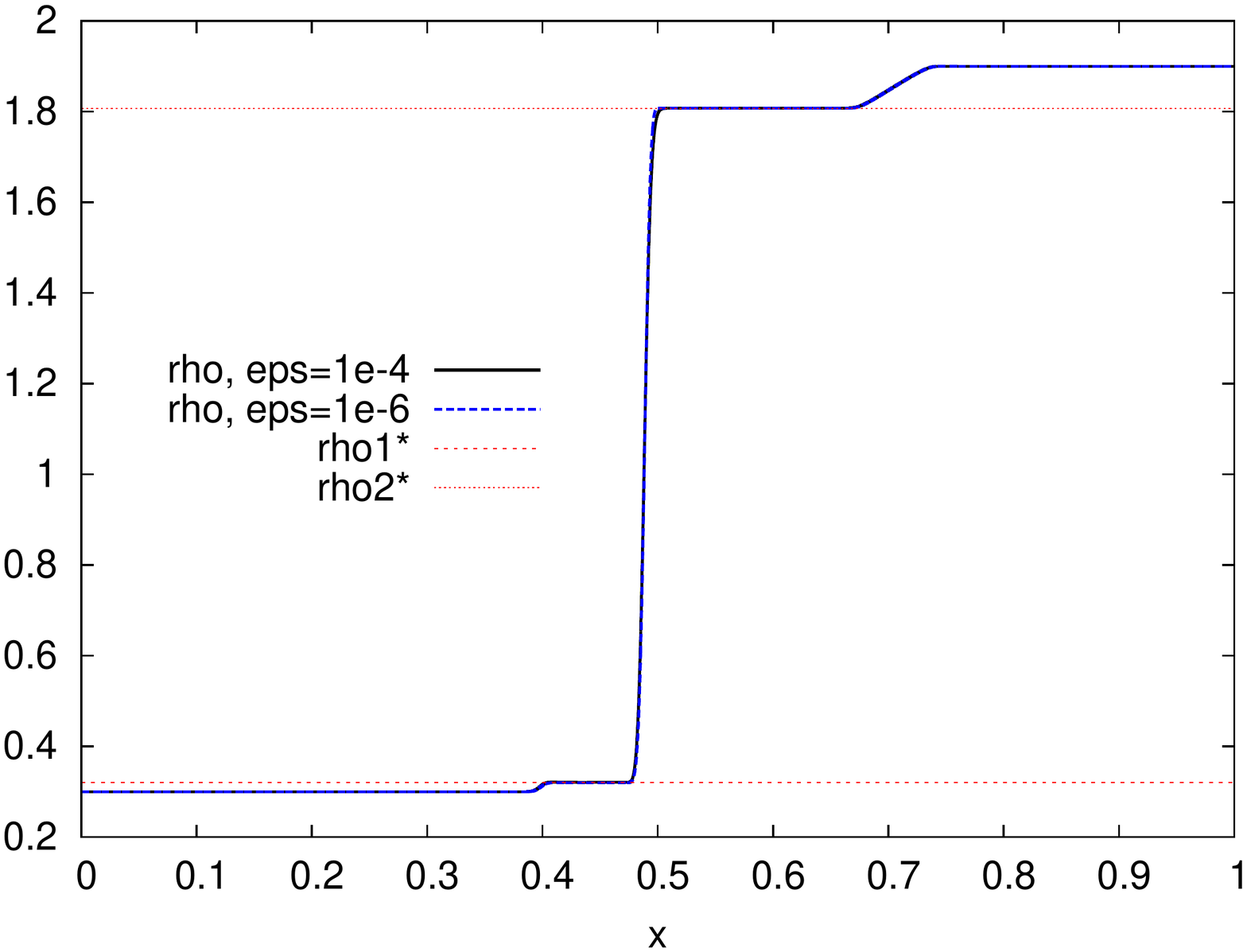}
  \includegraphics[width=6cm]{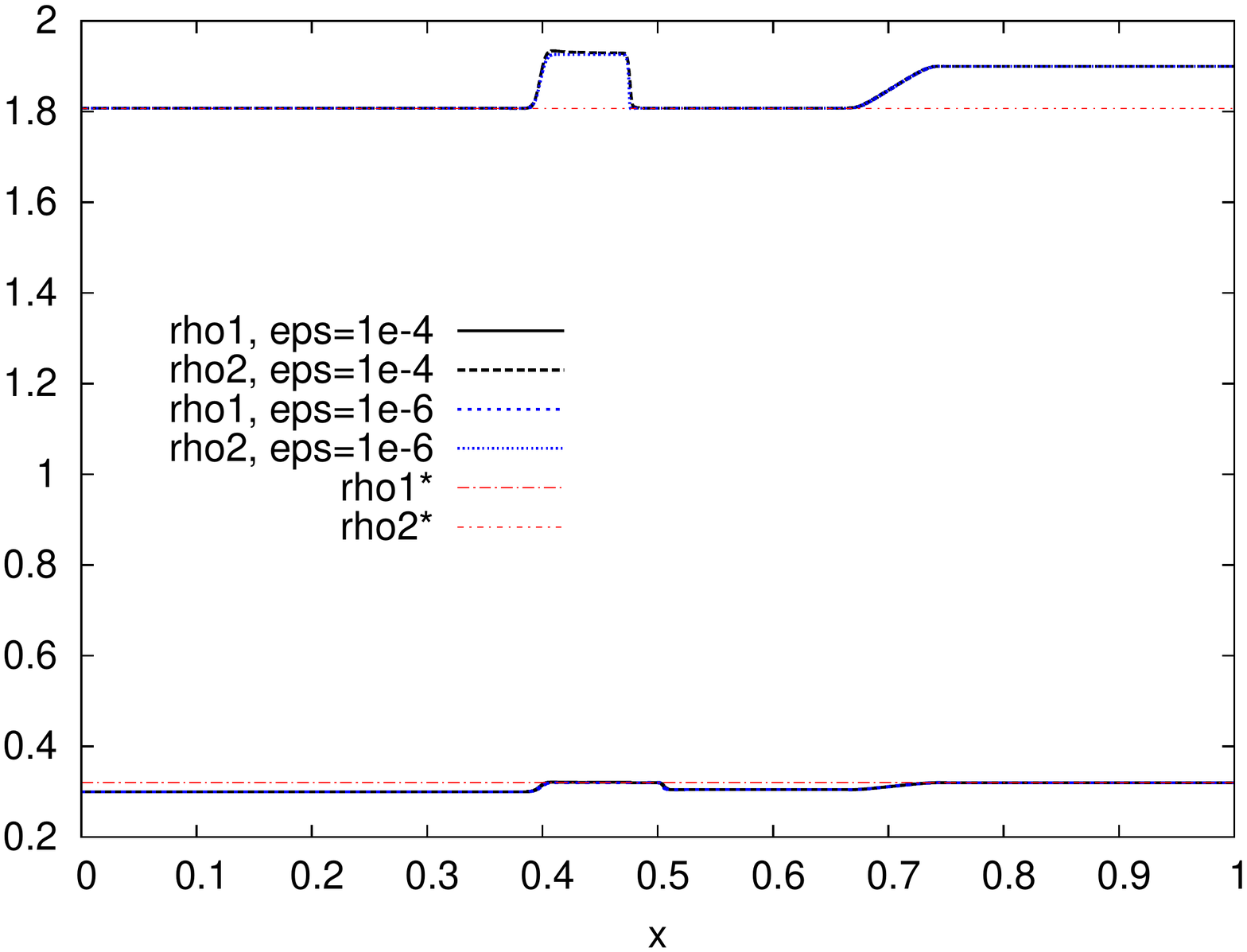}
  \includegraphics[width=6cm]{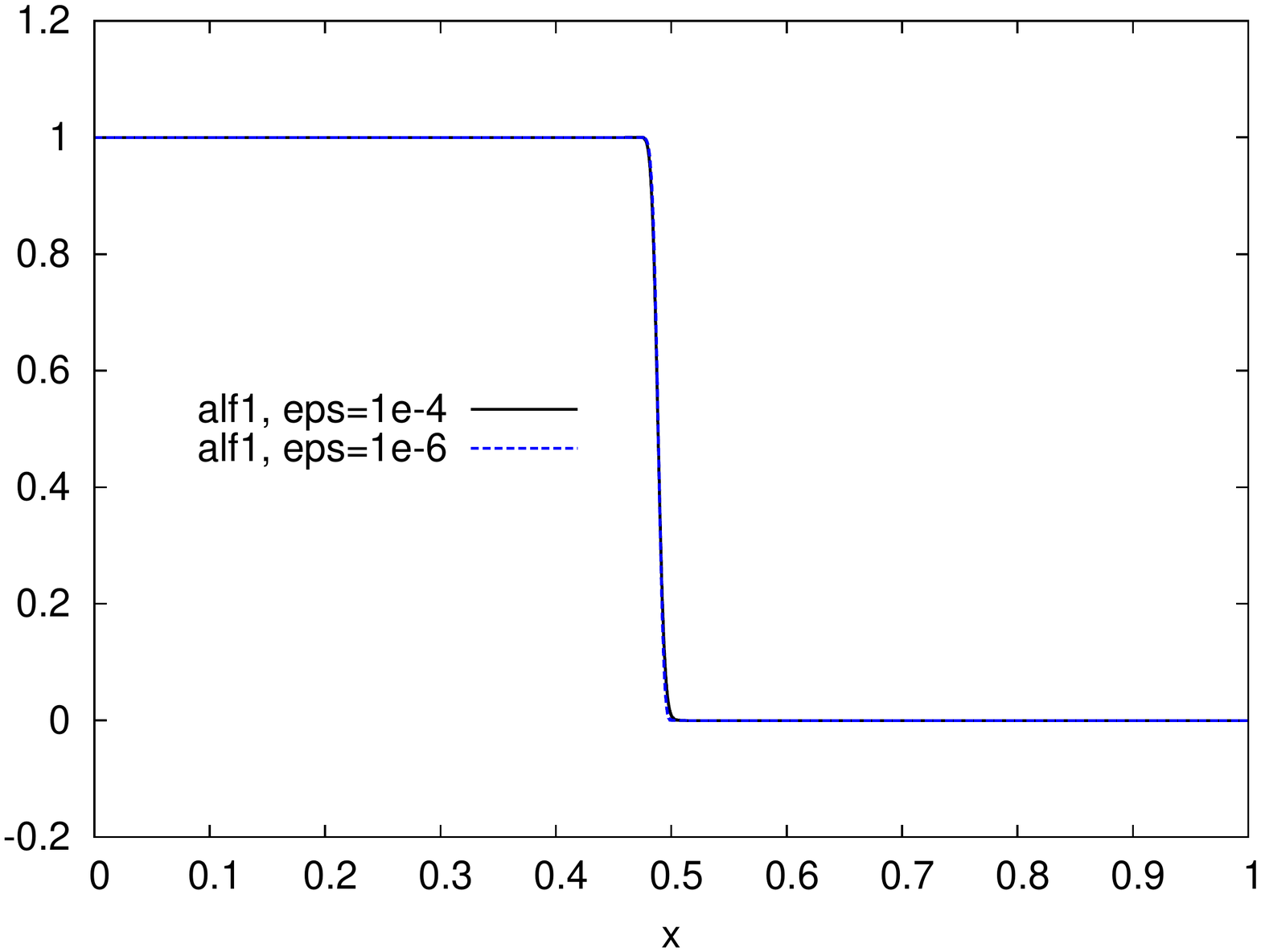}
  \includegraphics[width=6cm]{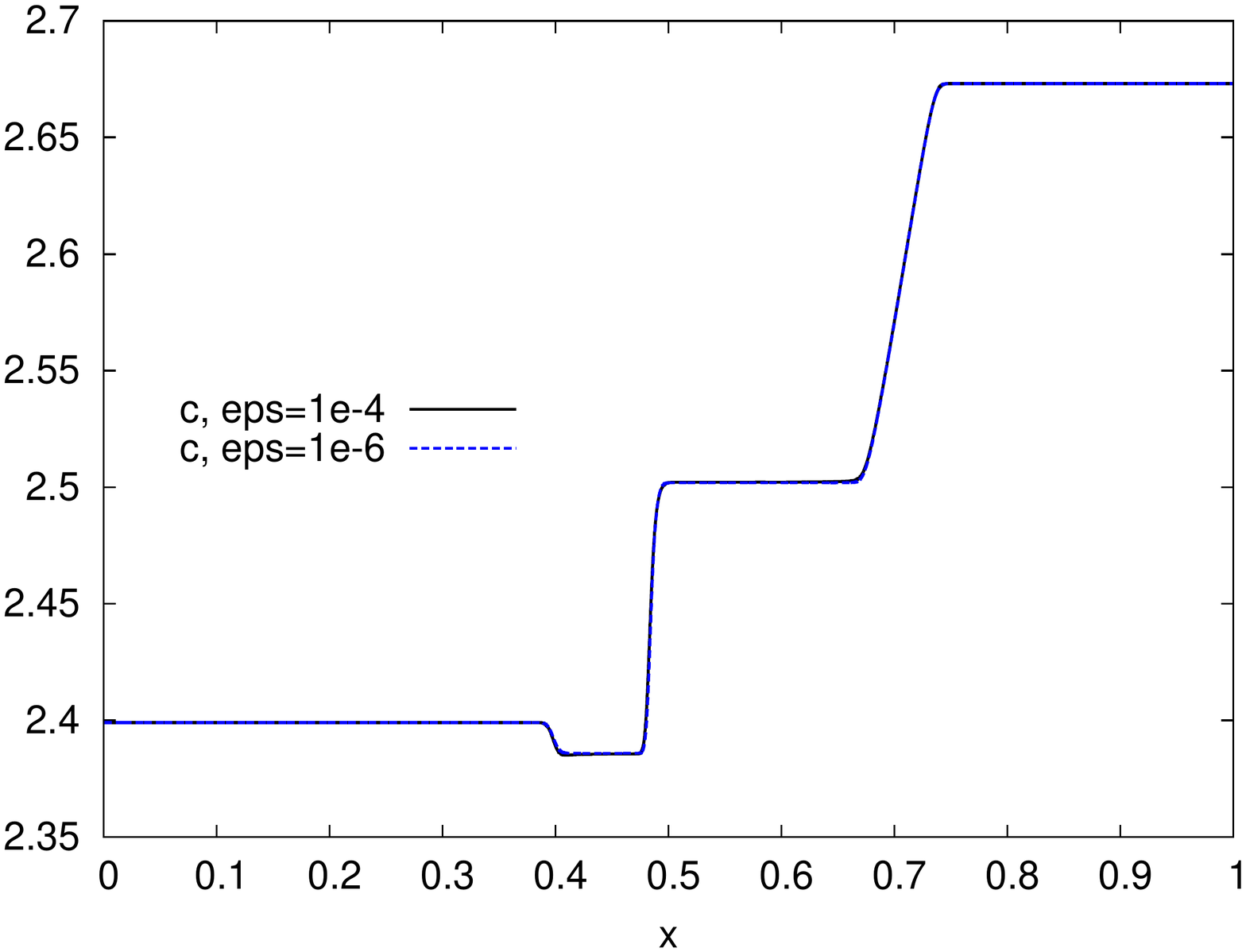}
  \includegraphics[width=6cm]{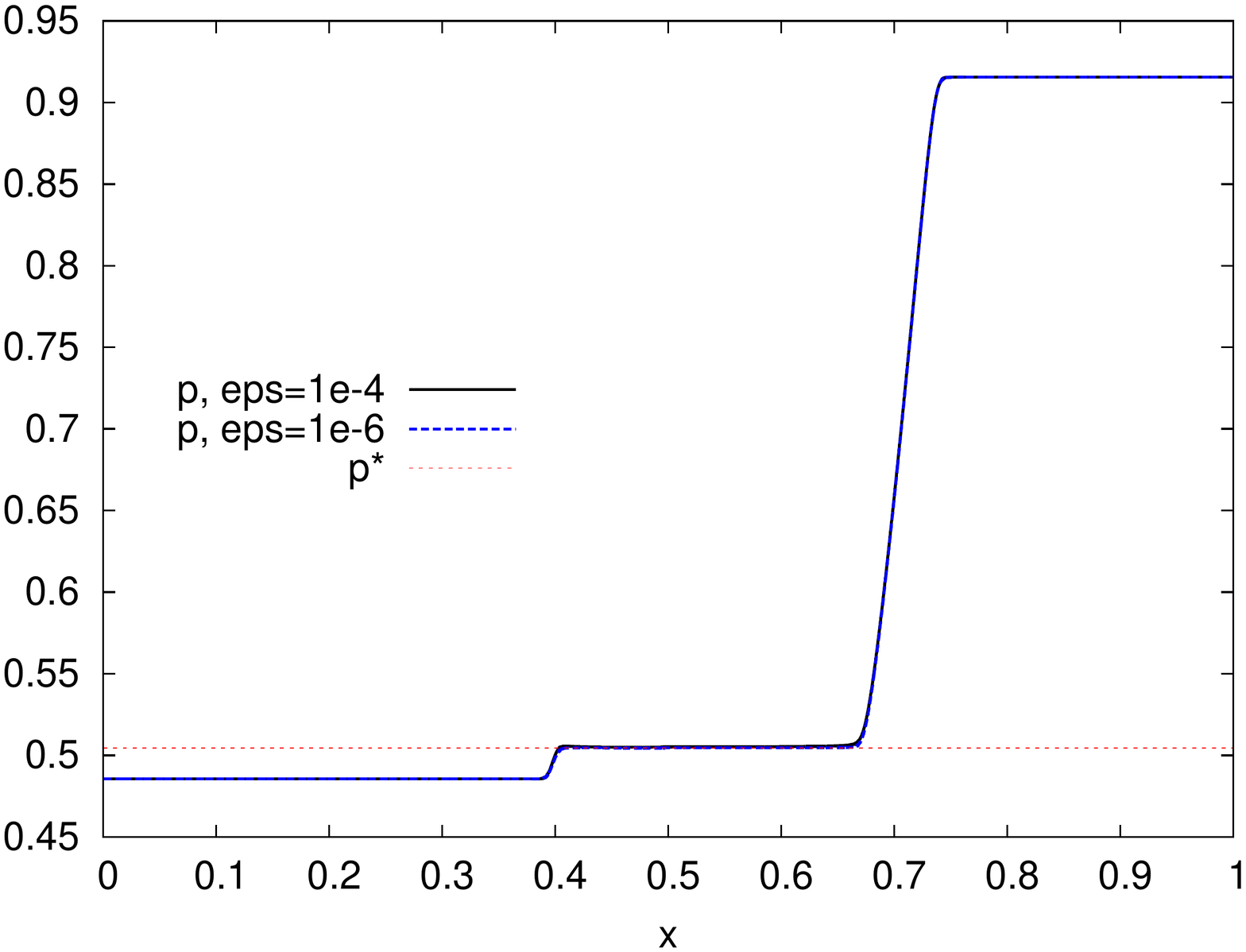}
  \includegraphics[width=6cm]{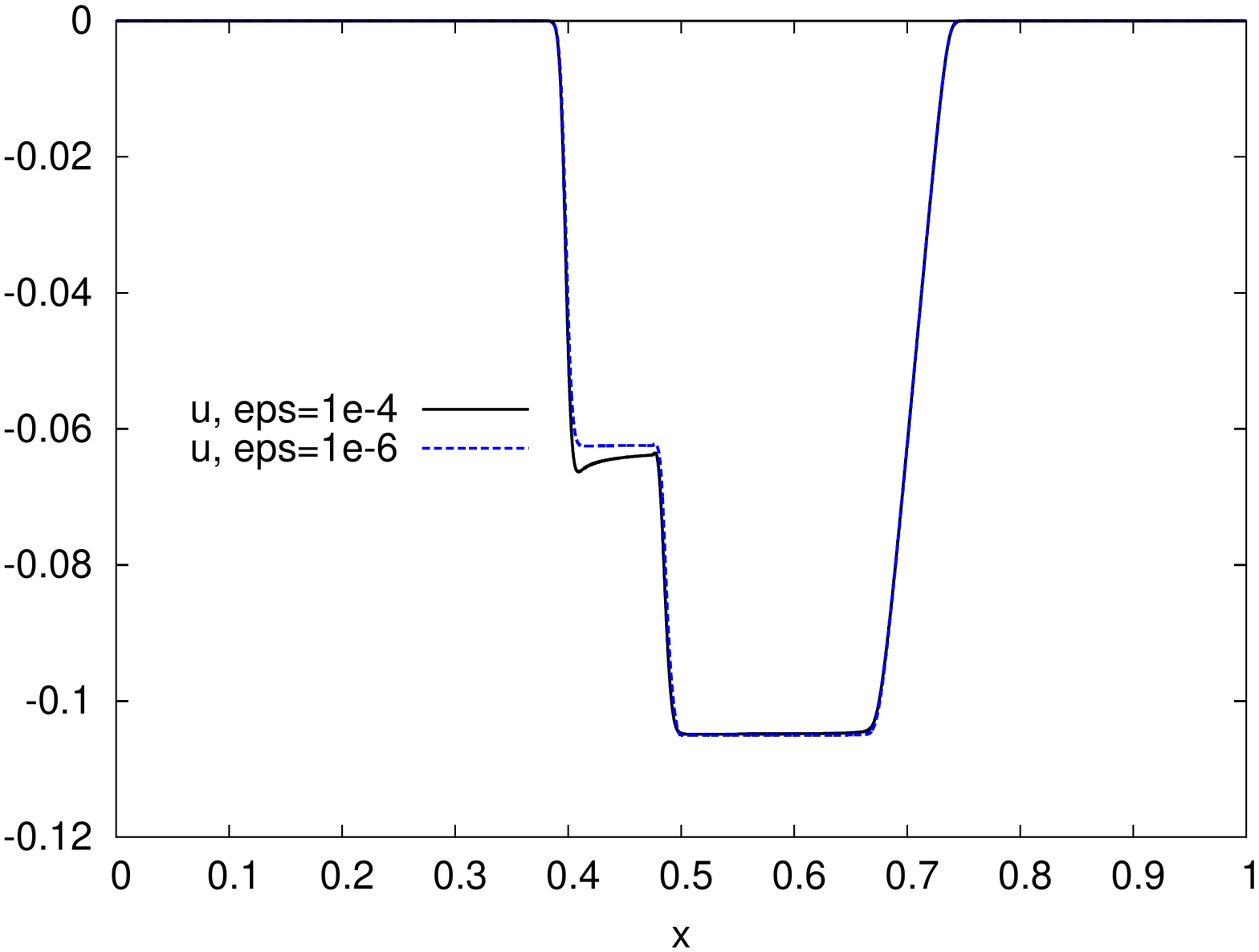}
  \caption{Riemann problem with
    phase transition. From top left to bottom right: density $\rho$,
    densities $rho_1$ and $\rho_2$, volume fraction $\alpha_1$, speed
    of sound $c$, pressure $p$ and
    velocity $u$.}
  \label{fig:transition1}
\end{figure}

\subsubsection{Cavitation by double rarefaction}
\label{sec:avit-double-raref}

The test consists in a liquid state submitted to a double
rarefaction wave. 
The initial state is given by
$\rho=\rho_2^*-10^{-3}$, $\rho_2=\rho_2^*$,
$\rho_1=\rho_1^*$ and the velocities are $u_R=0.3=-u_L$.
The total density corresponds to a metastable liquid,
and the initial volume fraction is $\alpha_1\simeq0.000672$, which means
that phase 2 is predominant.
The solution is computed at time $t=0.1$s. 
We observe on the plot of the volume fraction
(Figure~\ref{fig:cavit1}-second line left) that a
bubble of stable vapor appears around the interface $x=0.5$.
The value of $\varepsilon$ does not modify the profile of the bubble.
However the pressure profile is sharper for $\varepsilon=10^{-6}$ than for $10^{-4}$, see
Figure~\ref{fig:cavit1}-bottom left.

\begin{figure}
  \centering
  \includegraphics[width=6cm]{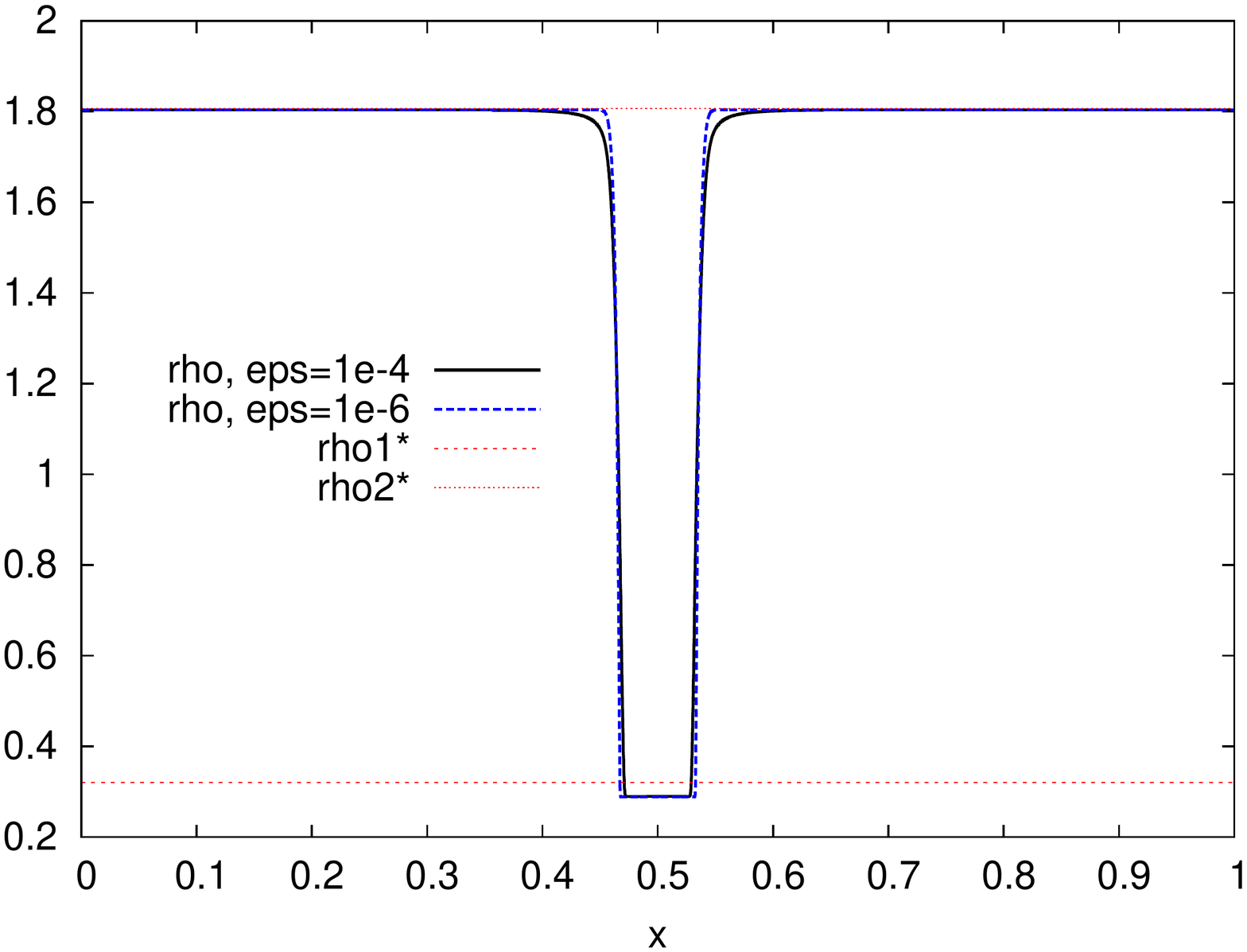}
  \includegraphics[width=6cm]{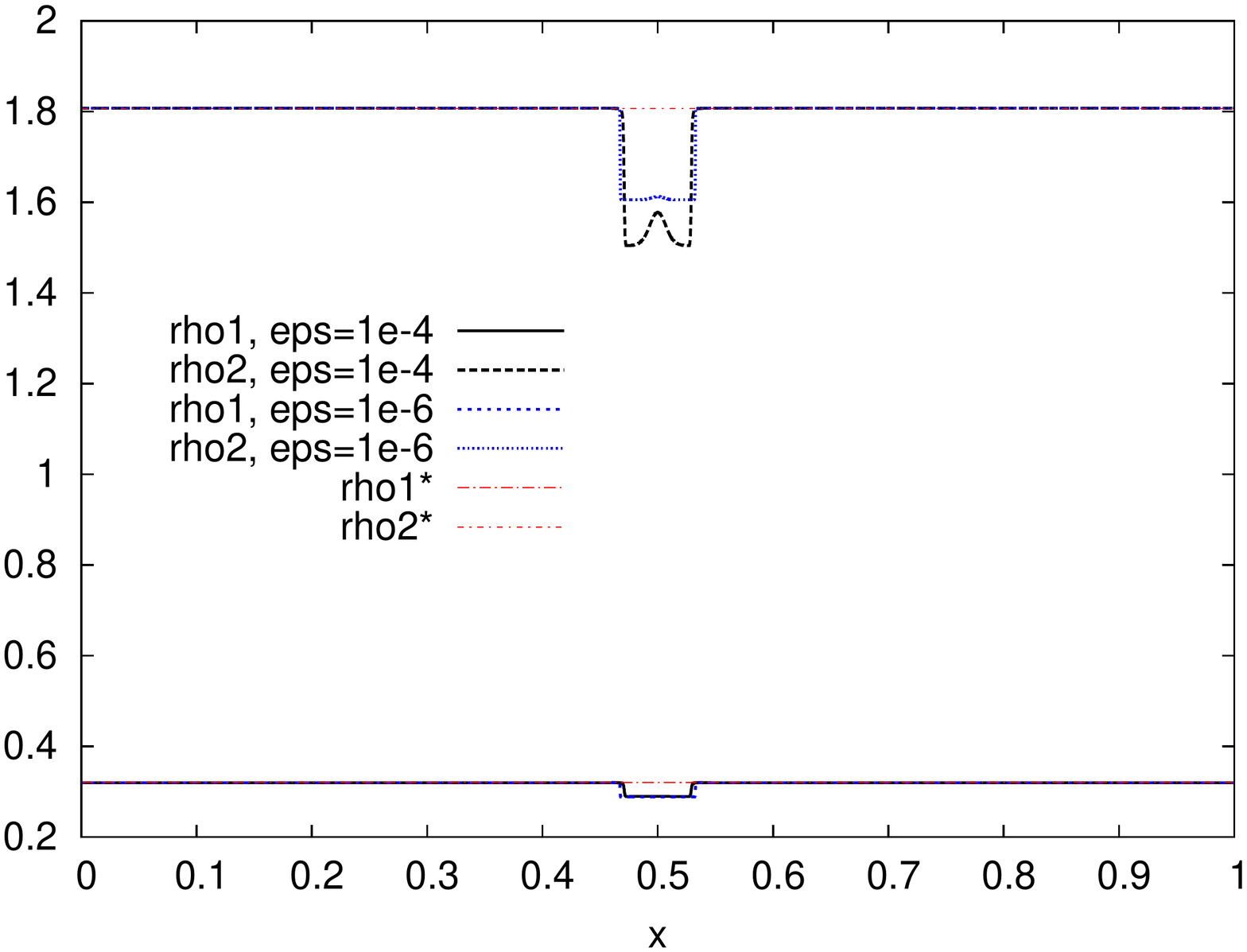}
  \includegraphics[width=6cm]{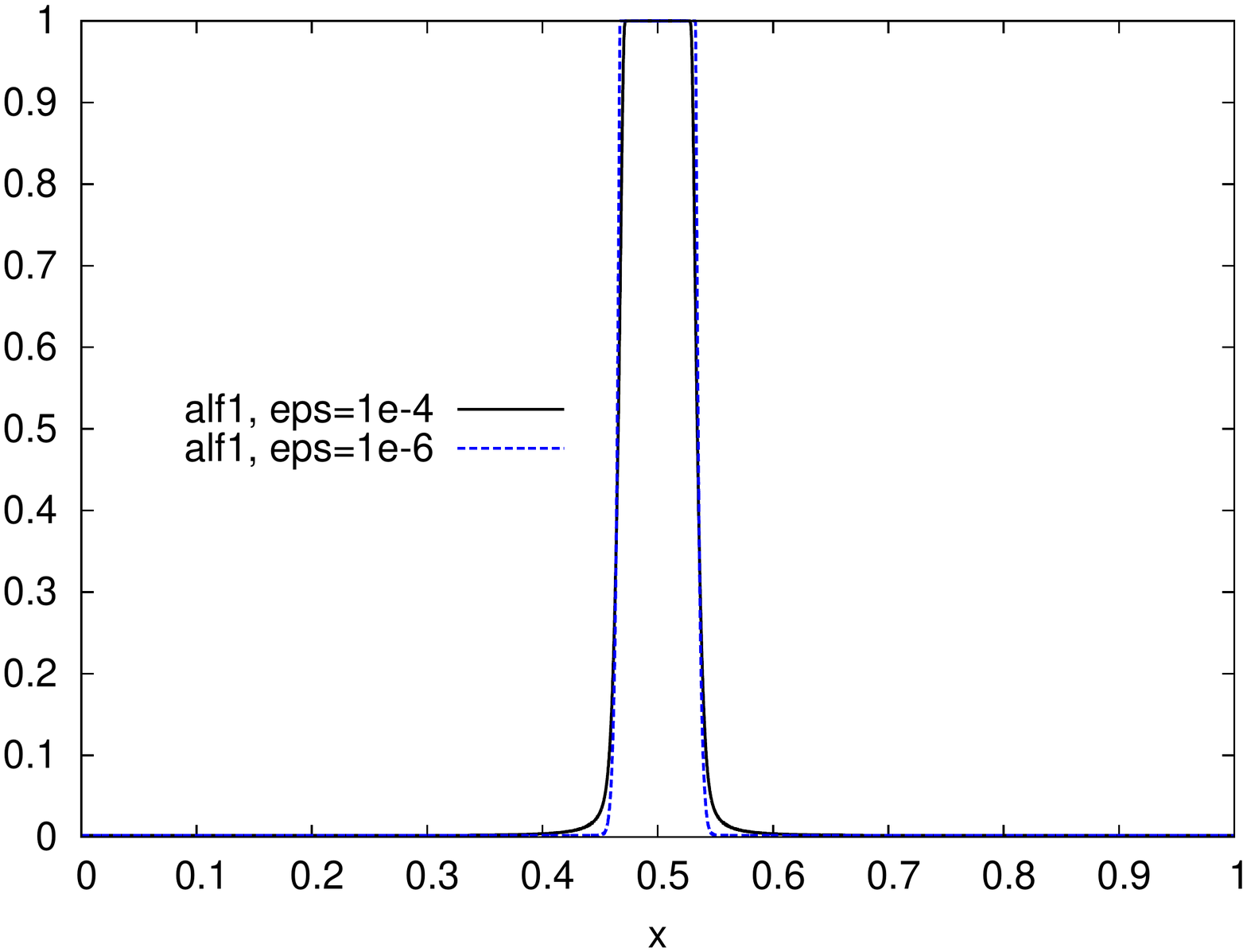}
  \includegraphics[width=6cm]{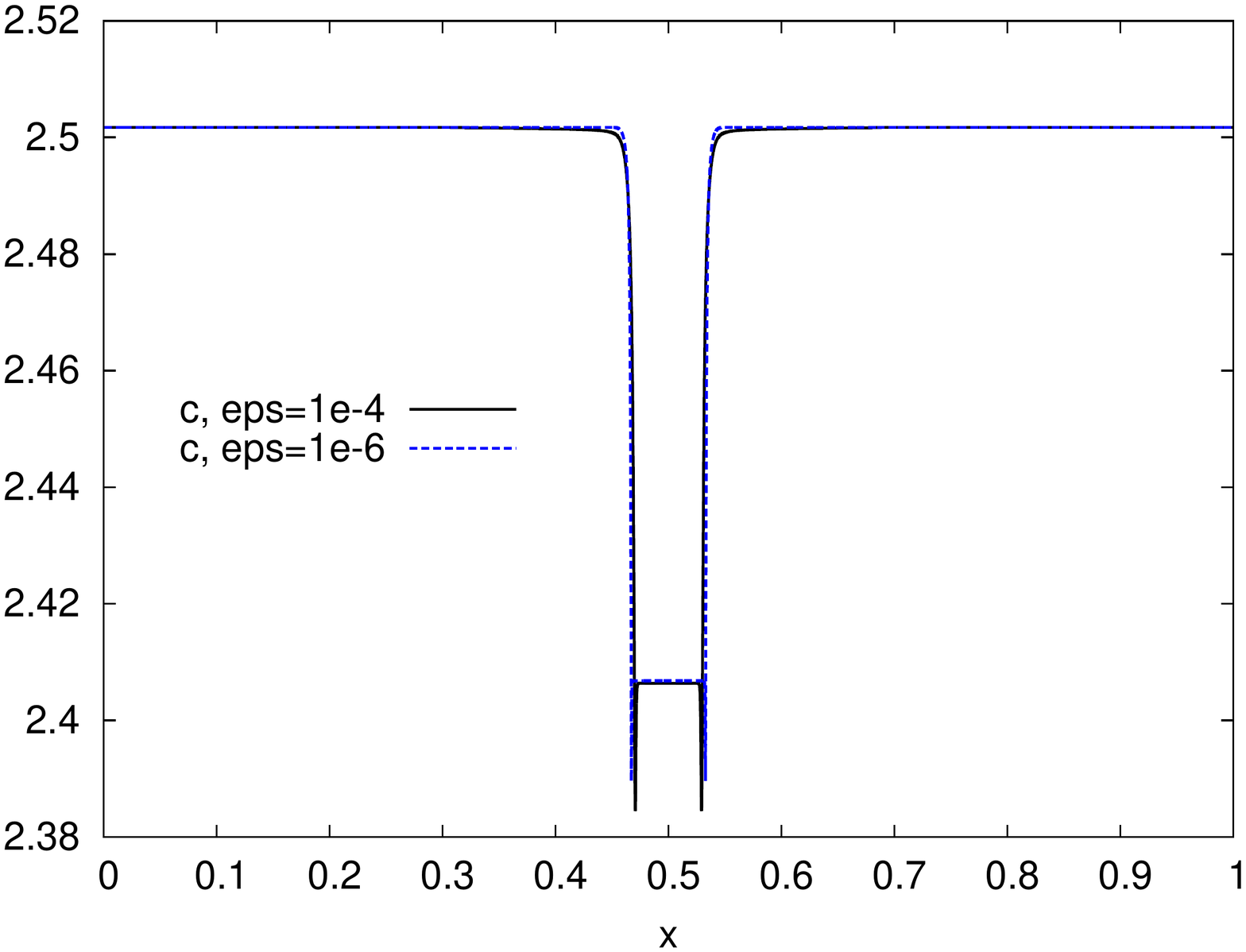}
  \includegraphics[width=6cm]{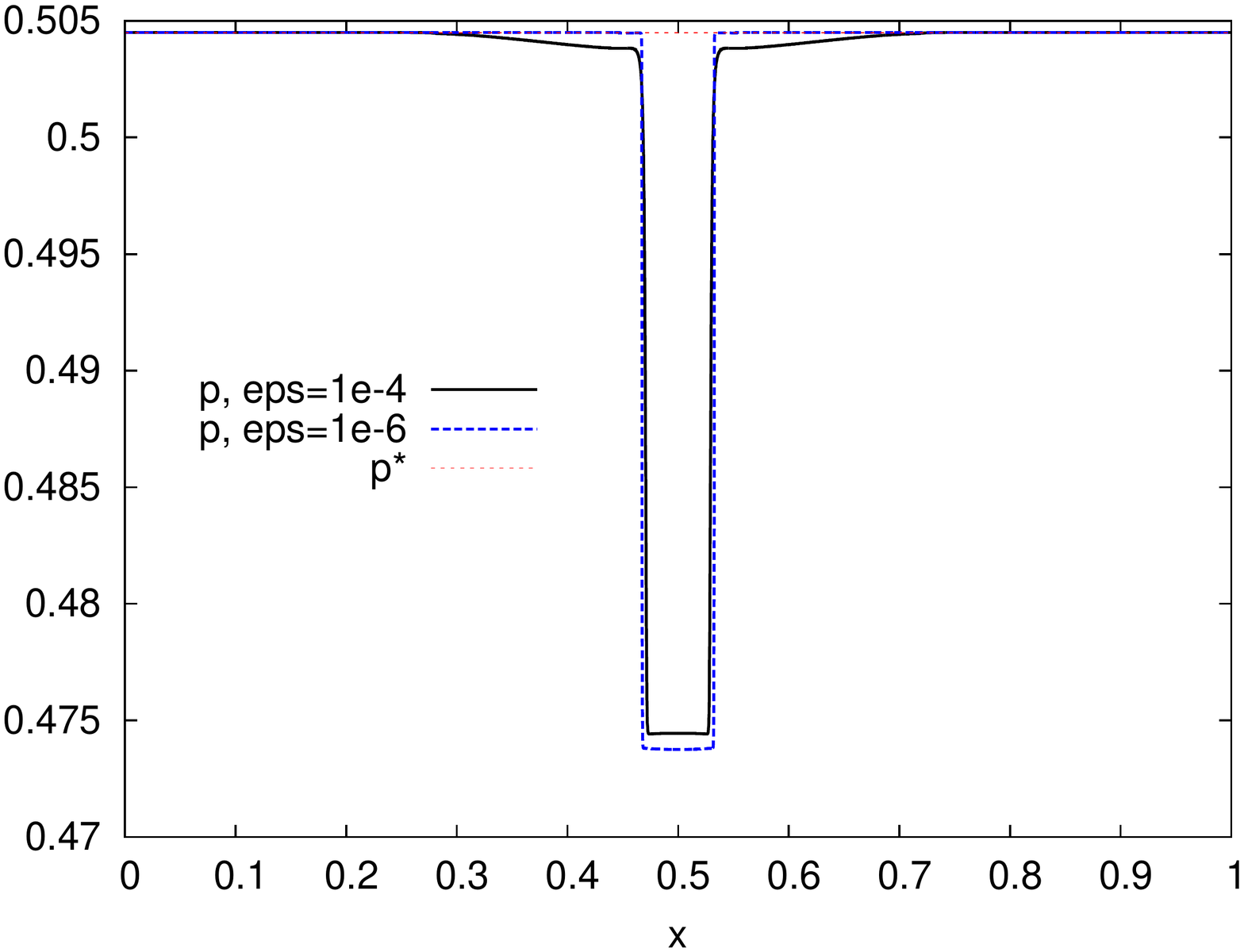}
  \includegraphics[width=6cm]{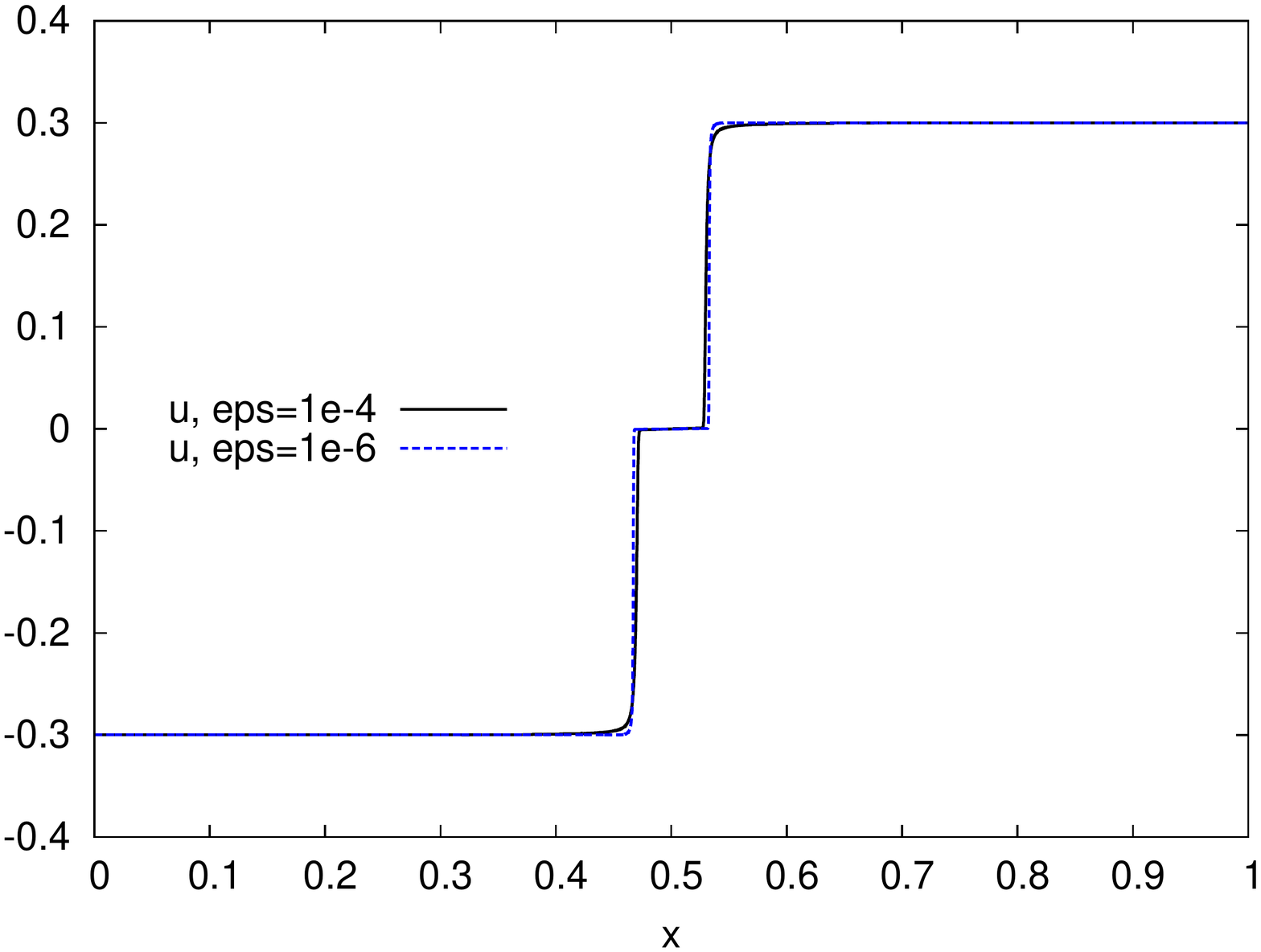}
  \caption{Cavitation by double rarefaction. From top left to bottom right: density $\rho$,
    densities $rho_1$ and $\rho_2$, volume fraction $\alpha_1$, speed
    of sound $c$, pressure $p$ and
    velocity $u$.}
  \label{fig:cavit1}
\end{figure}

\subsubsection{Nucleation by double shock}
\label{sec:bulle-double-choc}

The test consists in a pure stable gaseous state submitted to a double
shock wave. The initial state is given by
$\rho=\rho_1=0.3$,
$\rho_2=1$ and the velocities are $u_R=-0.3=-u_L$.
The solution is computed at time $t=0.4$s.
Note that phase 2 is not present initially but is fixed in the spinodal zone.
We observe two different behaviours of the solution depending on the value of 
$\varepsilon$.
For $\varepsilon=10^{-6}$, the profile of the volume fraction (see
Figure~\ref{fig:nucl1}-second line left)
shows that a droplet of liquid appears around $x=0.5$ which is not 
the case for $\varepsilon=10^{-4}$ where there is no droplet.
For $\varepsilon=10^{-6}$, one observes on the density plot (see
Figure~\ref{fig:nucl1}-top left) that
the liquid state inside the droplet admits a density  $\rho$ close to $\rho_2^*$ with small oscillations.
The droplet is surrounded by 
two mixture areas with $\rho=\rho_1^*$.
The pressure curve inside the droplet presents oscillations  (see
Figure~\ref{fig:nucl1}-bottom left)
which might be due to a loss of hyperbolicity due to the lack of
accuracy in the approximation of the source \eqref{eq:SJ}.

\begin{figure}
  \centering
  \includegraphics[width=6cm]{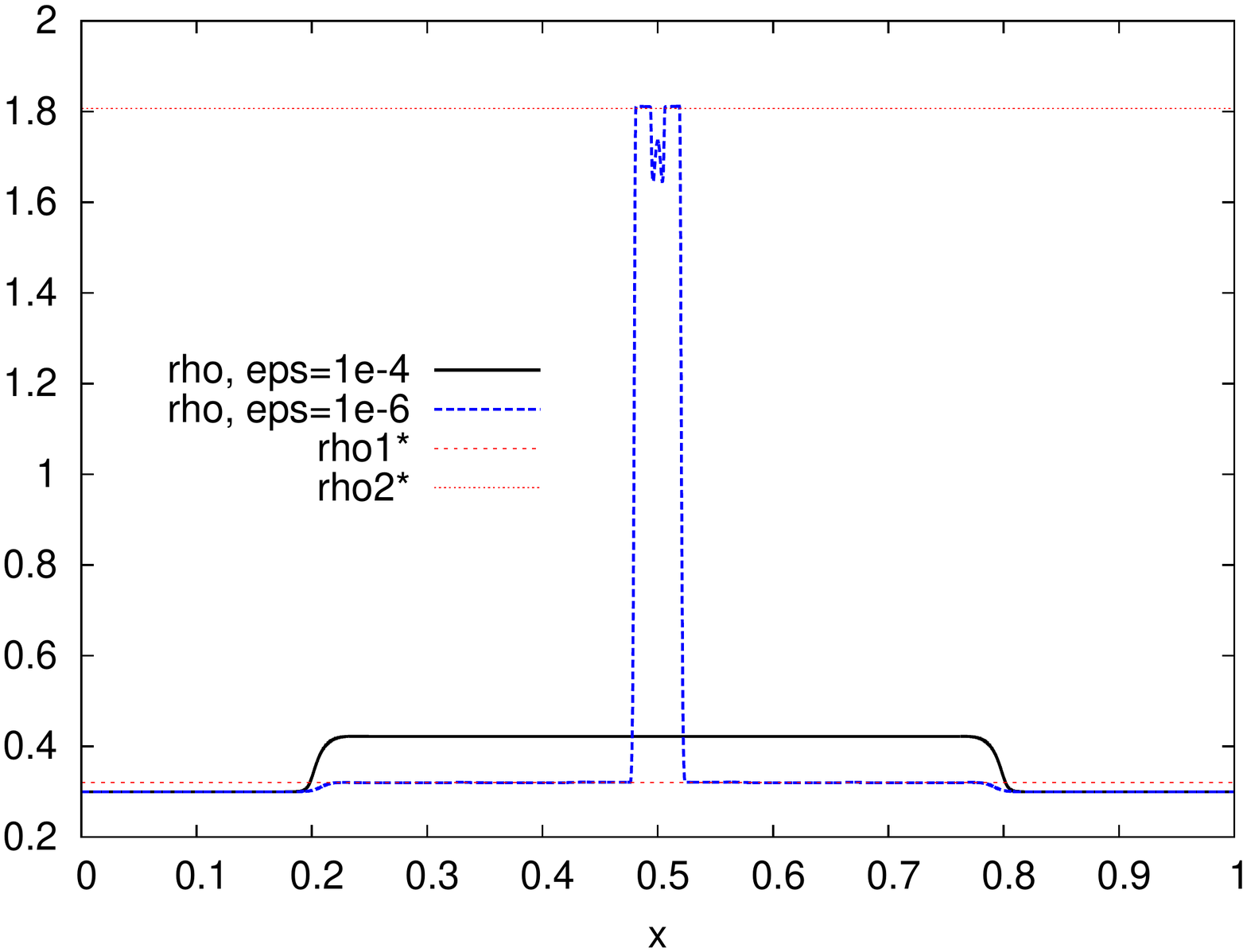}
  \includegraphics[width=6cm]{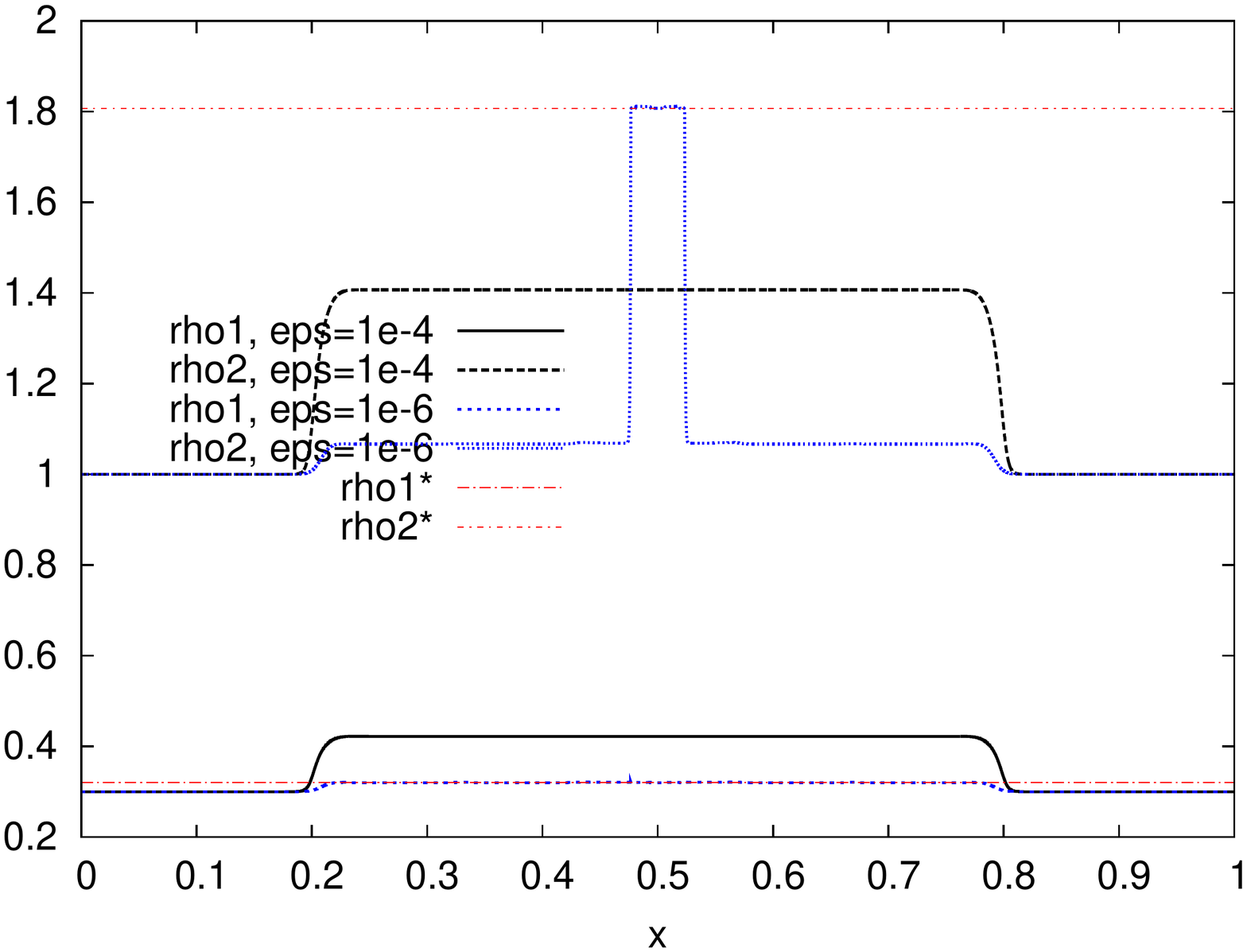}
  \includegraphics[width=6cm]{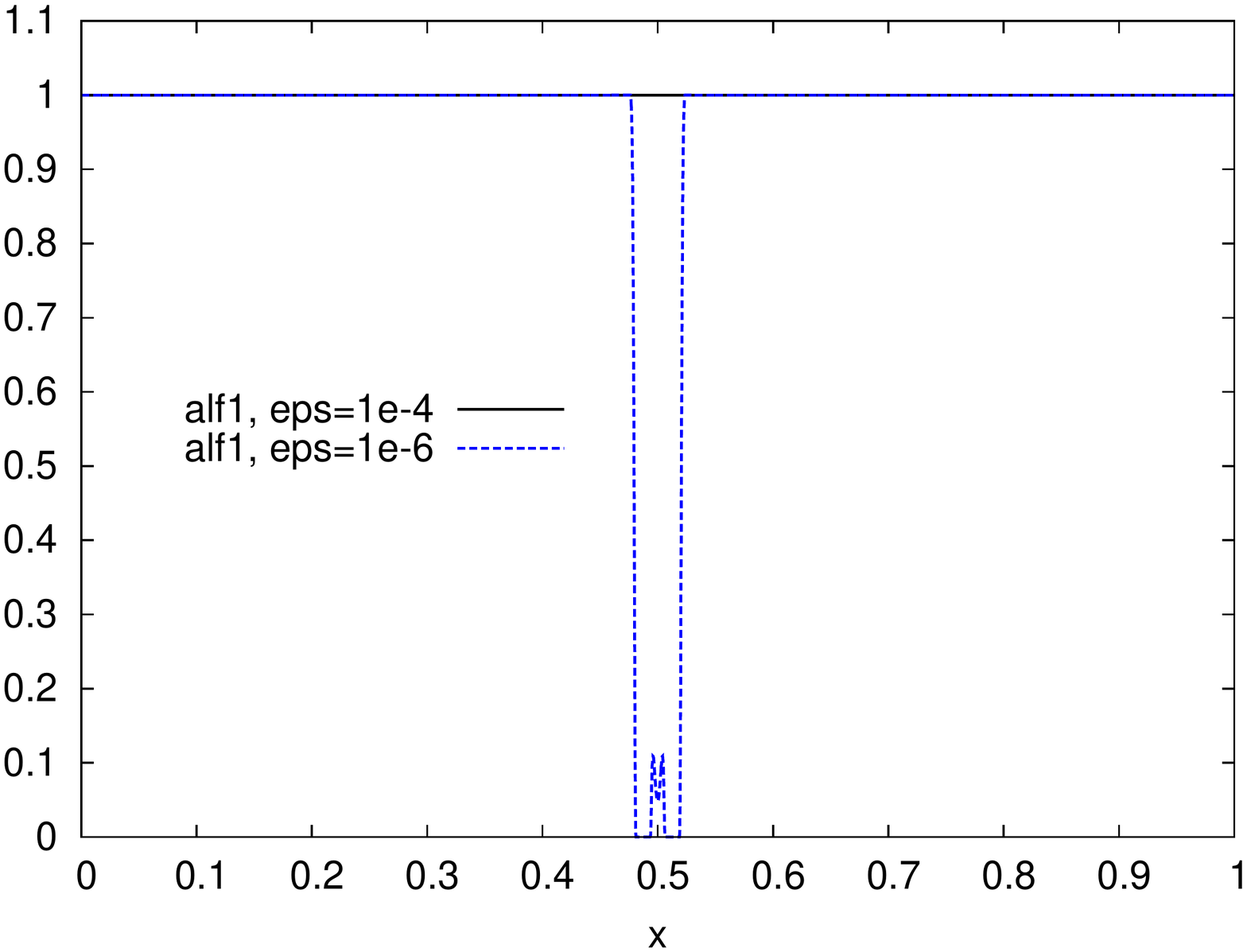}
  \includegraphics[width=6cm]{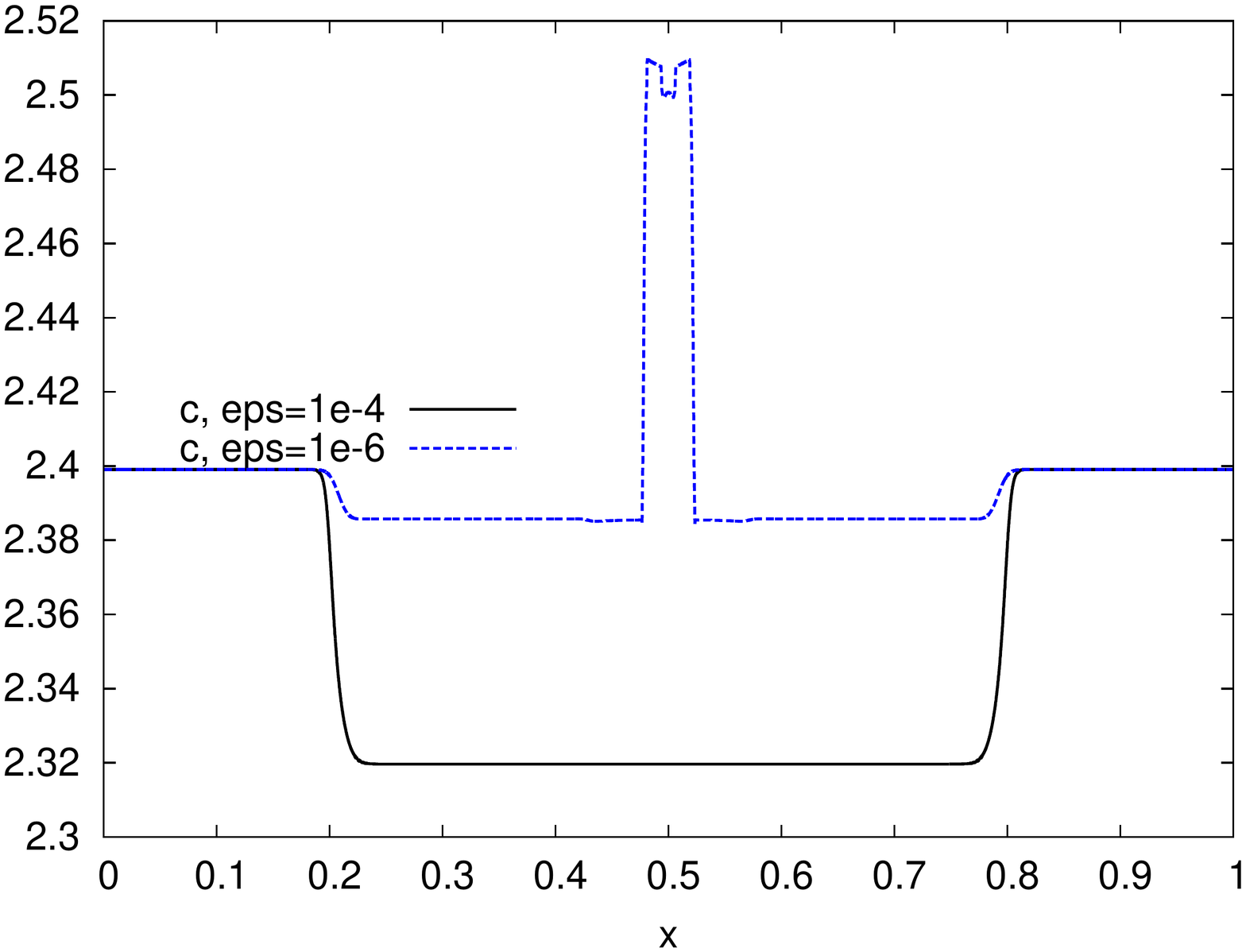}
  \includegraphics[width=6cm]{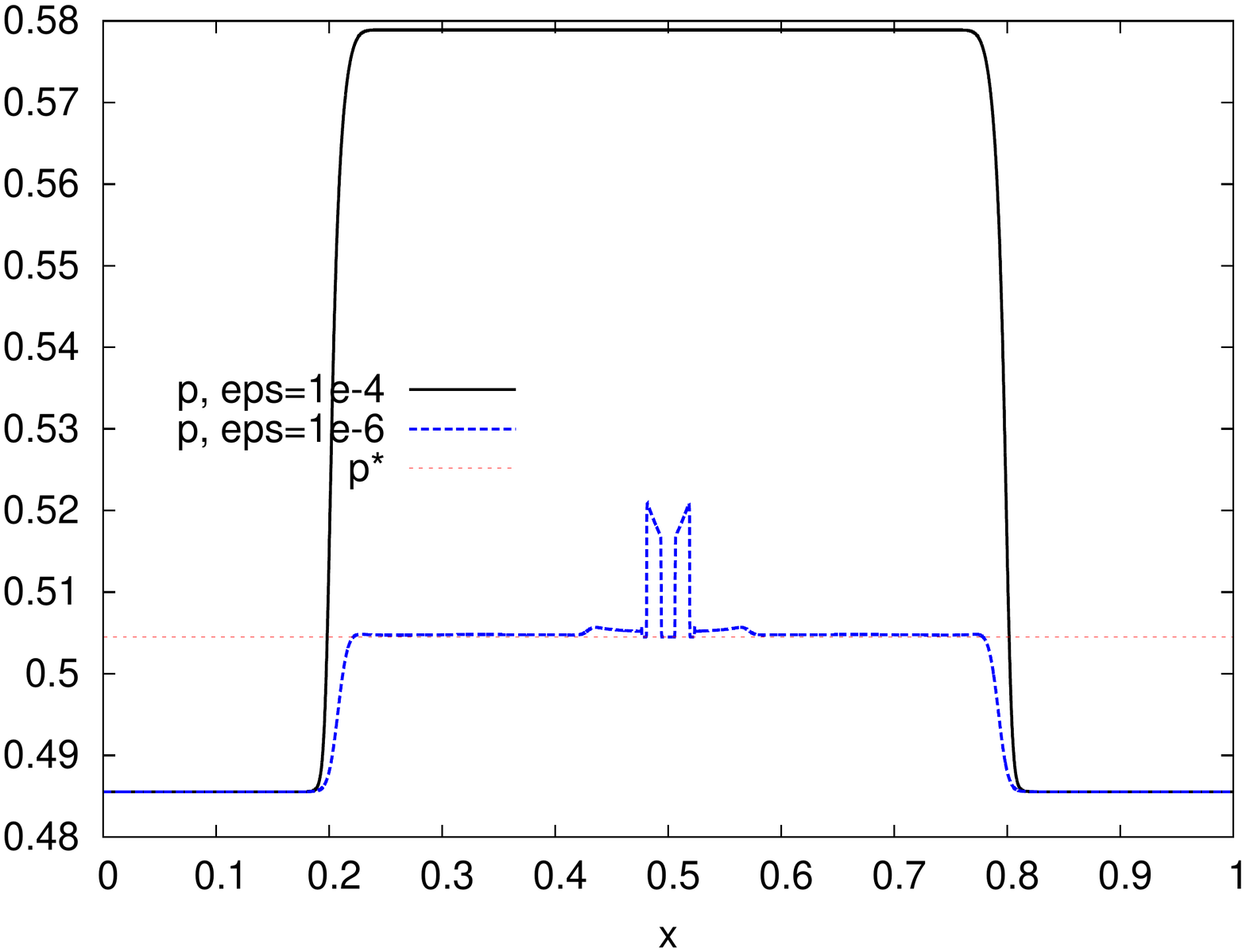}
  \includegraphics[width=6cm]{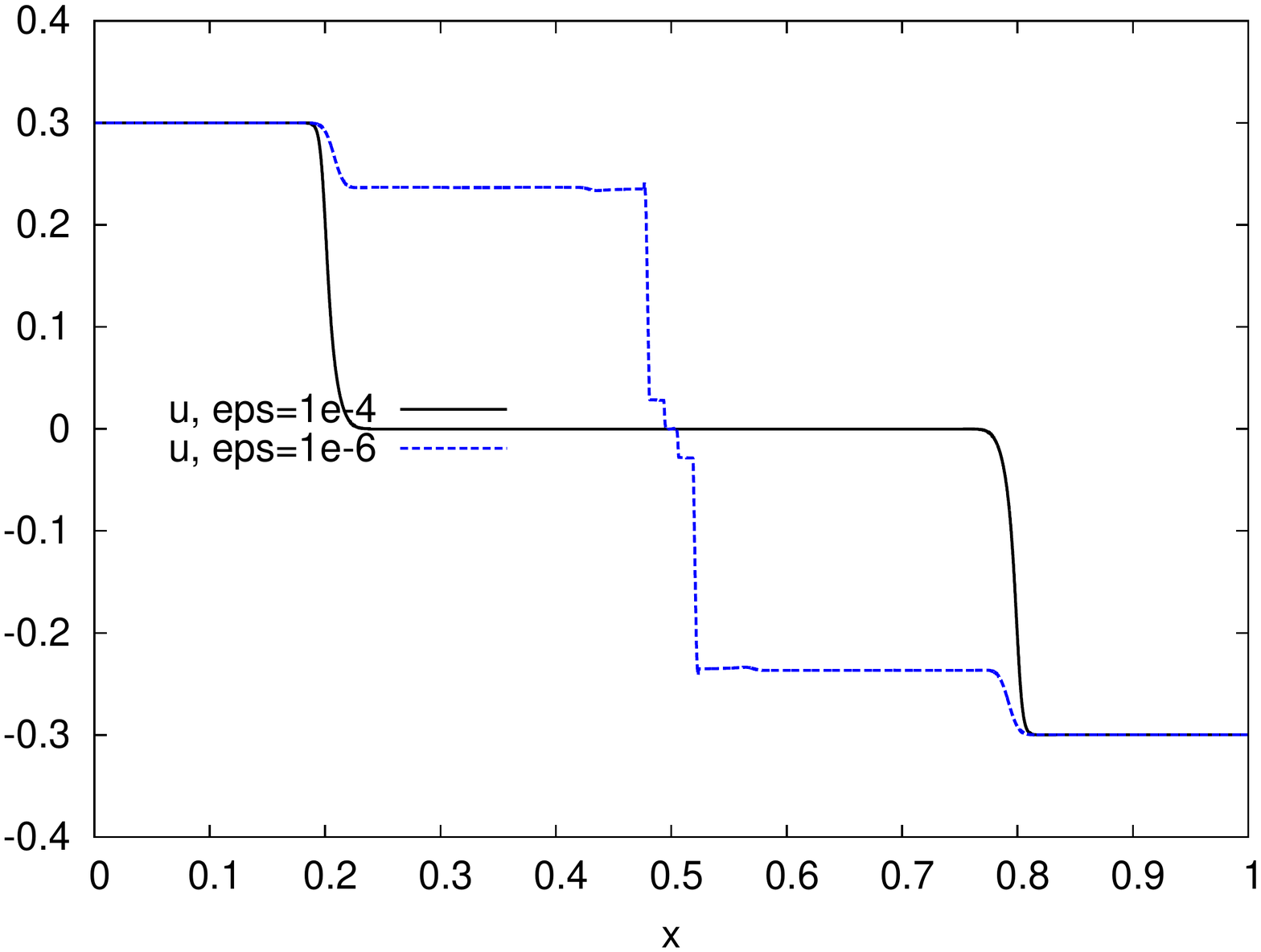}
  \caption{Nucleation by double shock. From top left to bottom right: density $\rho$,
    densities $rho_1$ and $\rho_2$, volume fraction $\alpha_1$, speed
    of sound $c$, pressure $p$ and
    velocity $u$.}
  \label{fig:nucl1}
\end{figure}

\subsubsection{Acoustic perturbation of a metastable state}
\label{sec:perturb-meta}

This example consists in a constant metastable vapor state, with a perturbation in the velocity.
The initial state is 
$\rho=0.42$,
$\rho_1=0.32$, $\rho_2=0.52$ and the velocities are $u_L=0.4$, $u_R=0$.
Both densities $\rho_1$ and $\rho_2$ are in the metastable state, 
and we impose a compression from the left with velocity $0.4$. 

The compression induces the appearance of droplet of pure liquid which moves from 
the left to the right, see the time evolution on Figure \ref{fig:meta1}. 
With a smaller velocity perturbation the structure of the waves is similar, 
but there is no creation of a droplet at the interface. 
One can check on Figure~\ref{fig:meta1}- top left that the density $\rho$ inside the droplet is larger than
$\rho_2^*$. The droplet is surrounded by two areas with a mixture state with $\rho=\rho_1^*$.
The velocity and pressure profiles exhibit spikes on both sides of the droplet, which amplitude decreases when $\varepsilon$ decreases,
see Figures  \ref{fig:meta1}, \ref{fig:meta2} and \ref{fig:meta4}.
Notice also that the pressure in the mixture zone is not at the value $p^*$. It seems that when $\varepsilon$ decreases
the value is closer, this may indicate that the source term has not reached the equilibrium state yet.

\begin{figure}
\begin{center}
  \includegraphics[width=6cm]{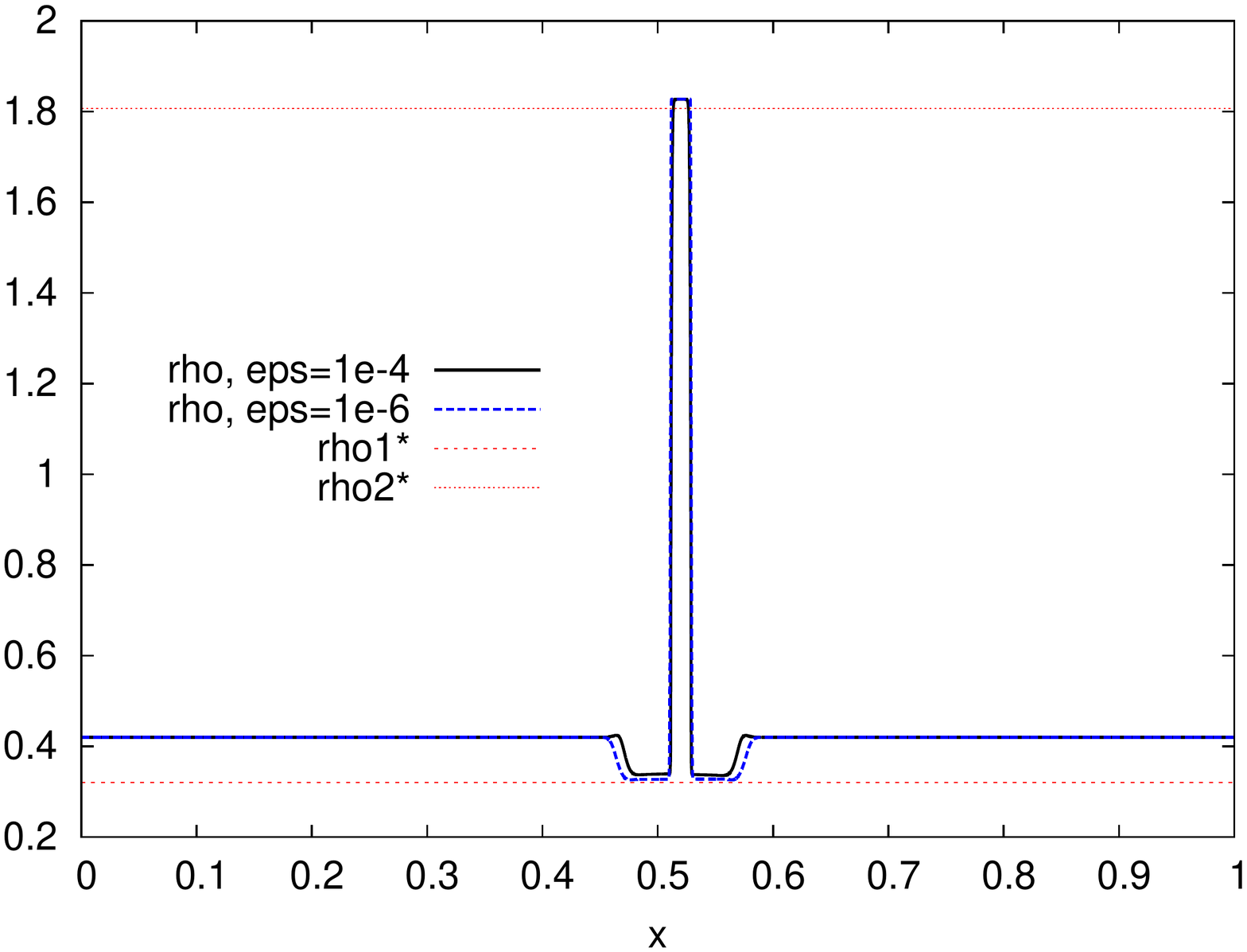}
  \includegraphics[width=6cm]{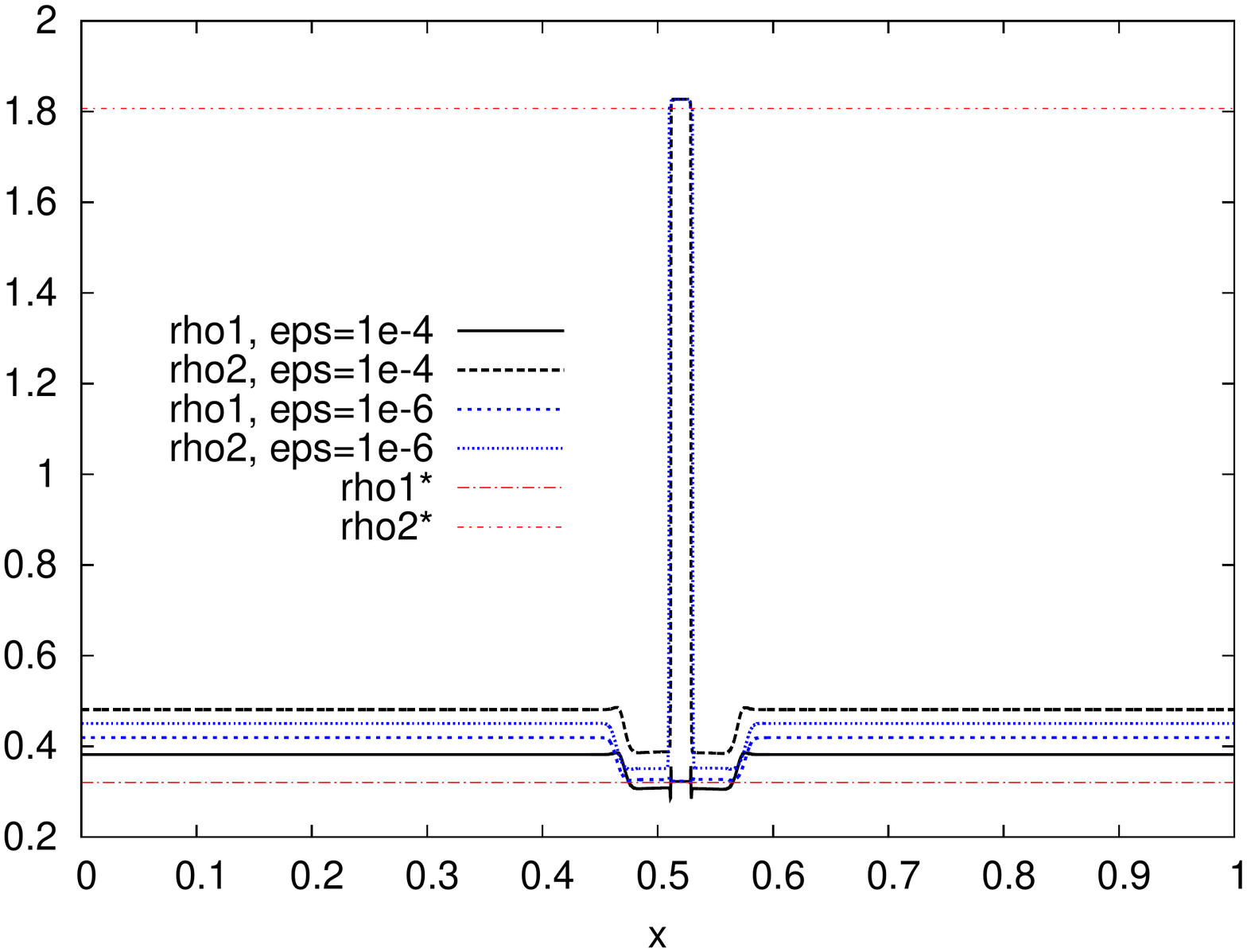}
  \includegraphics[width=6cm]{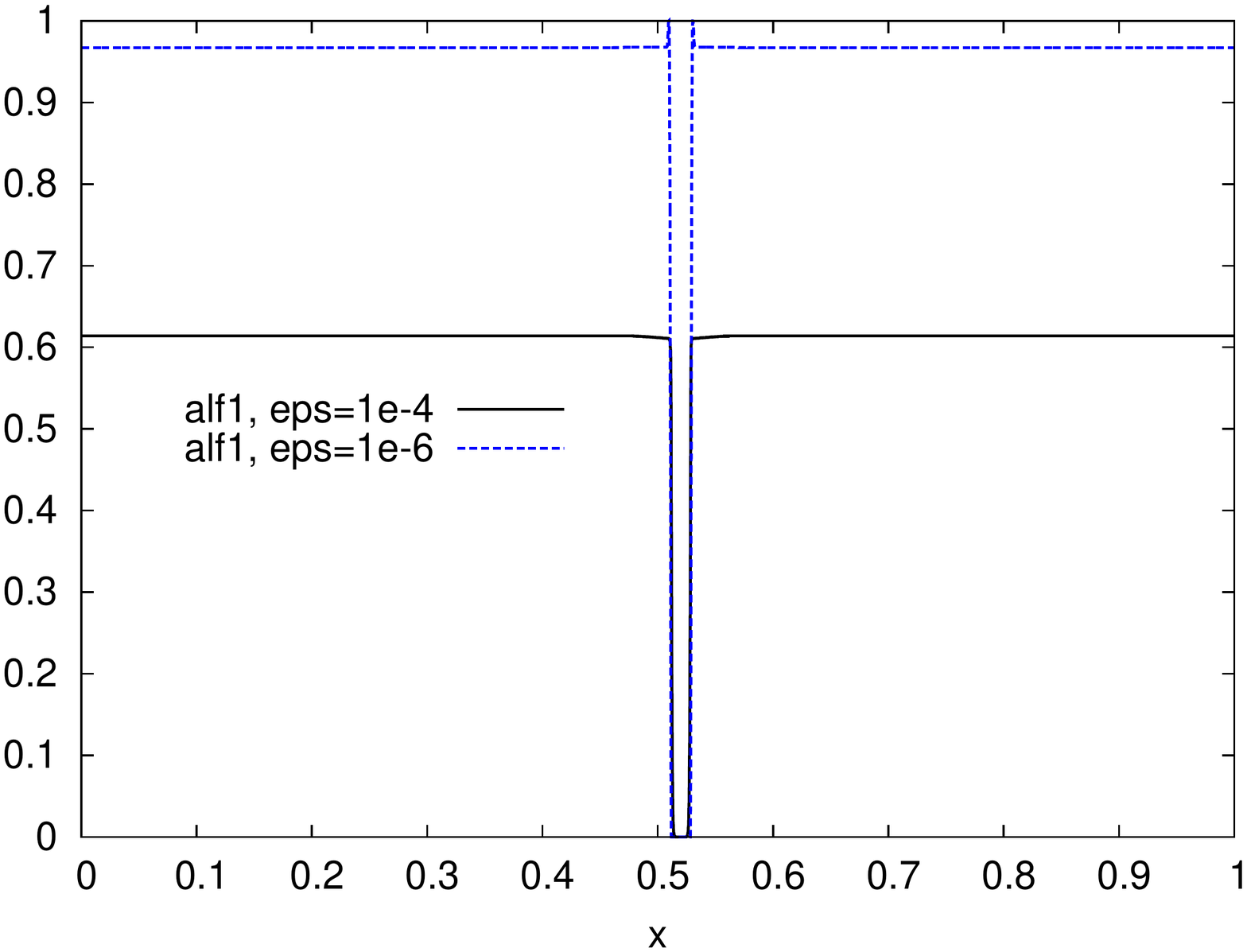}
  \includegraphics[width=6cm]{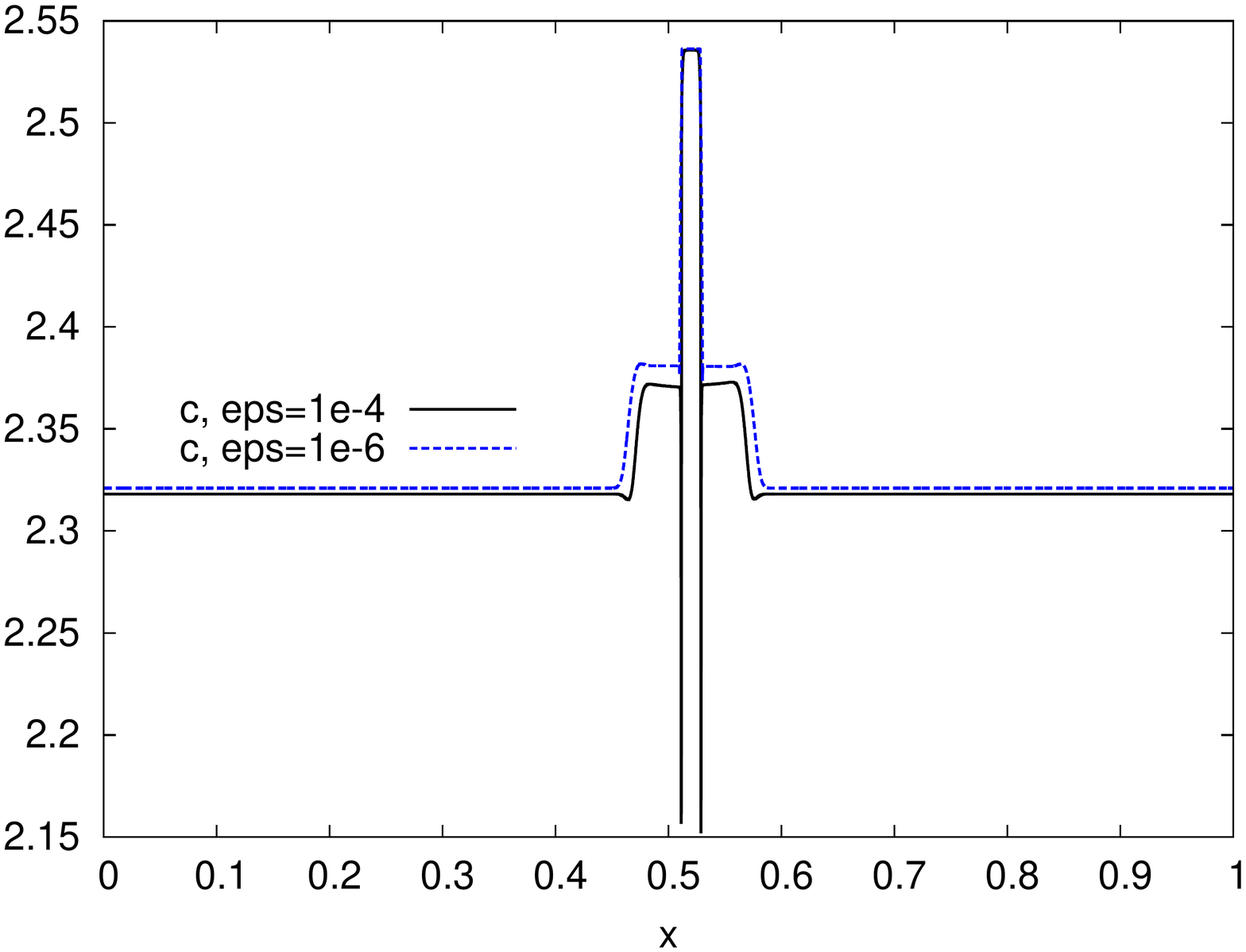}
  \includegraphics[width=6cm]{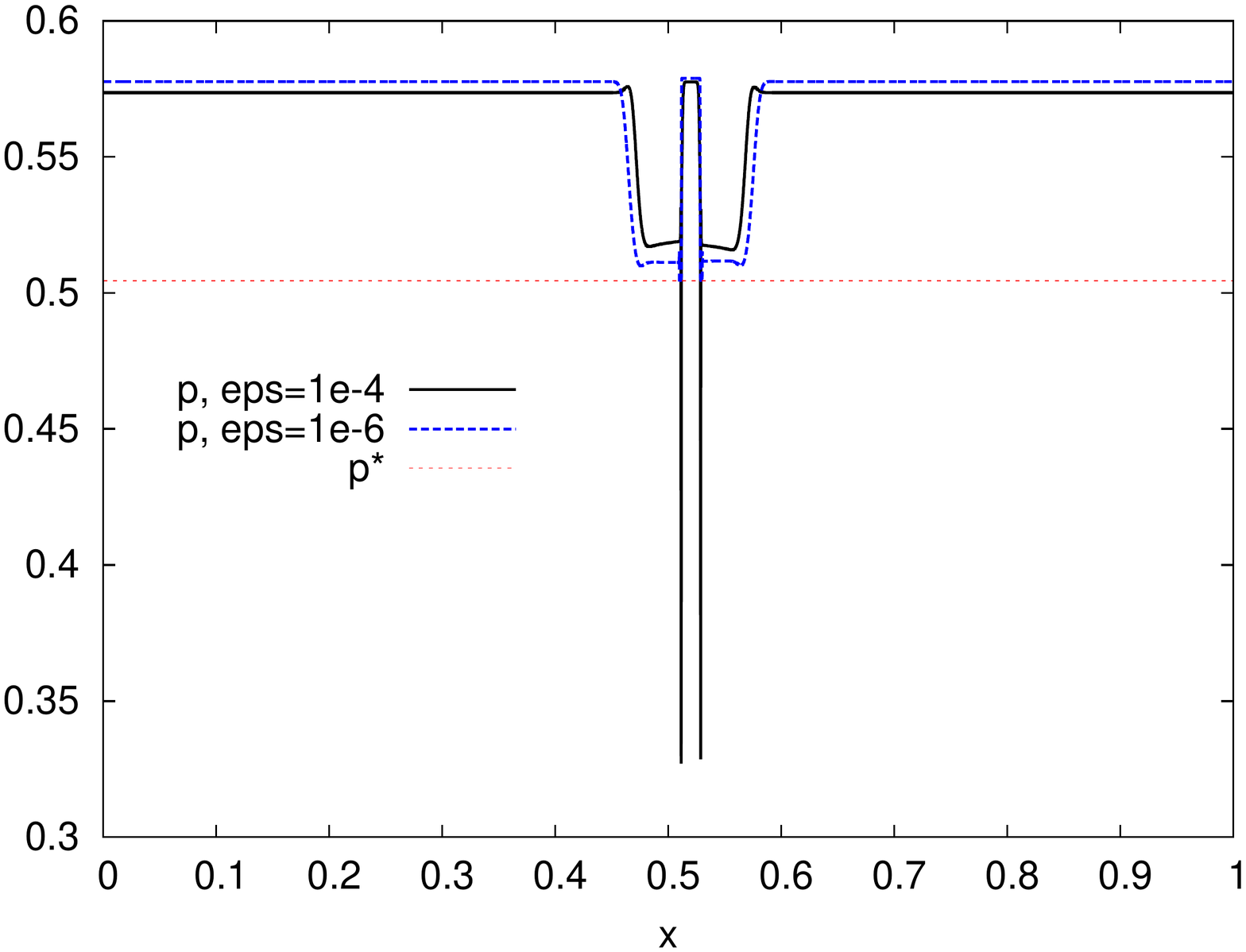}
  \includegraphics[width=6cm]{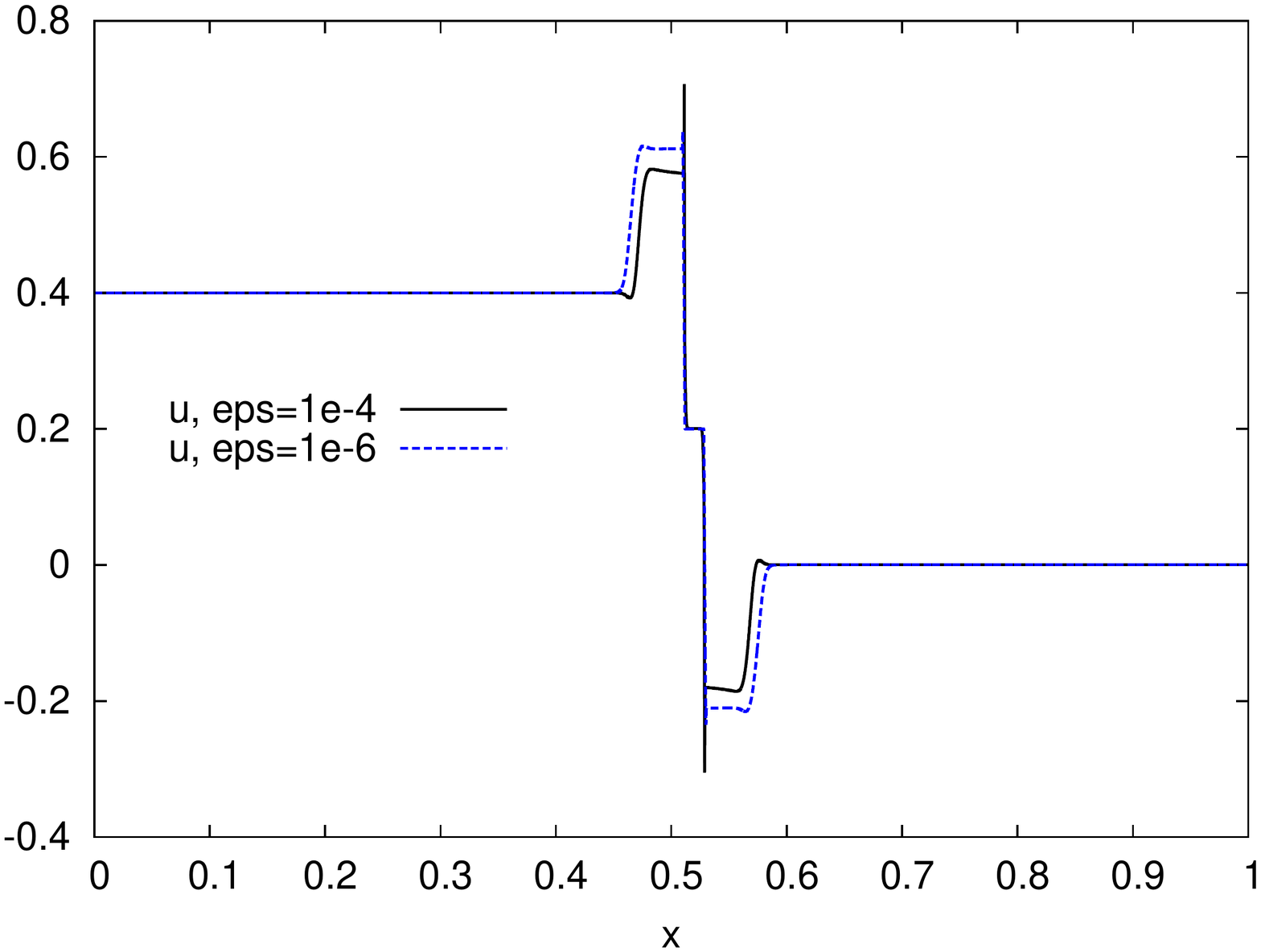}
\caption{Perturbation of a metastable state at t=0.1s. From top left to bottom right: density $\rho$,
    densities $rho_1$ and $\rho_2$, volume fraction $\alpha_1$, speed
    of sound $c$, pressure $p$ and
    velocity $u$.}
\label{fig:meta1}
\end{center}
\end{figure}

\begin{figure}
\begin{center}
  \includegraphics[width=6cm]{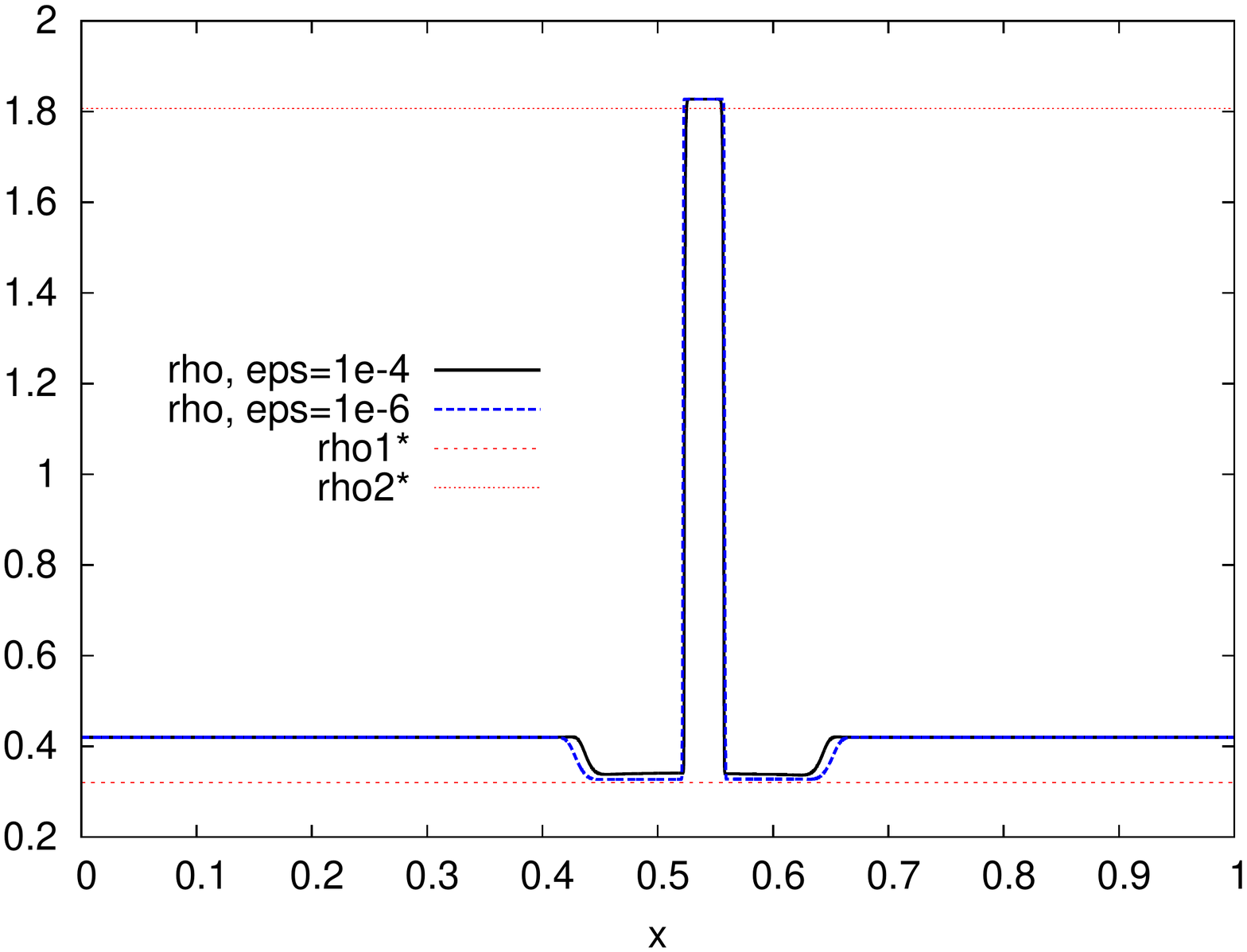}
  \includegraphics[width=6cm]{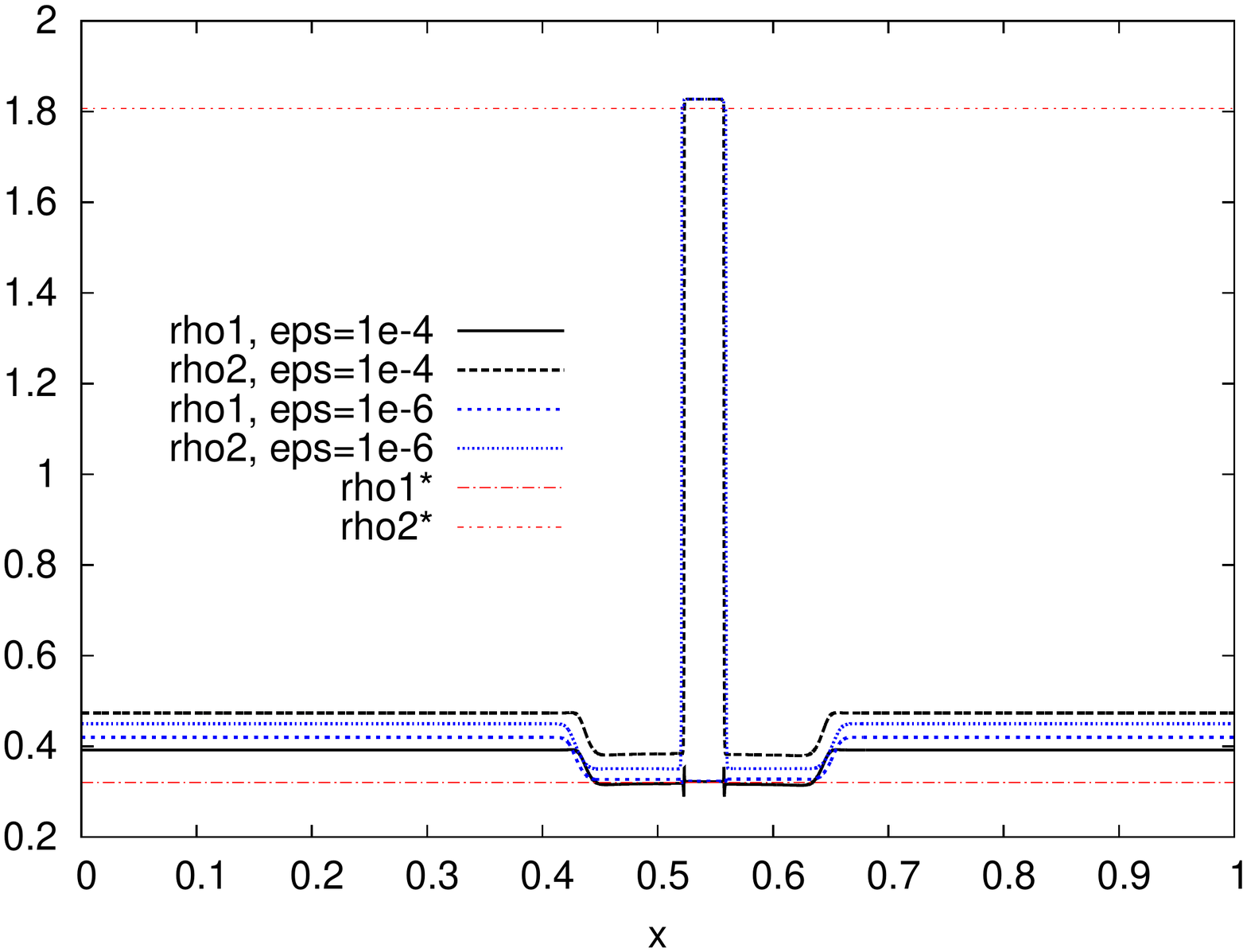}
  \includegraphics[width=6cm]{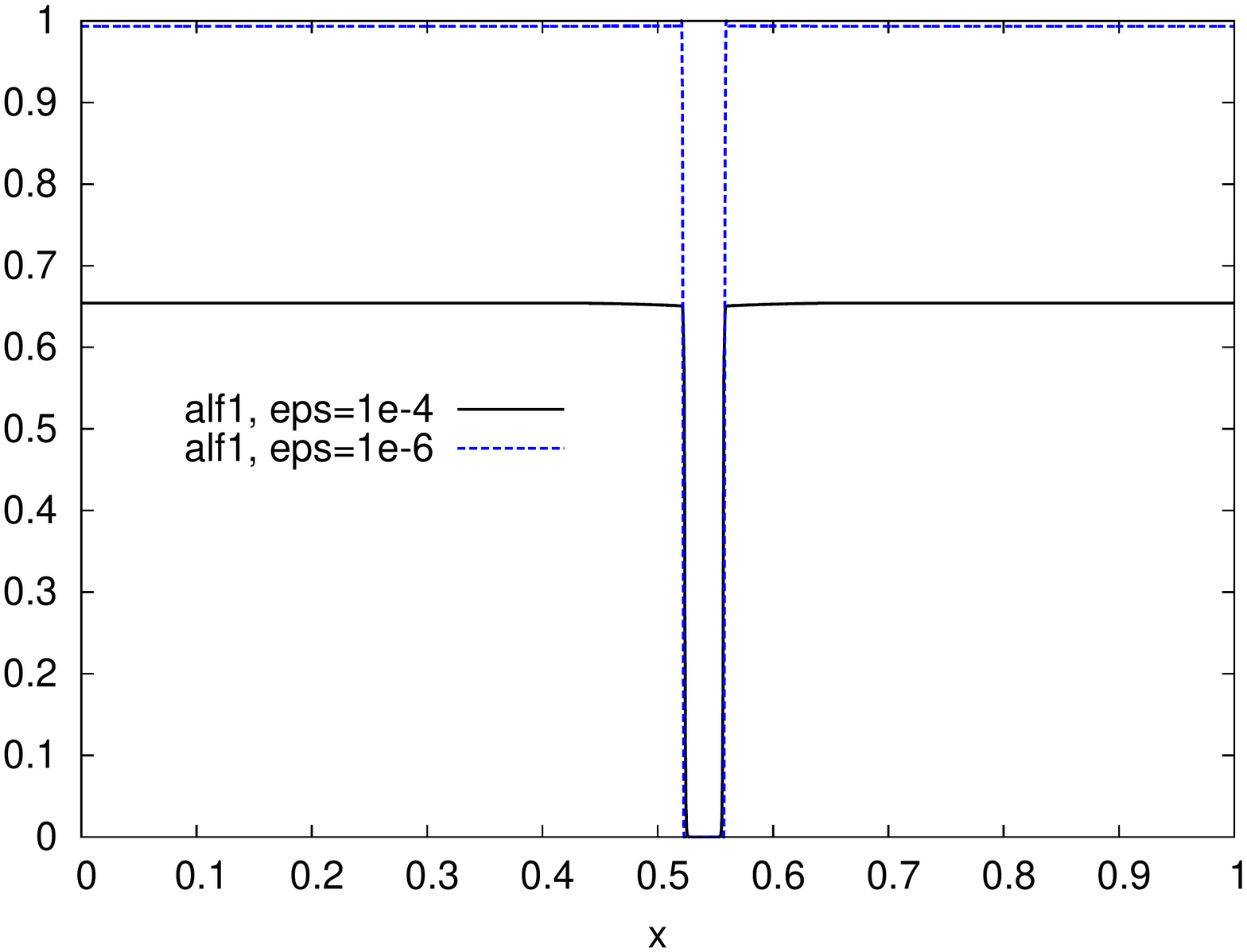}
  \includegraphics[width=6cm]{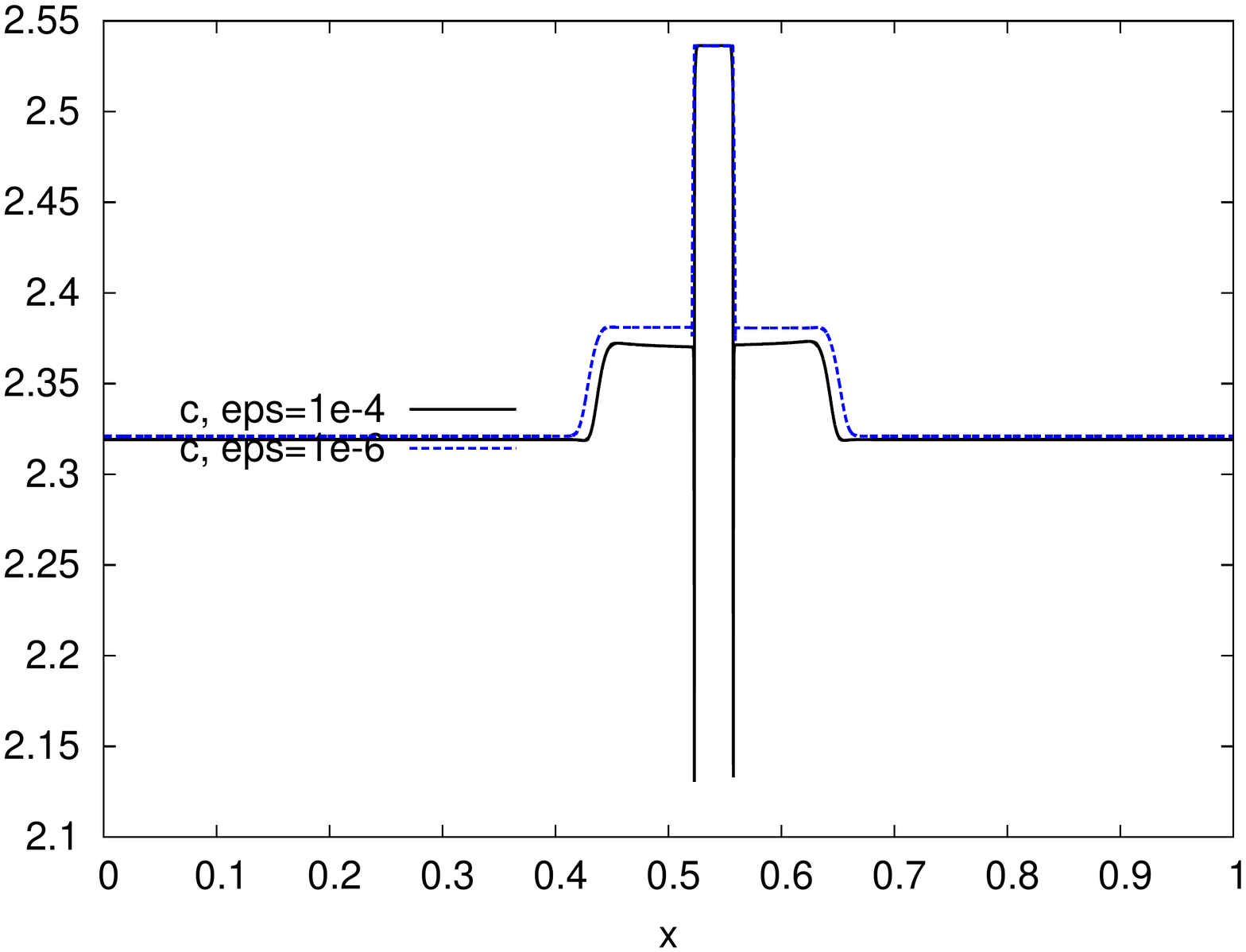}
  \includegraphics[width=6cm]{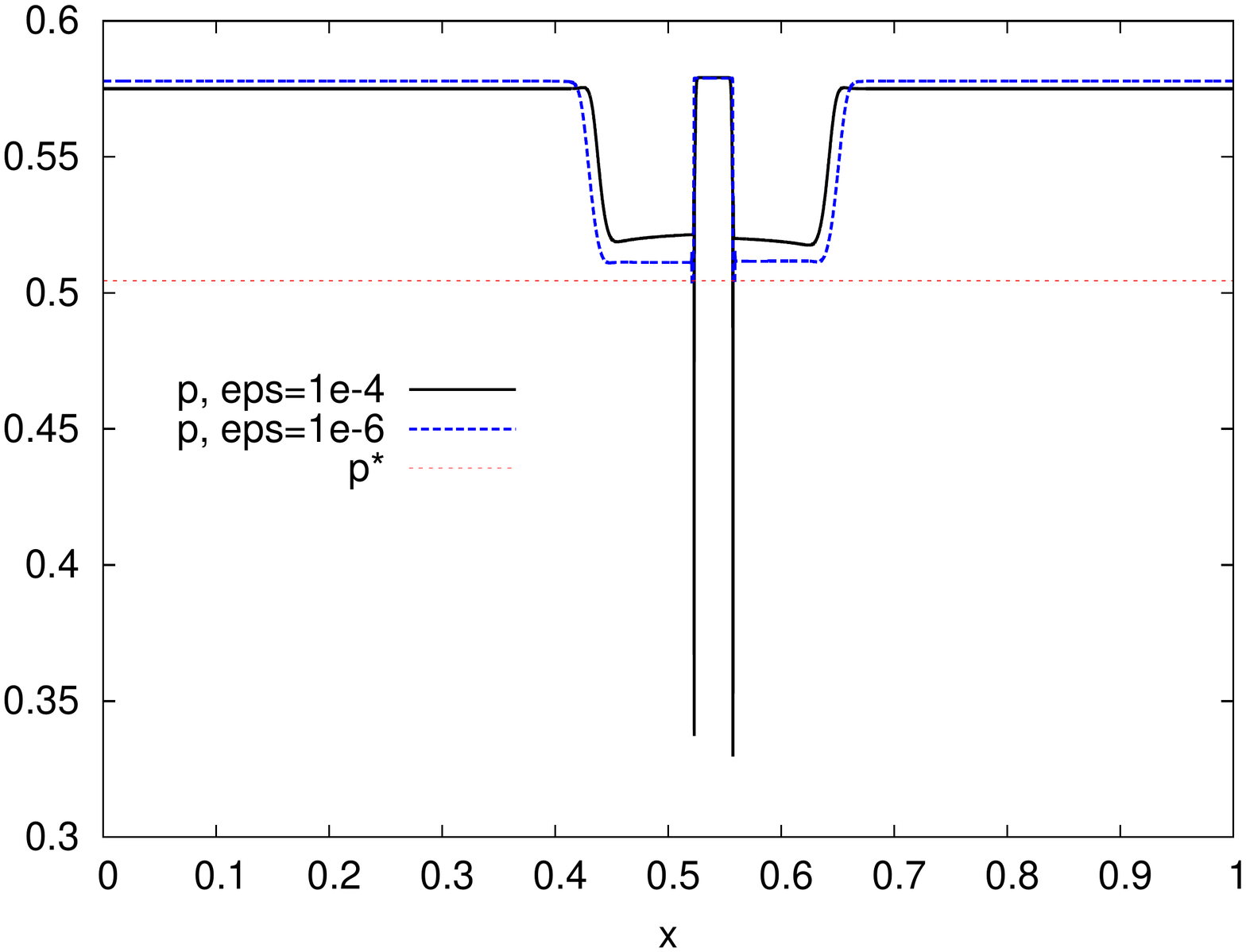}
  \includegraphics[width=6cm]{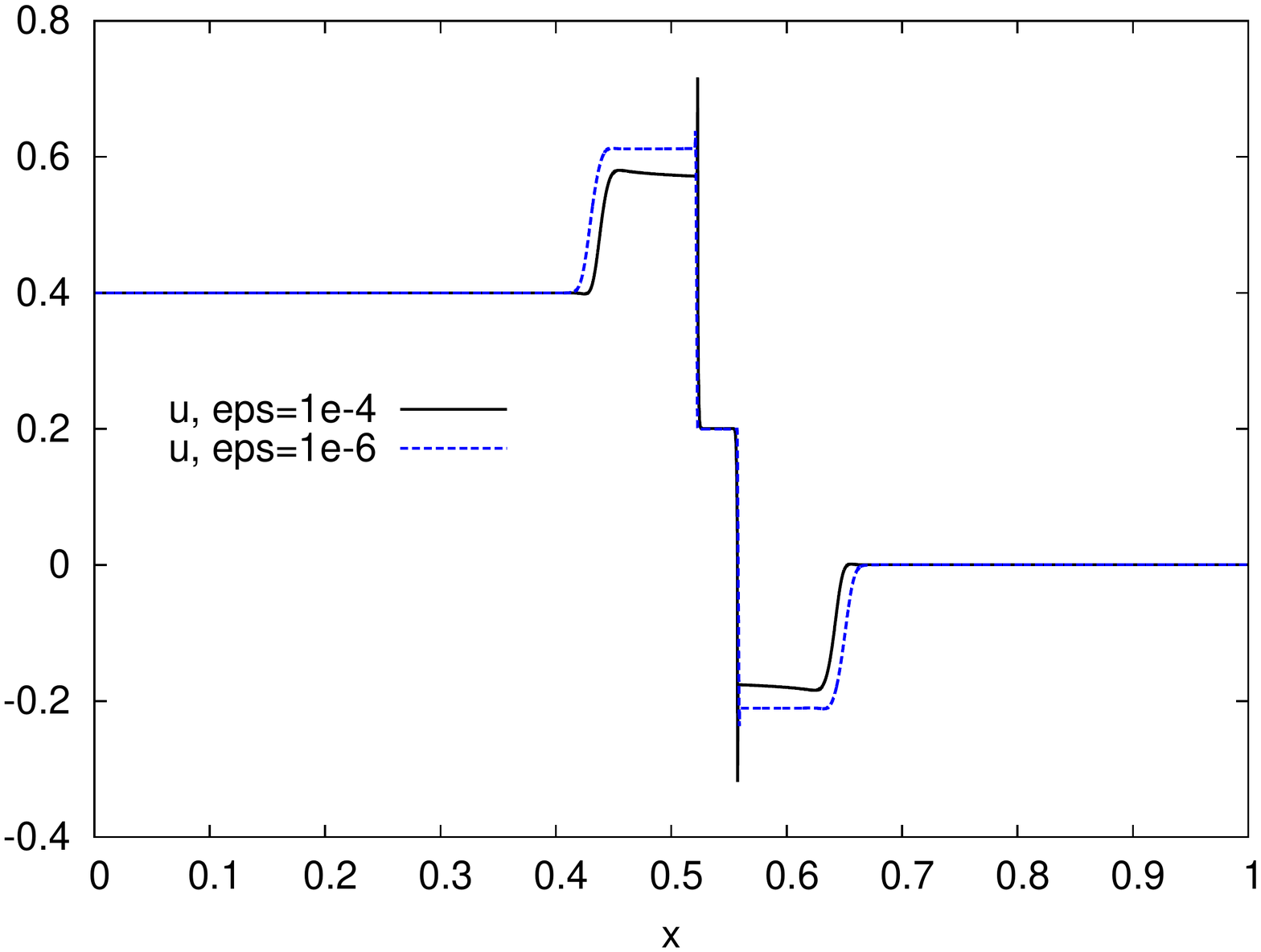}
\caption{Perturbation of a metastable state at t=0.2s. From top left to bottom right: density $\rho$,
    densities $rho_1$ and $\rho_2$, volume fraction $\alpha_1$, speed
    of sound $c$, pressure $p$ and
    velocity $u$.}
\label{fig:meta2}
\end{center}
\end{figure}

\begin{figure}
\begin{center}
  \includegraphics[width=6cm]{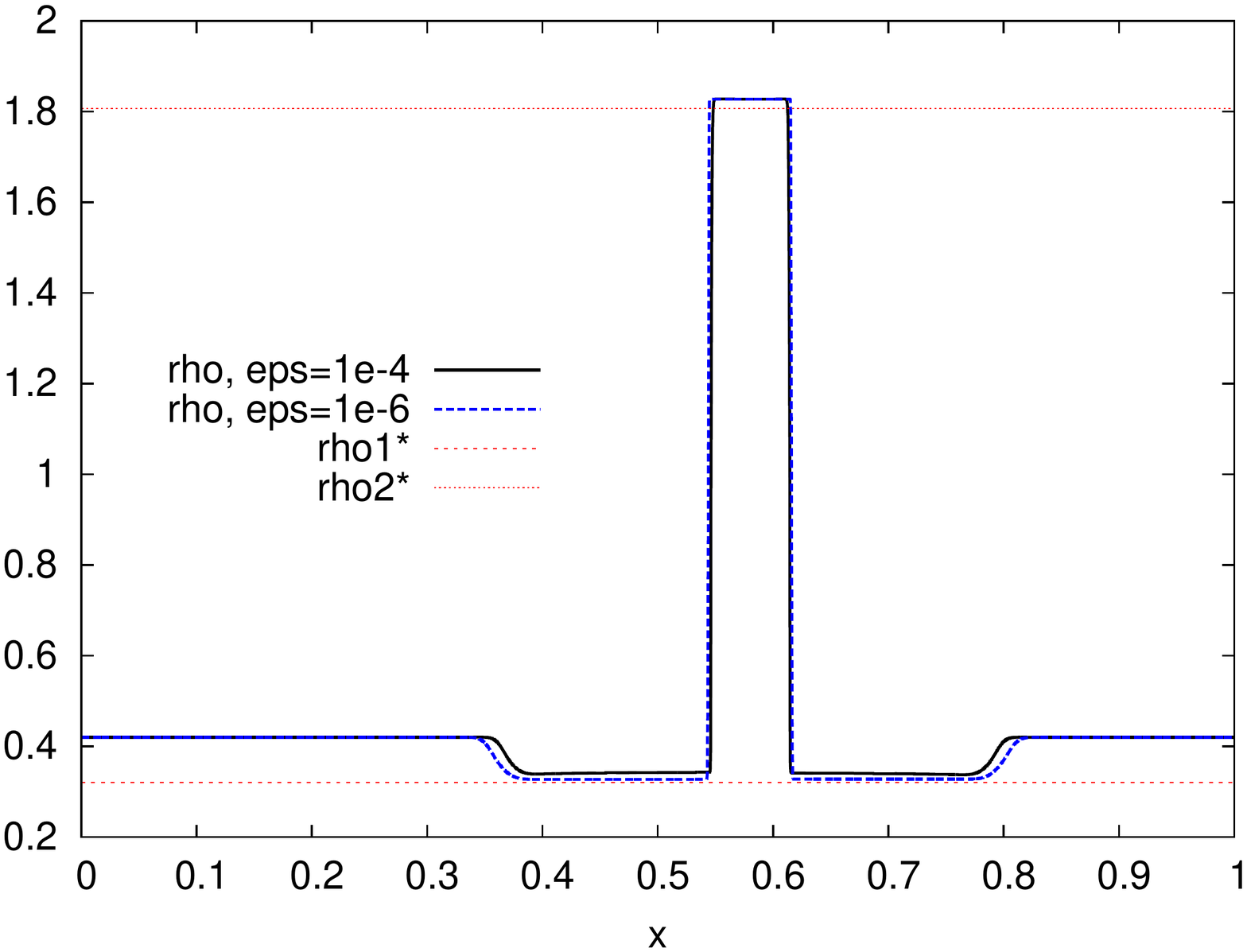}
  \includegraphics[width=6cm]{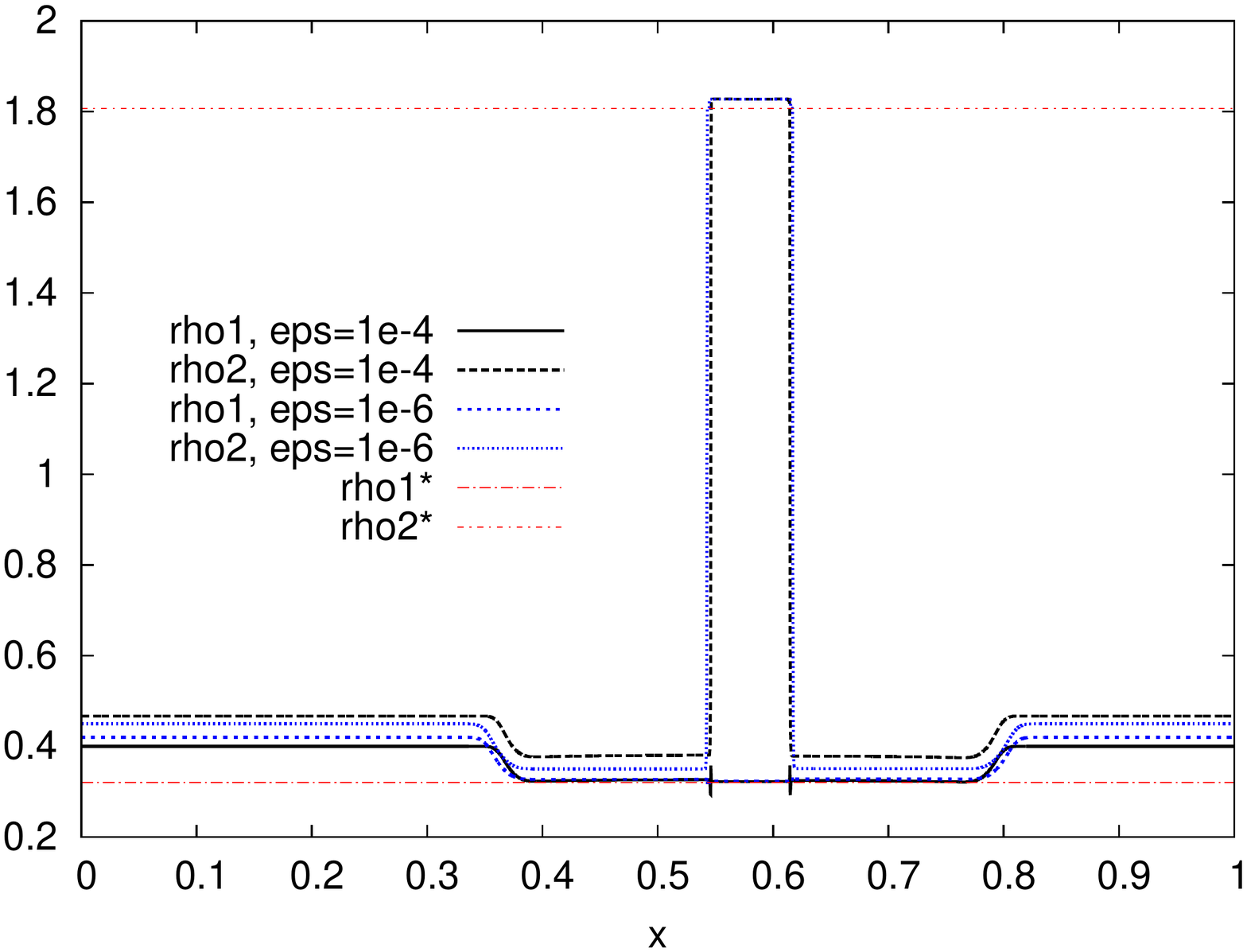}
  \includegraphics[width=6cm]{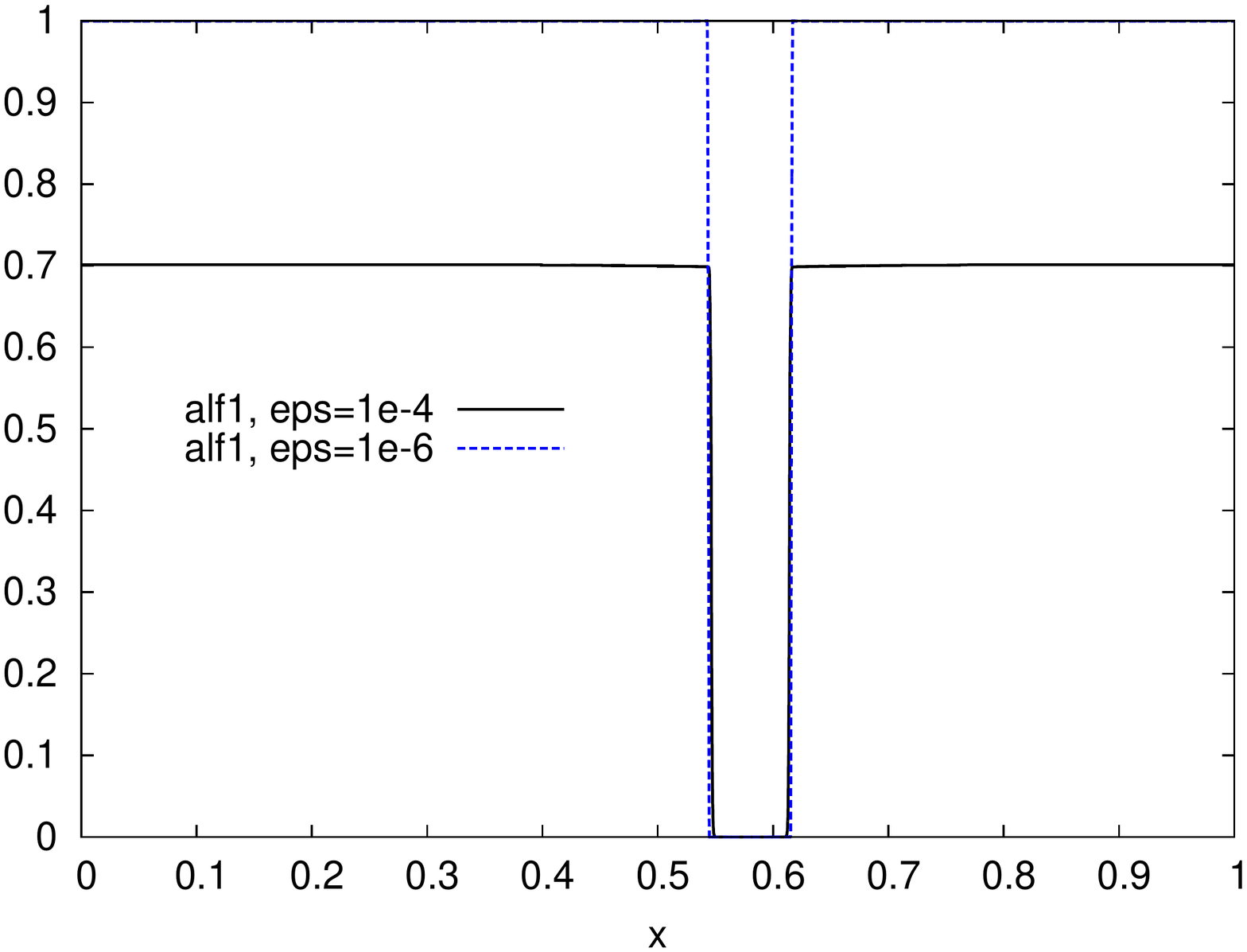}
  \includegraphics[width=6cm]{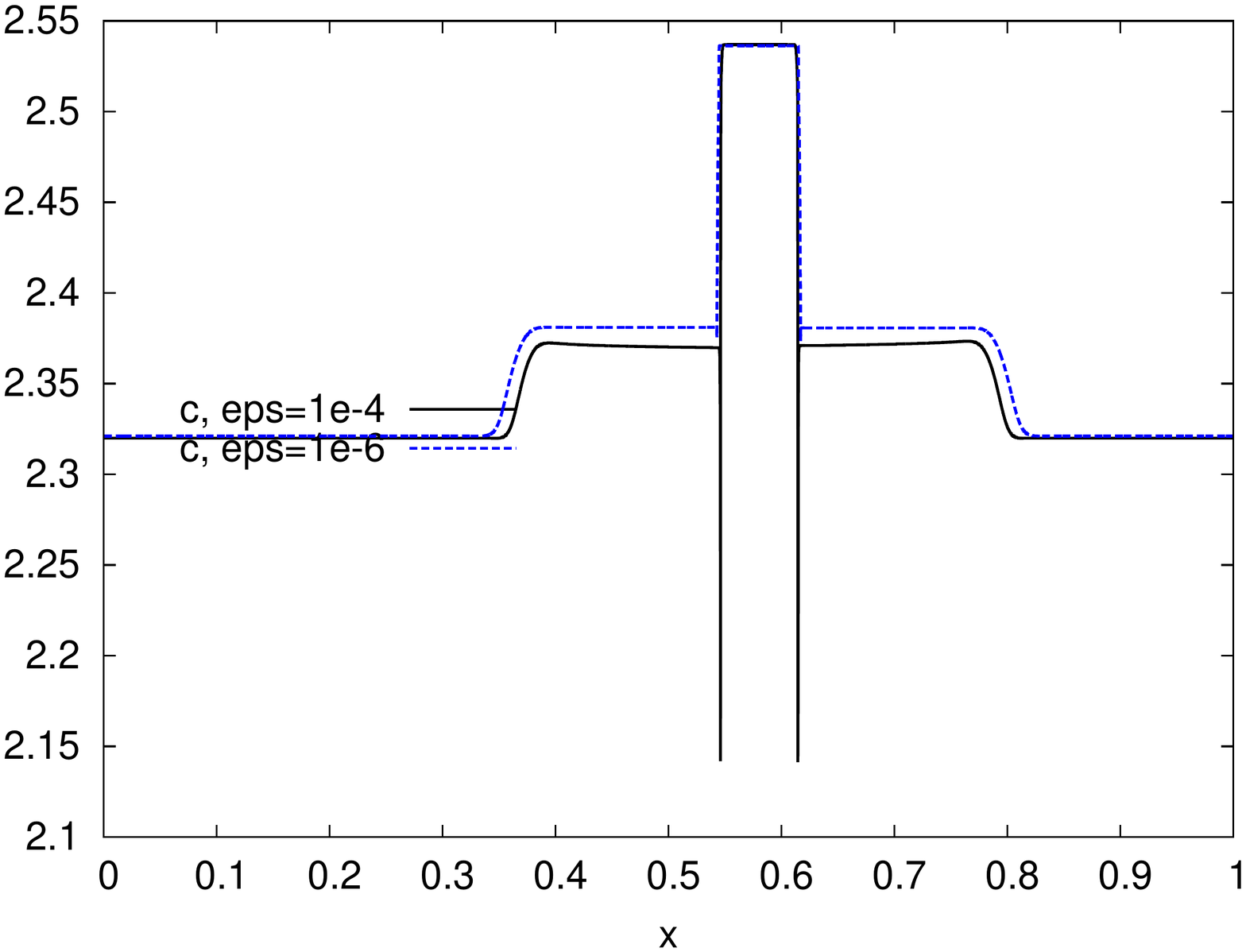}
  \includegraphics[width=6cm]{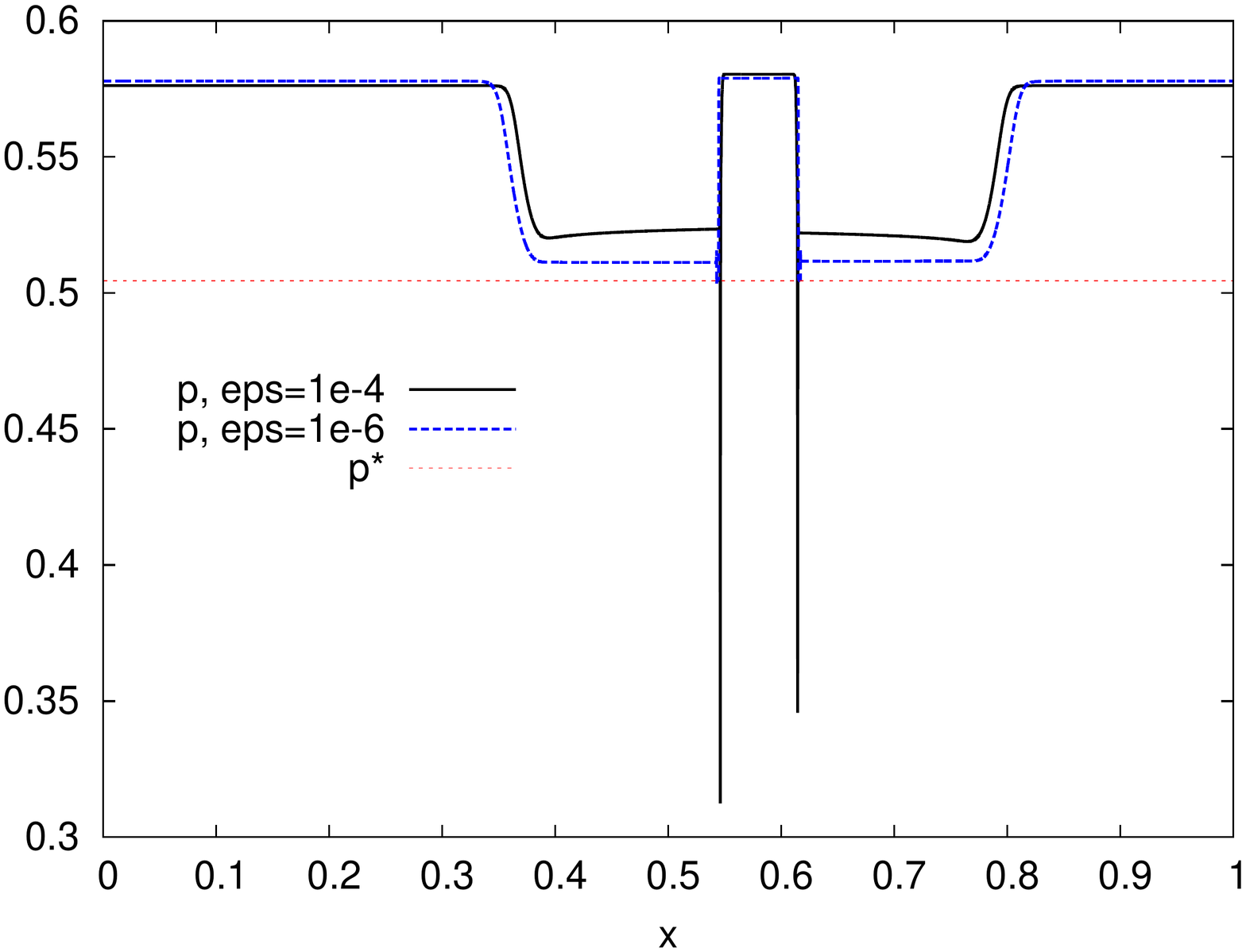}
  \includegraphics[width=6cm]{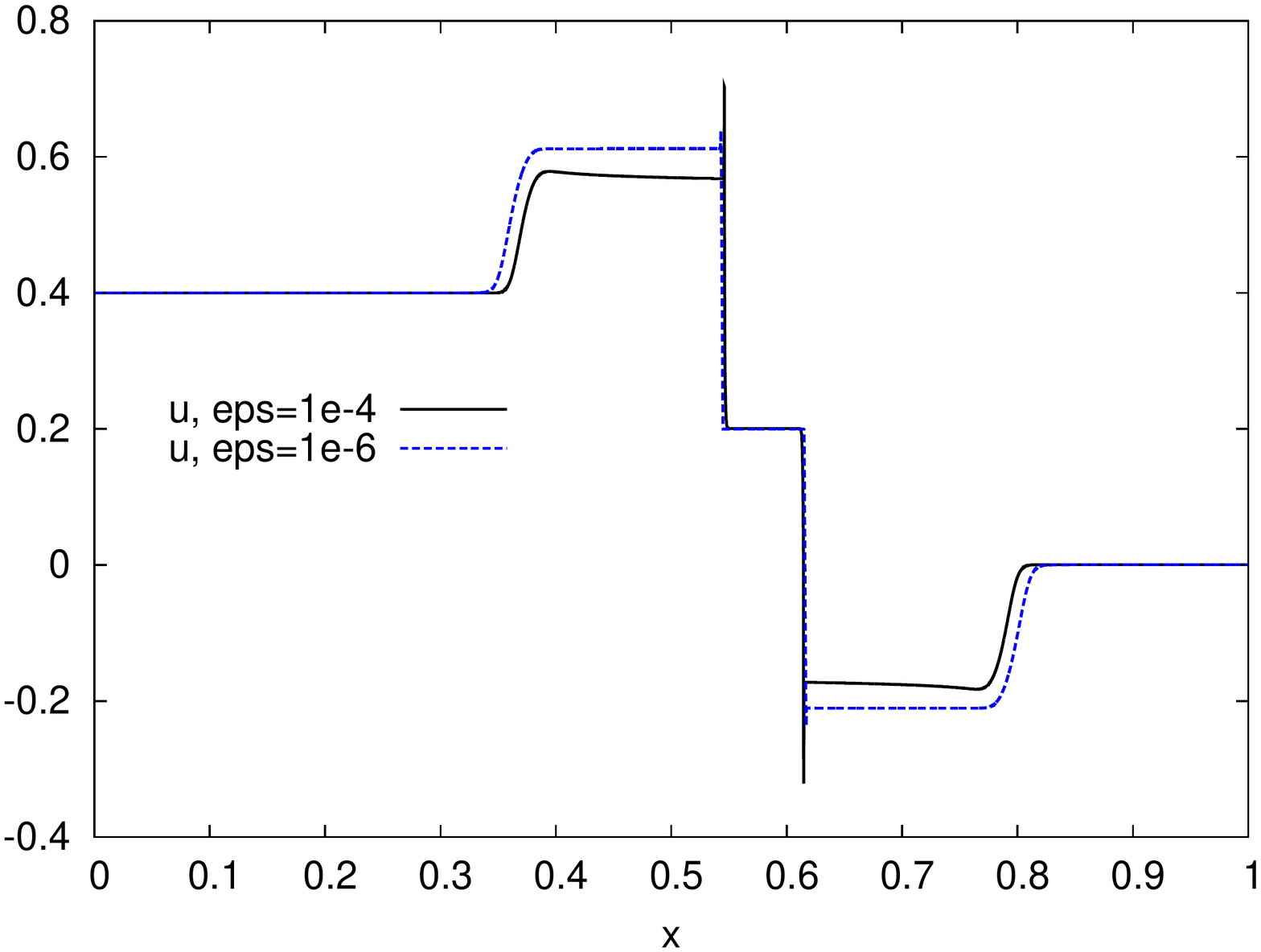}
\caption{Perturbation of a metastable state at t=0.4s. From top left to bottom right: density $\rho$,
    densities $rho_1$ and $\rho_2$, volume fraction $\alpha_1$, speed
    of sound $c$, pressure $p$ and
    velocity $u$.}
\label{fig:meta4}
\end{center}
\end{figure}

\section{Conclusion}
\label{sec:conclusion}
The core of this work is the formalization in terms of a dynamical system
of the thermodynamics of phase transition, using the van der Waals EoS. It leads to a mathematical 
characterization of metastable states, compared to stable coexistent two-phase
states. The dynamical behaviour of the solutions is crucial here, and seems to preclude the use
of instantaneous exchange kinetic. When coupled to a simple hydrodynamic model, namely the isothermal Euler
equations, it evidences abilities to cope with metastable states as well as
bubble or droplet generation. 
This preliminary study gives rise to a wide bunch of open questions and problems, and we believe that 
the methodology can be used in a much larger context. 

First, in the same isothermal context, the construction of the dynamical system (the right-hand side in the extended Euler equations)
can be addressed. We deliberately used a simple and readable function, which possibly could be improved. In any case, the behaviour
of the coexistence zone around the interface has to be 
investigated in more details, as well as the role of $\varepsilon$. 

Next, an obvious mandatory issue is the numerical
treatment of the coupled system. We have chosen here the simplest numerical strategy that allowed us to illustrate our purpose. 
 The explicit treatment of the stiff relaxation term enforces tough constraints on the time step, and prevents the simulation
of more realistic metastable cases.

Finally, we attend to
include temperature dependance to obtain a fully heat, mass and
mechanical transfer model in order to compare our results to those
of \cite{saurel08} and \cite{Zein10}.

\bibliographystyle{plain}
\bibliography{meta}

\begin{thebibliography}{10}

\bibitem{knowles91}
R.~Abeyaratne and J.~K. Knowles.
\newblock Kinetic relations and the propagation of phase boundaries in solids.
\newblock {\em Arch. Rational Mech. Anal.}, 114(2):119--154, 1991.

\bibitem{faccanoni07}
G.~Allaire, G.~Faccanoni, and S.~Kokh.
\newblock A strictly hyperbolic equilibrium phase transition model.
\newblock {\em C. R. Math. Acad. Sci. Paris}, 344(2):135--140, 2007.

\bibitem{chalons09}
A.~Ambroso, C.~Chalons, F.~Coquel, and T.~Gali{\'e}.
\newblock Relaxation and numerical approximation of a two-fluid two-pressure
  diphasic model.
\newblock {\em M2AN Math. Model. Numer. Anal.}, 43(6):1063--1097, 2009.

\bibitem{BN86}
M.~R. Baer and J.~W. Nunziato.
\newblock A two phase mixture theory for the deflagration to detonation (ddt)
  transition in reactive granular materials.
\newblock {\em Int. J. Multiphase Flow}, 12(6):861--889, 1986.

\bibitem{BH05}
T.~Barberon and P.~Helluy.
\newblock Finite volume simulation of cavitating flows.
\newblock {\em Computers and Fluids}, 34(7):832--858, 2005.

\bibitem{bedjaoui05}
N.~Bedjaoui, C.~Chalons, F.~Coquel, and P.~G. LeFloch.
\newblock Non-monotonic traveling waves in van der {W}aals fluids.
\newblock {\em Anal. Appl. (Singap.)}, 3(4):419--446, 2005.

\bibitem{IMEX09}
S.~Boscarino and G.~Russo.
\newblock On a class of uniformly accurate {IMEX} {R}unge-{K}utta schemes and
  applications to hyperbolic systems with relaxation.
\newblock {\em SIAM J. Sci. Comput.}, 31(3):1926--1945, 2009.

\bibitem{callen85}
H.~B. Callen.
\newblock {\em Thermodynamics and an introduction to thermostatistics, second
  edition}.
\newblock Wiley and Sons, 1985.

\bibitem{Chueh77}
K.~Chueh, C.~Conley, and J.~Smoller.
\newblock Positively invariant regions for systems of nonlinear diffusion
  equations.
\newblock {\em Indiana Univ. Math. J.}, 26(2):373--392, 1977.

\bibitem{corlifan04}
A.~Corli and H.~Fan.
\newblock The {R}iemann problem for reversible reactive flows with
  metastability.
\newblock {\em SIAM J. Appl. Math.}, 65(2):426--457 (electronic), 2004/05.

\bibitem{Delhaye98}
J.-M. Delhaye, M.~Giot, L.~Mahias, P.~Raymond, and C.~R{\'e}nault.
\newblock {\em Thermohydraulique des réacteurs}.
\newblock EDP Sciences, 1998.

\bibitem{Delhaye81}
J.-M. Delhaye, M.~Giot, and M.~I. Riethmuller.
\newblock {\em Thermohydraulics of two-phase systems for industrial design and
  nuclear engineering}.
\newblock Hemisphere Publishing Corporation, 1981.

\bibitem{drew83}
D.A. Drew.
\newblock Mathematical modeling of two-phase flow.
\newblock {\em Ann. Rev. Fluid Mech.}, 15:261--291, 1983.

\bibitem{faccanoni12}
G.~Faccanoni, S.~Kokh, and G.~Allaire.
\newblock Modelling and simulation of liquid-vapor phase transition in
  compressible flows based on thermodynamical equilibrium.
\newblock {\em ESAIM Math. Model. Numer. Anal.}, 46(5):1029--1054, 2012.

\bibitem{fanlin13}
H.~Fan and X.-B. Lin.
\newblock A dynamical systems approach to traveling wave solutions for
  liquid/vapor phase transition.
\newblock In {\em Infinite dimensional dynamical systems}, volume~64 of {\em
  Fields Inst. Commun.}, pages 101--117. Springer, New York, 2013.

\bibitem{Gibbs}
J.W. Gibbs.
\newblock {\em The Collected Works of J. Willard Gibbs, vol I: Thermodynamics}.
\newblock Yale University Press, 1948.

\bibitem{HM10}
P.~Helluy and H.~Mathis.
\newblock Pressure laws and fast {L}egendre transform.
\newblock {\em Math. Models Methods Appl. Sci.}, 21(4):745--775, 2011.

\bibitem{HS06}
P.~Helluy and N.~Seguin.
\newblock Relaxation models of phase transition flows.
\newblock {\em M2AN Math. Model. Numer. Anal.}, 40(2):331--352, 2006.

\bibitem{hiriart04}
J.-B. Hiriart-Urruty and C.~Lemar{\'e}chal.
\newblock {\em Fundamentals of convex analysis}.
\newblock Grundlehren Text Editions. Springer-Verlag, Berlin, 2001.

\bibitem{Hoff85}
D.~Hoff.
\newblock Invariant regions for systems of conservation laws.
\newblock {\em Trans. Amer. Math. Soc.}, 289(2):591--610, 1985.

\bibitem{imre10}
A.R. Imre, I.F. Barna, G.~Ézsöl, G.~Házi, and T.~Kraska.
\newblock Theoretical study of flashing and water hammer in a supercritical
  water cycle during pressure drop.
\newblock {\em Nuclear Engineering and Design}, 240(6):1569 -- 1574, 2010.

\bibitem{zeiler12}
F.~Jaegle, C.~Rohde, and C.~Zeiler.
\newblock A multiscale method for compressible liquid-vapor flow with surface
  tension.
\newblock In {\em C{EMRACS}'11: {M}ultiscale coupling of complex models in
  scientific computing}, volume~38 of {\em ESAIM Proc.}, pages 387--408. EDP
  Sci., Les Ulis, 2012.

\bibitem{James97}
F.~James, M.~Sep{\'u}lveda, and P.~Valentin.
\newblock Statistical thermodynamics models for multicomponent isothermal
  diphasic equilibria.
\newblock {\em Math. Models Methods Appl. Sci.}, 7(1):1--29, 1997.

\bibitem{jaouen01}
S.~Jaouen.
\newblock {\em Etude math{\'e}matique et num{\'e}rique de stabilit{\'e} pour
  des mod{\`e}les hydrodynamiques avec transition de phase}.
\newblock PhD thesis, Universit� Paris VI, November 2001.

\bibitem{Landau}
L.~Landau and E.~Lifschitz.
\newblock {\em A Course of theoretical physics, vol 5, Statistical Physics}.
\newblock Pergamon Press, 1959-1969.

\bibitem{LMSN14}
S.~Le~Martelot, R.~Saurel, and B.~Nkonga.
\newblock Towards the direct numerical simulation of nucleate boiling flows.
\newblock {\em Int. J. Multiph. Flow}, 66:62--78, 2014.

\bibitem{LeFloch10}
P.~G. LeFloch.
\newblock Kinetic relations for undercompressive shock waves. {P}hysical,
  mathematical, and numerical issues.
\newblock In {\em Nonlinear partial differential equations and hyperbolic wave
  phenomena}, volume 526 of {\em Contemp. Math.}, pages 237--272. Amer. Math.
  Soc., Providence, RI, 2010.

\bibitem{mathis10}
H.~Mathis.
\newblock {\em \'{E}tude th\'eorique et num\'erique des \'ecoulements avec
  transition de phase}.
\newblock PhD thesis, Universit\'e de Strasbourg, 2010.

\bibitem{guillard05}
A.~Murrone and H.~Guillard.
\newblock A five equation reduced model for compressible two phase flow
  problems.
\newblock {\em J. Comput. Phys.}, 202(2):664--698, 2005.

\bibitem{pelanti14}
M.~Pelanti and K.-M. Shyue.
\newblock A mixture-energy-consistent six-equation two-phase numerical model
  for fluids with interfaces, cavitation and evaporation waves.
\newblock {\em J. Comput. Phys.}, 259:331--357, 2014.

\bibitem{rockafellar}
R.T. Rockafellar.
\newblock {\em Convex analysis}.
\newblock Princeton Landmarks in Mathematics. Princeton University Press,
  Princeton, NJ, 1997.
\newblock Reprint of the 1970 original, Princeton Paperbacks.

\bibitem{saurel08}
R.~Saurel, F.~Petitpas, and R.~Abgrall.
\newblock Modelling phase transition in metastable liquids: application to
  cavitating and flashing flows.
\newblock {\em J.Fluid Mech.}, 607:313--350, 2008.

\bibitem{slemrod83}
M.~Slemrod.
\newblock Admissibility criteria for propagating phase boundaries in a van der
  {W}aals fluid.
\newblock {\em Arch. Rational Mech. Anal.}, 81(4):301--315, 1983.

\bibitem{slemrod84}
M.~Slemrod.
\newblock Dynamic phase transitions in a van der {W}aals fluid.
\newblock {\em J. Differential Equations}, 52(1):1--23, 1984.

\bibitem{Zein10}
A.~Zein, M.~Hantke, and G.~Warnecke.
\newblock Modeling phase transition for compressible two-phase flows applied to
  metastable liquids.
\newblock {\em J. Comp. Phys.}, 229:1964--2998, 2010.

\end{thebibliography}

\end{document}